\documentclass{lmcs}
\usepackage[utf8]{inputenc}
\pdfoutput=1

% LMCS Layouting Macros
\usepackage{lastpage}
\lmcsdoi{19}{1}{21}
\lmcsheading{}{\pageref{LastPage}}{}{}%
{Nov.~03,~2021}{Mar.~24,~2023}{}

\usepackage{hyperref}
\usepackage{xspace}
\usepackage{stmaryrd}
\usepackage{mathpartir}
\usepackage{amssymb}
\usepackage{tikz}
\usetikzlibrary{positioning}
\usepackage{listings}

\renewcommand{\l}{\ensuremath{\lambda}\xspace}
\newcommand{\lterms}{\ensuremath{\Lambda}\xspace}
\newcommand{\lc}{\ensuremath{\l_{\textsf{c}}}\xspace}
\newcommand{\lcterms}{\ensuremath{\lterms_{\textsf{c}}}\xspace}
\newcommand{\lscbn}{\ensuremath{\l_{\textsf{sn}}}\xspace}
\newcommand{\lwcbn}{\ensuremath{\l_{\textsf{wn}}}\xspace}
\newcommand{\lscbnp}{\ensuremath{\l_{\textsf{sn+}}}\xspace}
\newcommand{\app}{\;}
\newcommand{\letin}{\textsf{let}\xspace}
\newcommand{\esub}[2]{[#1\backslash #2]}
\newcommand{\msub}[2]{\{#1\backslash #2\}}
\newcommand{\re}{\ensuremath{\to}\xspace}
\newcommand{\reb}{\ensuremath{\re_{\beta}}\xspace}
\newcommand{\rec}{\ensuremath{\re_{\textsf{c}}}\xspace}
\newcommand{\rscbn}{\textsf{sn}\xspace}
\newcommand{\rscbnp}{\textsf{sn+}\xspace}
\newcommand{\rex}[1]{\ensuremath{\xrightarrow{#1}}\xspace}
\newcommand{\rescbn}[3]{\ensuremath{\xrightarrow{#1,#2,#3}_{\rscbn}}\xspace}
\newcommand{\rescbnp}[3]{\ensuremath{\xrightarrow{#1,#2,#3}_{\rscbnp}}\xspace}
\newcommand{\literescbn}[1]{\ensuremath{\xrightarrow{\textsf{#1}}_{\rscbn}}\xspace}
\newcommand{\literescbnp}[1]{\ensuremath{\xrightarrow{\textsf{#1}}_{\rscbnp}}\xspace}
\newcommand{\red}{\ensuremath{\to_{\rscbn}}\xspace}

\newcommand{\rdb}{\textsf{dB}\xspace}
\newcommand{\redb}{\ensuremath{\re_{\rdb}}\xspace}
\newcommand{\rlsv}{\textsf{lsv}\xspace}
\newcommand{\relsv}{\ensuremath{\re_{\rlsv}}\xspace}
\newcommand{\rs}[2]{\ensuremath{\textsf{sub}_{#1\smallsetminus #2}}\xspace}
\newcommand{\ri}[1]{\ensuremath{\textsf{id}_{#1}}\xspace}
\newcommand{\auxdb}{\ensuremath{\re_{\rdb}}\xspace}
\newcommand{\wauxlsv}{\ensuremath{\re_{\rlsv}}\xspace}
\newcommand{\auxlsv}[2]{\ensuremath{\xrightarrow{#1, #2}_{\rlsv}}\xspace}
\newcommand{\rgc}{\textsf{gc}\xspace}
\newcommand{\regc}{\ensuremath{\re_{\rgc}}\xspace}
\newcommand{\cx}{\ensuremath{\mathcal{C}}\xspace}
\newcommand{\ctx}[1][\bullet]{\ensuremath{\cx(#1)}\xspace}
\newcommand{\ecx}{\ensuremath{\mathcal{E}}\xspace}
\newcommand{\ectx}[1][\bullet]{\ensuremath{\ecx(#1)}\xspace}
\newcommand{\lctx}{\ensuremath{\mathcal{L}}\xspace}
\newcommand{\plug}[1]{\!|#1|\!}
\newcommand{\fplug}[1]{(\plug{#1})}
\newcommand{\fv}{\textsf{fv}}

\newcommand{\struct}[1][\varphi]{\ensuremath{\mathcal{S}_{#1}}\xspace}
\newcommand{\nf}[1][\varphi]{\ensuremath{\mathcal{N}_{#1}}\xspace}
\newcommand{\set}[1]{\ensuremath{\{#1\}}\xspace}
\newcommand{\mset}[1]{\ensuremath{\{\!\!\{#1\}\!\!\}}\xspace}
\newcommand{\mvar}{\ensuremath{\mathcal{M}}\xspace}
\newcommand{\hole}{\ensuremath{\bullet}\xspace}
\newcommand{\unfold}{^\star}
\newcommand{\revert}{^\dagger}
\newcommand{\arr}{\ensuremath{\rightarrow}\xspace}
\newcommand{\jdg}[3]{\ensuremath{#1\vdash #2:#3}\xspace}
\newcommand{\deriv}[4][\Phi]{\ensuremath{#1\vartriangleright\jdg{#2}{#3}{#4}}\xspace}
\newcommand{\ajdg}[5]{\ensuremath{#1\vdash_{#2}^{#3} #4:#5}\xspace}
\newcommand{\aderiv}[6][\Phi]{\ensuremath{#1\vartriangleright\ajdg{#2}{#3}{#4}{#5}{#6}}\xspace}
\newcommand{\tpos}{\ensuremath{\mathcal{T}_+}\xspace}
\newcommand{\tneg}{\ensuremath{\mathcal{T}_-}\xspace}
\newcommand{\dom}{\textsf{dom}\xspace}

\newcommand{\fzt}{\textsf{fzt}\xspace}

\newcommand{\ruleTopLevel}{\textsc{top-level}\xspace}

\newcommand{\ruleVarES}{\textsc{var-\struct[]}\xspace}
\newcommand{\ruleVarAbs}{\textsc{var-abs}\xspace}
\newcommand{\ruleVarAbsTop}{\textsc{var-\l-Frozen}\xspace}
\newcommand{\ruleVarAbsBot}{\textsc{var-\l-NonFrozen}\xspace}

\newcommand{\ruleStructPhi}{\textsc{s-phi}\xspace}
\newcommand{\ruleStructOmega}{\textsc{s-omega}\xspace}
\newcommand{\ruleAppLeft}{\textsc{app-left}\xspace}
\newcommand{\ruleAppRight}{\textsc{app-right}\xspace}
\newcommand{\ruleApp}{\textsc{app}\xspace}
\newcommand{\ruleAppRightStruct}{\textsc{app-right-\struct[]}\xspace}
\newcommand{\ruleAbsTop}{\textsc{abs-$\top$}\xspace}
\newcommand{\ruleAbsBot}{\textsc{abs-$\bot$}\xspace}

\newcommand{\ruleAbsCons}{\textsc{abs-cons}\xspace}
\newcommand{\ruleAbsNil}{\textsc{abs-nil}\xspace}
\newcommand{\ruleEsLeft}{\textsc{es-left}\xspace}
\newcommand{\ruleEsLeftFrozen}{\textsc{es-left-\struct[]}\xspace}
\newcommand{\ruleEsLeftStruct}{\textsc{es-left-\struct[]}\xspace}
\newcommand{\ruleId}{\textsc{id}\xspace}
\newcommand{\ruleEs}{\textsc{es}\xspace}
\newcommand{\ruleEsRight}{\textsc{es-right}\xspace}
\newcommand{\ruleSub}{\textsc{sub}\xspace}
\newcommand{\ruleDb}{\textsc{dB}\xspace}
\newcommand{\ruleLsv}{\textsc{lsv}\xspace}
\newcommand{\ruleAuxDb}{\textsc{dB-base}\xspace}
\newcommand{\ruleAuxDbSigma}{\textsc{dB-$\sigma$}\xspace}
\newcommand{\ruleAuxLsv}{\textsc{lsv-base}\xspace}
\newcommand{\ruleAuxLsvSigma}{\textsc{lsv-$\sigma$}\xspace}
\newcommand{\ruleAuxLsvSigmaFrozen}{\textsc{lsv-$\sigma$-\struct[]}\xspace}

\newcommand{\tyruleVar}{\textsc{ty-var}\xspace}
\newcommand{\tyruleAbs}{\textsc{ty-abs}\xspace}
\newcommand{\tyruleAbsBot}{\textsc{ty-abs-$\bot$}\xspace}
\newcommand{\tyruleAbsTop}{\textsc{ty-abs-$\top$}\xspace}
\newcommand{\tyruleApp}{\textsc{ty-app}\xspace}
\newcommand{\tyruleAppStruct}{\textsc{ty-app-\struct[]}\xspace}
\newcommand{\tyruleEs}{\textsc{ty-es}\xspace}
\newcommand{\tyruleEsStruct}{\textsc{ty-es-\struct[]}\xspace}

\newcommand{\ra}{\rightarrow}
\newcommand{\ral}{\rightarrow_{\struct[\omega]}}
\newcommand{\ralf}{\rightarrow_{\struct[\varphi]}}

\newcommand\V{\mathcal{N}}

\newcommand{\lnf}[1][\varphi,\omega,\mu]{\ensuremath{\mathcal{N}_{#1}}\xspace}

\newcommand\Vewt{\V_{\varphi,\omega,\mu}}
\newcommand\Vewtt{\V_{\varphi,\omega,\top}}
\newcommand\Vewtb{\V_{\varphi,\omega,\bot}}
\newcommand\Vexwt{\V_{\varphi\cup\{x\},\omega,\mu}}
\newcommand\Vewxt{\V_{\varphi,\omega\cup\{x\},\mu}}

\newcommand\Vewxtb{\V_{\varphi,\omega\cup\{x\},\bot}}
\newcommand\Vexwtt{\V_{\varphi\cup\{x\},\omega,\top}}

\title{A strong call-by-need calculus}

\author[Th.~Balabonski]{Thibaut Balabonski}[a]
\author[A.~Lanco]{Antoine Lanco}[b]
\author[G.~Melquiond]{Guillaume Melquiond\lmcsorcid{0000-0002-6697-1809}}[b]

\address{Université Paris-Saclay, CNRS, ENS Paris-Saclay, LMF, Gif-sur-Yvette, 91190, France}
\email{thibaut.balabonski@lri.fr}

\address{Université Paris-Saclay, CNRS, ENS Paris-Saclay, Inria, LMF, Gif-sur-Yvette, 91190, France}
\email{antoine.lanco@lri.fr}
\email{guillaume.melquiond@inria.fr}

\keywords{strong reduction, call-by-need, evaluation strategy, normalization}

\lstset{basicstyle=\ttfamily}
\lstdefinelanguage{Abella}{
  keywords = {Define, Theorem, kind, type},
  otherkeywords = {forall, nabla, exists},
}

\begin{document}

\begin{abstract}
\noindent
We present a \emph{call-by-need} $\lambda$-calculus that enables
\emph{strong} reduction (that is, reduction inside the body of
abstractions) and guarantees that arguments are only evaluated if needed
and at most once. This calculus uses explicit substitutions and subsumes
the existing ``strong call-by-need'' strategy, but allows for more reduction
sequences, and often shorter ones, while preserving the
\emph{neededness}.

The calculus is shown to be \emph{normalizing} in a strong sense:
Whenever a $\lambda$-term $t$ admits a normal form $n$ in the
$\lambda$-calculus, then \emph{any} reduction sequence from $t$ in the
calculus eventually reaches a representative of the normal form $n$.
We also exhibit a restriction of this calculus that has the
\emph{diamond} property and that only performs reduction sequences of
minimal length, which makes it systematically better than the existing
strategy.
We have used the Abella proof assistant to formalize part of this
calculus, and discuss how this experiment affected its design.
In particular, it led us to derive a new description of call-by-need
reduction based on inductive rules.
\end{abstract}

\maketitle

\renewcommand{\l}{\ensuremath{\lambda}\xspace}

\section{Introduction}
\label{sec:call-by-need-intro}

Lambda-calculus is seen as the standard model of computation in functional
programming languages, once equipped with an
\emph{evaluation strategy}~\cite{Plotkin-1975}.
The most famous evaluation strategies are \emph{call-by-value}, which eagerly
evaluates the arguments of a function before resolving the function call,
\emph{call-by-name}, where the arguments of a function are evaluated when
they are needed, and \emph{call-by-need}~\cite{Wadsworth, CallByNeed-95},
which extends call-by-name with a memoization or sharing mechanism to
remember the value of an argument that has already been evaluated.

The strength of call-by-name is that it only evaluates terms whose value is
effectively needed, at the (possibly huge) cost of evaluating some terms
several times.
Conversely, the strength \emph{and} weakness of call-by-value (by far the most
used strategy in actual programming languages) is that it evaluates each
function argument exactly once, even when its value is not actually needed
and when its evaluation does not terminate.
At the cost of memoization, call-by-need combines the benefits of call-by-value
and call-by-name, by only evaluating needed arguments and evaluating them only
once.

A common point of these strategies is that they are concerned with
\emph{evaluation}, that is computing \emph{values}. As such they operate in
the subset of \l-calculus called \emph{weak reduction}, in which
there is no reduction inside \l-abstractions, the latter being already
considered to be values. Some applications however, such as proof assistants or
partial evaluation, require reducing inside \l-abstractions, and possibly
aiming for the actual normal form of a \l-term.

The first known abstract machine computing the normal form of a term is
due to Crégut~\cite{Cregut-1990} and implements normal-order reduction.
More recently, several lines of work have transposed the known evaluation
strategies to strong reduction strategies or abstract machines:
call-by-value~\cite{GregoireLeroy-SCBV, BBCD-SCBV, Accattoli-SCBV},
call-by-name~\cite{Accattoli-StrongMachine},
and call-by-need~\cite{SCBN-ICFP, BC-SCBN}. Some non-advertised strong
extensions of call-by-name or call-by-need can also be found in the
internals of proof assistants, notably Coq.

These strong strategies are mostly conservative over their underlying
weak strategy, and often proceed by \emph{iteratively} applying a weak
strategy to open terms.
In other words, they use a restricted form of strong reduction to enable
reduction to normal form, but do not try to take advantage of strong
reduction to obtain shorter reduction sequences. Since call-by-need
has been shown to capture optimal weak reduction~\cite{WeakOptimality},
it is known that the deliberate use of strong reduction~\cite{FullerLaziness}
is the only way of allowing shorter reduction sequences.

This paper presents a strong call-by-need calculus, which obeys the
following guidelines. First, it only reduces needed redexes, \emph{i.e.},
redexes that are necessarily reduced in any $\beta$-reduction to normal form
of the underlying \l-term. Second, it keeps a level of sharing at least equal
to that of call-by-value and call-by-need. Third, it tries to enable strong
reduction as generally as possible.
This calculus builds on the syntax and a part of the meta-theory of
\l-calculus with explicit substitutions, which we recall in
Section~\ref{sec:lambda-c}.

We strayed away from the more traditional style of presentation of weak
call-by-need reduction, based on evaluation contexts, and went for
SOS-style reduction rules~\cite{Plotkin-SOS}, which later became a lever
for subsequent contributions of this work. Since this
style of presentation is itself a contribution, we devote
Section~\ref{sec:weak-cbn} to it.

Neededness of a redex is undecidable in general, thus the first and third
guidelines above are antagonistic. One cannot reduce only needed redexes
while enabling strong reduction as generally as possible.
Consider for instance a \l-term like $(\l{x}.t)\app(\l{y}.r)$,
where $r$ is a redex. Strong reduction could allow reducing the redex $r$
before substituting the value $\l{y}.r$, which might be particularly
interesting to prevent $r$ from being reduced multiple times after substitution.
However, we certainly do not want to do that if $r$ is a diverging term and
the normal form of $(\l{x}.t)\app(\l{y}.r)$ can be computed without
reducing $r$.

Section~\ref{sec:strong-cbn} resolves this tension by exposing a simple
syntactic criterion capturing more needed redexes than what is already
used in call-by-need strategies.
By reducing needed redexes only, our calculus enjoys a normalization
preservation theorem that is stronger than usual: Any \l-term that is
\emph{weakly} normalizing in the pure \l-calculus (\emph{i.e.}, there is at least one
reduction sequence to a normal form, but some other sequences may diverge)
will be \emph{strongly} normalizing in our calculus (\emph{i.e.}, any reduction sequence
is normalizing).
This strong normalization theorem, related to the usual \emph{completeness}
results of call-by-name or call-by-need strategies, is completely dealt with
using a system of non-idempotent intersection types. This avoids
the traditional tedious syntactic commutation lemmas, hence providing more elegant
proofs. This is an improvement over the technique used
in previous works~\cite{Kesner-CallByNeed,SCBN-ICFP}, enabled by our
SOS-style presentation.

While our calculus contains the strong call-by-need strategy introduced
in~\cite{SCBN-ICFP}, it also allows more liberal call-by-need strategies that
anticipate some strong reduction steps in order to reduce the overall
length of the reduction to normal form.
Section~\ref{sec:optimal-strategies} describes a restriction of the calculus
that guarantees reduction sequences
of minimal length.

Section~\ref{sec:abella} presents a formalization of parts of our results
in Abella~\cite{Abella14}. Beyond the proof safety provided by such
a tool, this formalization has also influenced the design of our strong
call-by-need calculus itself in a positive way. In particular, this is
what promoted the aforementioned presentation based on SOS-style local
reduction rules. The formalization can be found at the following address:
\url{https://hal.inria.fr/hal-03149692}.

Finally, Section~\ref{sec:abstract} presents an abstract machine that
implements the rules of our calculus. More precisely, it implements the
leftmost strategy that complies with our calculus, using a mutable store
to deal with explicit substitutions in a call-by-need fashion. This
section presents the big-step semantics of the machine, and draws a few
observations from it.

This paper is an extended version of~\cite{FSCD}. It contains more examples,
more explanations, and more detailed proofs than the short paper, as well as
two new sections.
First, Section~\ref{sec:weak-cbn} details the technique used for expressing
call-by-need reduction in SOS-style, and gives a proof that this presentation
actually defines the same calculus as the more traditional presentation of weak
call-by-need by means of evaluation contexts. Second, Section~\ref{sec:abstract}
on the abstract machine is fully new material. Finally, we provide in this
extended version a formal statement of how this strong call-by-need calculus
indeed reduces only needed redexes.

\section{The host calculus \texorpdfstring{\lc}{lambda-c}}
\label{sec:lambda-c}

Our strong call-by-need calculus is included in an already known calculus
\lc, that serves as a technical tool in~\cite{SCBN-ICFP} and which we name
our \emph{host calculus}.
This calculus gives the general shape of reduction rules allowing memoization
and comes with a system of non-idempotent intersection types. Its reduction,
however, is not constrained by any notion of neededness.
The \lc-calculus uses the following syntax of \l-terms with explicit
substitutions, which is isomorphic to the original syntax of the
call-by-need calculus using \letin-bindings~\cite{CallByNeed-95}.
\[ t\in\lcterms \quad::=\quad x \mid \l{x}.t \mid t\app t \mid t\esub{x}{t} \]

The free variables $\fv(t)$ of a term $t$ are defined as usual.
We call pure \l-term a term that contains no explicit substitution.
We write \ctx for a context, \emph{i.e.}, a term with exactly one hole~\hole,
and \lctx for a context with the specific shape
$\hole\esub{x_1}{t_1}\ldots\esub{x_n}{t_n}$
($n \geqslant 0$).
We write $\ctx[t]$ for the term obtained by plugging the subterm $t$ in the
hole of the context \ctx, with possible capture of free variables of $t$ by
binders in \ctx, or $t\lctx$ when the context is of the specific shape~\lctx.
We also write $\ctx[\plug{t}]$ for plugging a term $t$ whose free
variables are not captured by \ctx.

The \emph{values} we consider are \l-abstractions, and are usually
denoted by $v$. If variables were also considered as values, most of
the results of this paper would be preserved, except for the diamond
property in the restricted calculus (see \autoref{thm:diamond}).

Reduction in \lc is defined by the following three reduction rules, applied
in any context. Rather than using traditional propagation rules for explicit
substitutions~\cite{Kesner-ES}, it builds on the
\emph{Linear Substitution Calculus}~\cite{Milner-LSC, Kesner-LSC, LSC-Standardization} which is more
similar to the let-in constructs commonly used for defining call-by-need.
\[\begin{array}{rcl@{\hspace{2em}}l}
(\l{x}.t)\lctx\app u & \redb & t\esub{x}{u}\lctx \\
\ctx[\plug{x}]\esub{x}{v\lctx} & \relsv & \ctx[\plug{v}]\esub{x}{v}\lctx &
\text{with $v$ value},\\
t\esub{x}{u} & \regc & t & \text{with } x\not\in\fv(t).
\end{array}\]

The rule \redb describes $\beta$-reduction ``at a distance''. It applies to a
$\beta$-redex whose \l-abstraction is possibly hidden by a list $\lctx$ of explicit
substitutions. This rule is a combination of a single use of $\beta$-reduction
with a repeated use of the structural rule lifting the explicit
substitutions at the left of an application.
The rule \relsv describes the linear substitution of a value, \emph{i.e.}, the
substitution of one occurrence of the variable $x$ bound by an explicit
substitution. It has to be understood as a lookup operation. Similarly to
\redb, this rule embeds a repeated use of a structural rule for unnesting
explicit substitutions.
Note that this calculus only allows the substitution of \l-abstractions, and
not of variables as it is sometimes seen~\cite{MOW-CallByNeed}.
This restricted behavior is enough for the main results of this paper, and
will allow a more compact presentation.
Finally, the rule \regc describes garbage collection of an explicit
substitution for a variable that does not live anymore.
Reduction by any of these rules in any context is written $t\rec u$.

A \lc-term $t$ is related to a pure \l-term $t\unfold$ by the
unfolding operation which applies all the explicit substitutions. We
write $t\msub{x}{u}$ for the meta-level substitution of $x$ by $u$ in
$t$.
\[\begin{array}{rcl@{\hspace{3em}}rcl}
x\unfold & = & x &
(t\app u)\unfold & = & t\unfold\app u\unfold \\
(\l{x}.t)\unfold & = & \l{x}.(t\unfold) &
(t\esub{x}{u})\unfold & = & (t\unfold)\msub{x}{u\unfold}
\end{array}\]
Through unfolding, any reduction step $t\rec u$ in the \lc-calculus is related to
a sequence of reductions $t\unfold \reb^* u\unfold$ in the pure \l-calculus.

The host calculus \lc comes with a system of non-idempotent intersection
types~\cite{IntersectionTypes, Gardner-NonIdempotent}, defined
in~\cite{KesnerVentura-HW} by adding explicit substitutions to an
original system from~\cite{Gardner-NonIdempotent}.
A type~$\tau$ may be a type variable $\alpha$ or an arrow type
$\mvar\arr\tau$, where $\mvar$ is a multiset \mset{\sigma_1,\ldots,\sigma_n}
of types.

A typing environment $\Gamma$ associates to each variable in its domain
a multiset of types. This multiset contains one type for each potential use
of the variable, and may be empty if the variable is not actually used.
In the latter case, the variable can simply be removed from the typing
environment, \emph{i.e.}, we equate $\Gamma$ with $\Gamma,x:\mset{}$.

A typing judgment \jdg{\Gamma}{t}{\tau} assigns exactly one type to the term $t$.
As shown by the typing rules in \autoref{fig:base typing rules}, an argument of
an application or of an explicit substitution may be typed several times in a
derivation. Note that, in the rules, the subscript $\sigma\in\mvar$ quantifies
on all the instances of elements in the multiset $\mvar$.

\begin{figure}[t]
\begin{mathpar}
  \inferrule[\tyruleVar]{~}{\jdg{x:\mset{\sigma}}{x}{\sigma}}
  \and
  \inferrule[\tyruleApp]{\jdg{\Gamma}{t}{\mvar\arr\tau}
    \and (\jdg{\Delta_\sigma}{u}{\sigma})_{\sigma\in\mvar}}{%
    \jdg{\Gamma + \textstyle \sum_{\sigma\in\mvar} \Delta_\sigma}{t\app u}{\tau}}
  \\
  \inferrule[\tyruleAbs]{\jdg{\Gamma,x:\mvar}{t}{\tau}}{%
    \jdg{\Gamma}{\l{x}.t}{\mvar\arr\tau}}
  \and
  \inferrule[\tyruleEs]{\jdg{\Gamma,x:\mvar}{t}{\tau}
    \and (\jdg{\Delta_\sigma}{u}{\sigma})_{\sigma\in\mvar}}{%
    \jdg{\Gamma + \textstyle \sum_{\sigma\in\mvar} \Delta_\sigma}{t\esub{x}{u}}{\tau}}
\end{mathpar}
\caption{Typing rules for \lc.}
\label{fig:base typing rules}
\end{figure}

\paragraph*{Example.}
Given $\Omega=(\l{x}.x\app x)\app(\l{x}.x\app x)$,
the term $(\l{xy}.x\app (x\app x))\app (\l{z}.z)\app \Omega$ is typable in this system,
while $\Omega$ itself is not.
Let $\sigma_0$ be a type. Write $\sigma_1=\mset{\sigma_0}\arr\sigma_0$ and
$\sigma_2=\mset{\sigma_1}\arr\sigma_1$.
First we derive the judgment
\jdg{}{\l{xy}.x\app (x\app x)}{\mset{\sigma_1,\sigma_2,\sigma_2}\arr\mset{}\arr\sigma_1}
as follows.
The main subtlety of the derivation is the adequate splitting of the
environment where the application rule is used.
\begin{mathpar}
\hspace*{-16pt}
      \inferrule*[Right=\tyruleApp]{%
        \inferrule*[Left=\tyruleVar]{~}{%
          \jdg{x:\mset{\sigma_2}}{x}{\mset{\sigma_1}\arr\sigma_1}
        }
        %\and
        %\and
        \hspace*{5pt}
        \inferrule*[Right=\tyruleApp]{%
          \inferrule*[Left=\tyruleVar]{~}{%
            \jdg{x:\mset{\sigma_2}}{x}{\mset{\sigma_1}\arr\sigma_1}
          }
          %\and
          %\and
          \hspace*{13pt}
          \inferrule*[Right=\tyruleVar]{~}{%
            \jdg{x:\mset{\sigma_1}}{x}{\sigma_1}
          }
        }{%
          \jdg{x:\mset{\sigma_1,\sigma_2}}{x\app x}{\sigma_1}
        }
      \hspace*{-33pt}% Avoid overlap on the right
      }{%
    \inferrule*[Right=\tyruleAbs]{%
        \jdg{x:\mset{\sigma_1,\sigma_2,\sigma_2},y:\mset{}}{x\app (x\app x)}{\sigma_1}
    }{%
  \inferrule*[Right=\tyruleAbs]{%
      \jdg{x:\mset{\sigma_1,\sigma_2,\sigma_2}}{\l{y}.x\app (x\app x)}{\mset{}\arr\sigma_1}
    }
  {%
    \jdg{}{\l{xy}.x\app (x\app x)}{\mset{\sigma_1,\sigma_2,\sigma_2}\arr\mset{}\arr\sigma_1}
  }
}
}
\end{mathpar}

From this first judgment, we deduce a typing derivation for
$(\l{xy}.x\app (x\app x))\app (\l{z}.z)\app \Omega$.
This derivation contains three type derivations for the first argument
$\l{z}.z$ and no type derivation for the second argument $\Omega$.
The three type derivations for $\l{z}.z$ are for each of the types in
the multiset \mset{\sigma_1,\sigma_2,\sigma_2}. Two of them are identical.
\begin{mathpar}
\hspace*{-32pt}
    \inferrule*[Right=\tyruleApp]{%
      \ldots
      \and
      \inferrule*[Right=\tyruleAbs]{%
        \inferrule*[Right=\tyruleVar]{~}{%
          \jdg{z:\mset{\sigma_0}}{z}{\sigma_0}
        }
      }{%
        \jdg{}{\l{z}.z}{\sigma_1}
      }
      \and\and
      \inferrule*[Right=\tyruleAbs]{%
        \inferrule*[Right=\tyruleVar]{~}{%
          \jdg{z:\mset{\sigma_1}}{z}{\sigma_1}
        }
      }{%
        \jdg{}{\l{z}.z}{\sigma_2}
      }
      \and\and
      \inferrule*[Right=\tyruleAbs]{%
        \inferrule*[Right=\tyruleVar]{~}{%
          \jdg{z:\mset{\sigma_1}}{z}{\sigma_1}
        }
      }{%
        \jdg{}{\l{z}.z}{\sigma_2}
      }
    }{%
  \inferrule*[Right=\tyruleApp]{%
      \jdg{}{(\l{xy}.x(xx))\app (\l{z}.z)}{\mset{}\arr\sigma_1}
  }{%
    \jdg{}{(\l{xy}.x(xx))\app (\l{z}.z)\app \Omega}{\sigma_1}
  }
}
\end{mathpar}

This type system is known to characterize \l-terms that are weakly normalizing
for $\beta$-reduction, if associated with the side condition that the empty
multiset \mset{} does not appear at a positive position in the typing judgment
$\jdg{\Gamma}{t}{\tau}$. Positive and negative occurrences of types are defined
by the following equations.
\[\begin{array}{rcl@{\hspace{3em}}rcl}
\tpos(\alpha) & = & \set{\alpha}
& \tneg(\alpha) & = & \emptyset \\
\tpos(\mvar)
& = & \set{\mvar} \cup \bigcup_{\sigma\in\mvar}\tpos(\sigma)
& \tneg(\mvar) & = & \bigcup_{\sigma\in\mvar}\tneg(\sigma) \\
\tpos(\mvar\arr\sigma) & = & \set{\mvar\arr\sigma} \cup \tneg(\mvar) \cup \tpos(\sigma)
& \tneg(\mvar\arr\sigma) & = & \tpos(\mvar) \cup \tneg(\sigma) \\
\tpos(\jdg{\Gamma}{t}{\sigma}) & = & \tpos(\sigma) \cup \bigcup_{x\in\dom(\Gamma)}\tneg(\Gamma(x))
\end{array}\]
\begin{thm}[Typability~\cite{DeCarvalho,NIIT}]\label{thm:typability}
  If the pure \l-term $t$ is weakly normalizing for $\beta$-reduction, then
  there is a typing judgment \jdg{\Gamma}{t}{\tau} such that
  $\mset{}\not\in\tpos(\jdg{\Gamma}{t}{\tau})$.
\end{thm}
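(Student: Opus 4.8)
The plan is to adapt the classical argument that non-idempotent intersection types characterize weak normalization, while keeping track of the positivity side condition. The starting observation is that $\tpos(\jdg{\Gamma}{t}{\tau})$ is by definition $\tpos(\tau)\cup\bigcup_{x\in\dom(\Gamma)}\tneg(\Gamma(x))$, so it depends only on the pair $(\Gamma,\tau)$ and not on any particular derivation; let me call a derivable pair $(\Gamma,\tau)$ \emph{admissible for $t$} when $\mset{}\notin\tpos(\jdg{\Gamma}{t}{\tau})$. Since $t$ is weakly normalizing, I fix once and for all a reduction $t\reb^* n$ to a $\beta$-normal form $n$. It then suffices to prove two facts: (i) there is a pair admissible for $n$; and (ii) admissibility is preserved by $\beta$-expansion, so that transporting the pair of (i) backwards along $t\reb^* n$ yields a pair admissible for $t$.

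For (i), I would argue by induction on the structure of $\beta$-normal forms, which are either abstractions $\l x.n'$ or neutral terms $y\app n_1\app\cdots\app n_k$ with each $n_i$ a $\beta$-normal form and $k\geq 0$. In the abstraction case the induction hypothesis gives a derivation $\jdg{\Gamma,x:\mvar}{n'}{\tau}$ with $\mset{}\notin\tpos(\jdg{\Gamma,x:\mvar}{n'}{\tau})$ (with $\mvar=\mset{}$ when $x\notin\fv(n')$); rule \tyruleAbs then produces $\jdg{\Gamma}{\l x.n'}{\mvar\arr\tau}$, and one computes $\tpos(\jdg{\Gamma}{\l x.n'}{\mvar\arr\tau})=\set{\mvar\arr\tau}\cup\tpos(\jdg{\Gamma,x:\mvar}{n'}{\tau})$, which is still free of $\mset{}$ because the arrow type $\mvar\arr\tau$ differs from $\mset{}$. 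In the neutral case the hypothesis provides, for each $i$, a derivation $\jdg{\Delta_i}{n_i}{\tau_i}$ with $\mset{}\notin\tpos(\jdg{\Delta_i}{n_i}{\tau_i})$; I would then type $y$ with $\rho:=\mset{\tau_1}\arr\cdots\arr\mset{\tau_k}\arr\alpha$ for a type variable $\alpha$ and assemble a derivation of $\jdg{\Gamma}{y\app n_1\app\cdots\app n_k}{\alpha}$ with $\Gamma=(y:\mset{\rho})+\sum_i\Delta_i$. Apart from the harmless $\set{\alpha}$, all positions contributed to $\tpos$ of this judgment lie either in $\tneg(\rho)=\bigcup_i(\set{\mset{\tau_i}}\cup\tpos(\tau_i))$ or in one of the sets $\tneg(\Delta_i(z))\subseteq\tpos(\jdg{\Delta_i}{n_i}{\tau_i})$; none of these contains $\mset{}$, the first because every $\mset{\tau_i}$ is nonempty and every $\tpos(\tau_i)$ is free of $\mset{}$ by the induction hypothesis.

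For (ii), I would invoke the standard anti-substitution lemma for this type system: any derivation of $\jdg{\Xi}{r\msub{x}{s}}{\theta}$ decomposes, for a suitable multiset $\mvar$, as $\Xi=\Gamma+\sum_{\sigma\in\mvar}\Delta_\sigma$ with a derivation of $\jdg{\Gamma,x:\mvar}{r}{\theta}$ and, for each $\sigma\in\mvar$, a derivation of $\jdg{\Delta_\sigma}{s}{\sigma}$. Reassembling these with \tyruleAbs and then \tyruleApp gives a derivation of $\jdg{\Xi}{(\l x.r)\app s}{\theta}$ with exactly the same environment and type, so a root $\beta$-expansion leaves the pair $(\Xi,\theta)$ unchanged. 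Since the typing rules are syntax-directed, the same holds when the redex lies under a context $\cx$: one replaces each of the (possibly several, possibly zero) subderivations rooted at the hole, leaving the overall environment and type — hence admissibility — untouched. Iterating this backwards from the pair admissible for $n$ obtained in (i) yields a pair admissible for $t$, which is the claim. The one genuinely delicate point is the polarity bookkeeping in (i): an argument $n_i$ occupies a syntactically negative position, yet the head variable's type re-enters the judgment through the environment with an extra sign flip, so that only the positive positions of $\tau_i$ are ultimately exposed — which is exactly why admissibility of $\jdg{\Delta_i}{n_i}{\tau_i}$ suffices and no control over $\tneg(\tau_i)$ is needed. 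If one prefers, fact (ii) can be cited wholesale from the non-idempotent intersection-type literature, leaving only (i) to carry out explicitly.
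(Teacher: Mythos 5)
Your proposal is correct, but note that the paper does not prove this statement at all: \autoref{thm:typability} is imported wholesale from the cited literature (de Carvalho; Bucciarelli, Kesner, Ventura) and used as a black box in the completeness argument. What you have written is essentially a reconstruction of the standard proof from those references, organized in the usual two halves: (i) every $\beta$-normal form admits a typing whose judgment has no positive occurrence of $\mset{}$, by induction on the shape $\l x.n'$ versus $y\app n_1\app\cdots\app n_k$; and (ii) typability with a fixed pair $(\Gamma,\tau)$ is stable under $\beta$-expansion via the anti-substitution lemma, so the pair found for the normal form transports back to $t$. Both halves are sound. The genuinely delicate point is the one you flag yourself: in the neutral case the argument types $\tau_i$ re-enter the judgment through $\tneg(\rho)\subseteq\tneg(\Gamma(y))$, i.e.\ with two sign flips, so only $\tpos(\tau_i)$ and the nonempty singletons $\mset{\tau_i}$ are exposed positively, and the induction hypothesis on the $n_i$ is exactly what is needed. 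The expansion step is also handled correctly, including the erasing case where $\mvar=\mset{}$ and the discarded argument (possibly untypable, as with $\Omega$ in the paper's example) requires no derivation; since admissibility depends only on the pair $(\Gamma,\tau)$, which subject expansion preserves both at the root and under a context, the side condition survives. The only caveat is presentational: the anti-substitution lemma and the syntax-directedness argument for contextual closure are asserted rather than proved, but both are standard for this system, and the paper itself delegates strictly more than that to the references.
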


\paragraph*{Example.}
Let $\alpha$, $\beta$, $\gamma$, and $\delta$ be type variables. Positive and
negative type occurrences of the type
$\tau = \mset{\mset{\alpha,\beta}\arr\gamma}\arr\mset{}\arr\delta$
are as follows:
\[\begin{array}{rcl}
\tpos(\tau) & = & \set{\mset{\mset{\alpha,\beta}\arr\gamma}\arr\mset{}\arr\delta, \mset{\alpha,\beta}, \alpha, \beta, \mset{}\arr\delta, \delta} \\
\tneg(\tau) & = & \set{\mset{\mset{\alpha,\beta}\arr\gamma}, \mset{\alpha,\beta}\arr\gamma, \gamma, \mset{}}
\end{array}\]

A typing derivation $\Phi$ for a typing judgment \jdg{\Gamma}{t}{\tau} (written
\deriv{\Gamma}{t}{\tau}) defines in $t$ a set of \emph{typed positions}, which
are the positions of the subterms of $t$ for which the derivation~$\Phi$
contains a subderivation. More precisely:
\begin{itemize}
\item $\varepsilon$ is a typed position for any derivation;
\item if $\Phi$ ends with rule \tyruleAbs, \tyruleApp or \tyruleEs, then $0p$
  is a typed position of $\Phi$ if $p$ is a typed position of the subderivation
  $\Phi'$ relative to the first premise;
\item if $\Phi$ ends with rule \tyruleApp or \tyruleEs, then $1p$
  is a typed position of $\Phi$ if $p$ is a typed position of the subderivation
  $\Phi'$ relative to one of the instances of the second premise.
\end{itemize}
Note that, in the latter case, there is no instance of the second premise and
no typed position $1p$ when the multiset \mvar is empty. On the contrary, when
\mvar has several elements, we get the union of the typed positions contributed
by each instance.

\paragraph*{Example.}
In the derivation proposed above for the judgment
\jdg{}{(\l{xy}.x(xx))\app (\l{z}.z)\app \Omega}{\sigma_1},
the position $\varepsilon$, and all the positions of the term starting with
$0$ are typed. On the contrary, no position starting with $1$ is typed.

These typed positions have an important property; they satisfy a
\emph{weighted} subject reduction theorem ensuring a decreasing derivation size,
which we will use in the next section. We call size of a derivation $\Phi$ the number of nodes of the derivation tree.
\begin{thm}[Weighted subject reduction~\cite{SCBN-ICFP}]
  \label{thm:weighted subject reduction}
  If \deriv{\Gamma}{t}{\tau} and \textup{$t\rec t'$} by reduction of a redex at a
  typed position,
  then there is a derivation \deriv[\Phi']{\Gamma}{t'}{\tau} with $\Phi'$
  smaller than $\Phi$.
\end{thm}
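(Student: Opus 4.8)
The plan is to reduce the statement, using the typed-position hypothesis, to a purely local claim about the subderivations of $\Phi$ sitting exactly at the position of the contracted redex, and then to prove that claim rule by rule. Indeed, because the rules of \autoref{fig:base typing rules} only record information about typed positions, saying that the redex occurs at a typed position means that $\Phi$ carries at that position a subderivation typing the redex --- or, if the position points inside an argument of an application or of an explicit substitution, a whole family of such subderivations, one per element of the relevant multiset, each typing the redex with a possibly different type and environment. So it is enough to prove: given a derivation $\Psi$ of \jdg{\Delta}{r}{\rho} whose last rule types the outermost constructor of the redex $r$, and given $r\rec r'$, there is a derivation $\Psi'$ of \jdg{\Delta}{r'}{\rho} --- same environment, same type, so that $\Psi'$ slots back into $\Phi$ in place of $\Psi$ --- with $\Psi'$ strictly smaller than $\Psi$. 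A strict decrease on a single one of the (possibly several) subderivations at the redex position already gives the theorem. Typedness is essential here: contracting a redex at an untyped position leaves $\Phi$ untouched, so no decrease could be claimed.

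The \regc case is immediate: in $t\esub{x}{u}\regc t$ with $x\notin\fv(t)$, the variable $x$ has no occurrence in $t$, so the \tyruleEs rule at the root of $\Psi$ carries $x:\mset{}$, has no premise for $u$, and its sole premise is already a derivation of \jdg{\Delta}{t}{\rho}, with exactly one node fewer. For \redb, $(\l{x}.t)\lctx\app u\redb t\esub{x}{u}\lctx$, I would first prove a decomposition lemma for $\lctx$-contexts letting the \tyruleApp root of $\Psi$ be commuted downward past the \tyruleEs nodes coming from $\lctx$ --- this is legitimate because the variables bound by $\lctx$ do not occur in $u$, and it preserves the number of nodes --- until it sits directly above the \tyruleAbs node for $\l{x}.t$; replacing that \tyruleApp/\tyruleAbs pair by a single \tyruleEs node for $\esub{x}{u}$ and then re-wrapping with the substitutions of $\lctx$ produces $\Psi'$, again with exactly one node fewer than $\Psi$.

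The \relsv case, $\ctx[\plug{x}]\esub{x}{v\lctx}\relsv\ctx[\plug{v}]\esub{x}{v}\lctx$, is the delicate one. The natural anchor of such a redex is the substituted occurrence of $x$, which is thus a typed position; in the body premise of the \tyruleEs root of $\Psi$ it therefore appears as a \tyruleVar leaf \jdg{x:\mset{\sigma_0}}{x}{\sigma_0}, so the multiset \mvar assigned to $x$ splits as $\mset{\sigma_0}\uplus\mvar'$. I would need two auxiliary lemmas. The first is a one-occurrence substitution lemma: surgically replacing that single \tyruleVar leaf by a derivation of $v$ with type $\sigma_0$ yields a derivation of $\ctx[\plug{v}]$ with the same type $\rho$ and an accordingly updated environment. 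The second is an $\lctx$-decomposition lemma, just as for \redb: each of the $|\mvar|$ derivations of $v\lctx$ at the root of $\Psi$ splits into a derivation of a bare $v$ together with, for each of the $k$ explicit substitutions of $\lctx$, a bundle of derivations of its argument. Reassembling $\ctx[\plug{v}]$, then a new \tyruleEs node for $\esub{x}{v}$ carrying the $|\mvar'|$ remaining derivations of a bare $v$, then the $k$ substitutions of $\lctx$ now pulled to the \emph{outside}, yields a derivation $\Psi'$ with the same environment and type; a direct count shows it is smaller than $\Psi$ by exactly $1+k\,(|\mvar|-1)$ nodes.

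The main obstacle is precisely this last count. The reduct acquires a fresh \tyruleEs node (for $\esub{x}{v}$), and one must check that substitution does not blow up the derivation. The point is that the single copy of $v\lctx$ that is consumed is replaced by a bare $v$, so that in the reduct $\esub{x}{v}$ binds $x$ to a bare value while the list $\lctx$ --- which inside $\Psi$ occurs $|\mvar|$ times, once around each copy of $v$ --- survives only once, at top level; this alone collapses $k|\mvar|$ substitution nodes down to $k$, and the \tyruleVar leaf removed from the body pays for the new \tyruleEs node, leaving a net decrease of $1+k\,(|\mvar|-1)$, which is at least $1$ since the anchor is typed and hence $|\mvar|\geqslant 1$. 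Stating the two auxiliary lemmas with size bounds sharp enough to see this, and threading the environment equalities through them, is where essentially all the work lies; the \redb and \regc cases, and the propagation through arbitrary contexts, are then routine.
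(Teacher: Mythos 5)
The paper itself offers no proof of this theorem: it is imported from~\cite{SCBN-ICFP}, so there is nothing in-text to compare your argument against. Taken on its own terms, your reconstruction follows the standard route for weighted subject reduction in the linear substitution calculus and is essentially sound: reducing to a local replacement claim at the redex position, disposing of \regc by relevance of the system ($x\not\in\fv(t)$ forces $\mvar=\mset{}$ in the \tyruleEs premise), handling \redb by commuting the root \tyruleApp below the \tyruleEs nodes of \lctx and fusing the \tyruleApp/\tyruleAbs pair into a single \tyruleEs, and handling \relsv by a partial-substitution lemma combined with an \lctx-commutation. Your size count $1+k\,(|\mvar|-1)$ for \relsv is correct and correctly identifies why typedness of the substituted occurrence (hence $|\mvar|\geqslant 1$) is exactly what makes the decrease strict; this is the place where a naive count could go wrong, and you got it right.

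One step does need repair. You assume that the substituted occurrence of $x$ in $\ctx[\plug{x}]$ corresponds to exactly \emph{one} \tyruleVar leaf of the body derivation, so that $\mvar$ splits as $\mset{\sigma_0}\uplus\mvar'$. But \cx is an arbitrary context, so the hole may sit inside the argument of an application or of an explicit substitution; a single occurrence of $x$ in the term then carries $m\geqslant 1$ subderivations, that is, $m$ \tyruleVar leaves with possibly distinct types $\sigma_0^{(1)},\ldots,\sigma_0^{(m)}$, and \emph{all} of them must be replaced simultaneously by derivations of $v$ (replacing only one would leave the remaining leaves typing $x$ where the term now reads $v$). Your ``one-occurrence substitution lemma'' is false as stated for $m>1$ and must be formulated for the whole family of leaves attached to the occurrence; the count then becomes $m+k\,(|\mvar|-1)$, which is still at least $1$, so the theorem survives unchanged. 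The same multiplicity phenomenon governs the propagation step: when the redex position lies under an argument typed $p$ times, every one of the $p$ local subderivations must be replaced by its smaller counterpart, so your remark that ``a strict decrease on a single one already gives the theorem'' is acceptable only because the local claim applies uniformly to each copy. Finally, a cosmetic point: in your \relsv accounting, the \tyruleEs node for $\esub{x}{v}$ is not net-new (it replaces the original root \tyruleEs for $\esub{x}{v\lctx}$), so the removed \tyruleVar leaf does not need to ``pay'' for it; the itemization is off even though the final formula is right.
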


%% \begin{prop}[Typing and unfolding]
%%   For any term $t$,
%%   \deriv{\Gamma}{t}{\tau} if and only if \deriv{\Gamma}{t\unfold}{\tau}.
%% \end{prop}
%% \begin{proof}
%%   By induction on $t$, with the key remark that
%%   \deriv{\Gamma}{(\l{x}.t)\app u}{\tau}
%%   if and only if
%%   \deriv{\Gamma}{t\esub{x}{u}}{\tau}.
%% \end{proof}

Note also that typing is preserved by reverting explicit substitutions
to $\beta$-redexes.
\begin{prop}[Typing and substitution]
  \jdg{\Gamma}{(\l{x}.t)\app u}{\tau}
  if and only if
  \jdg{\Gamma}{t\esub{x}{u}}{\tau}.
\end{prop}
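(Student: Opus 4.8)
The plan is to prove each direction by inverting the last rule of the typing derivation, relying on the fact that the syntactic shape of each of the two terms determines that rule uniquely: the only rule whose conclusion types an application is \tyruleApp, the only rule whose conclusion types an abstraction is \tyruleAbs, and the only rule whose conclusion types a term with an outermost explicit substitution is \tyruleEs. Hence all the inversions below are forced, and the two derivations will turn out to be in exact correspondence.

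For the left-to-right direction, suppose \jdg{\Gamma}{(\l{x}.t)\app u}{\tau} is derivable. Its derivation ends with an instance of \tyruleApp, so there are a multiset $\mvar$, an environment $\Gamma_0$, and environments $(\Delta_\sigma)_{\sigma\in\mvar}$ with $\Gamma = \Gamma_0 + \sum_{\sigma\in\mvar}\Delta_\sigma$, together with a derivation of \jdg{\Gamma_0}{\l{x}.t}{\mvar\arr\tau} and, for each $\sigma\in\mvar$, a derivation of \jdg{\Delta_\sigma}{u}{\sigma}. The derivation of \jdg{\Gamma_0}{\l{x}.t}{\mvar\arr\tau} must in turn end with \tyruleAbs, hence is built from a derivation of \jdg{\Gamma_0,x:\mvar}{t}{\tau}. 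Plugging this last derivation as the first premise of \tyruleEs, together with the very same family $(\jdg{\Delta_\sigma}{u}{\sigma})_{\sigma\in\mvar}$ as second premise, produces a derivation of \jdg{\Gamma_0 + \sum_{\sigma\in\mvar}\Delta_\sigma}{t\esub{x}{u}}{\tau}, that is, of \jdg{\Gamma}{t\esub{x}{u}}{\tau}. The right-to-left direction is the mirror image: a derivation of \jdg{\Gamma}{t\esub{x}{u}}{\tau} ends with \tyruleEs, exposing a multiset $\mvar$, a splitting $\Gamma = \Gamma_0 + \sum_{\sigma\in\mvar}\Delta_\sigma$, a derivation of \jdg{\Gamma_0,x:\mvar}{t}{\tau}, and derivations $(\jdg{\Delta_\sigma}{u}{\sigma})_{\sigma\in\mvar}$; applying \tyruleAbs to the derivation of $t$ yields \jdg{\Gamma_0}{\l{x}.t}{\mvar\arr\tau}, and then \tyruleApp with the same argument derivations yields \jdg{\Gamma}{(\l{x}.t)\app u}{\tau}.

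There is essentially no genuine obstacle here; the only points that require a word of care are bookkeeping ones. First, the transformation just described is in fact a bijection between the derivations of the two judgments: the argument subderivations are transported verbatim, and the two-layer \tyruleApp/\tyruleAbs fragment on one side corresponds exactly to the single \tyruleEs node on the other, so the derivation of \jdg{\Gamma}{(\l{x}.t)\app u}{\tau} has exactly one more node (the \tyruleAbs node above the application) than the derivation of \jdg{\Gamma}{t\esub{x}{u}}{\tau}. Second, when reading back the \tyruleEs (resp. \tyruleAbs) premise one must know that $x$ does not already occur in $\Gamma_0$; this is granted by the usual $\alpha$-renaming convention on bound variables, and the identification of $\Gamma_0$ with $\Gamma_0,x:\mset{}$ lets us always display the split with the binder present, even when $x$ does not occur in $t$.
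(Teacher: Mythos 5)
Your proof is correct: since the typing rules are syntax-directed, the inversions you perform are indeed forced, and the two derivations are in exact correspondence as you describe. The paper states this proposition without proof, and your inversion argument (swap the \tyruleApp/\tyruleAbs pair for a single \tyruleEs node with the same argument subderivations) is precisely the evident intended justification, including the correct observation about the freshness of $x$ and the convention $\Gamma = \Gamma, x:\mset{}$.
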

Thus, if we write $t\revert$ the pure \l-term obtained by reverting
all the explicit substitutions of a \lc-term $t$, which can be
defined by the following equations,
\[\begin{array}{rcl@{\hspace{3em}}rcl}
x\revert & = & x &
(t\app u)\revert & = & t\revert\app u\revert \\
(\l{x}.t)\revert & = & \l{x}.(t\revert) &
(t\esub{x}{u})\revert & = & (\l{x}.t\revert)\app u\revert
\end{array}\]
then we have the following theorem.
\begin{prop}[Stability of typing by reverting explicit substitutions]
  \label{prop:revert typing}
  For any term $t$,
  \jdg{\Gamma}{t}{\tau} if and only if \jdg{\Gamma}{t\revert}{\tau}.
\end{prop}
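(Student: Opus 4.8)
The plan is to prove both implications at once by structural induction on $t$, establishing the statement in the form that is convenient for the induction: \emph{for every environment $\Gamma$ and every type $\tau$, we have \jdg{\Gamma}{t}{\tau} if and only if \jdg{\Gamma}{t\revert}{\tau}}. The key ingredient, beyond the already-proved Proposition (Typing and substitution), is that the typing rules of \autoref{fig:base typing rules} are syntax-directed, so each judgment can be inverted: a derivation of \jdg{\Gamma}{x}{\tau} forces $\Gamma = x:\mset{\tau}$; a derivation of \jdg{\Gamma}{\l{x}.s}{\tau} forces $\tau = \mvar\arr\sigma$ and comes from a derivation of \jdg{\Gamma,x:\mvar}{s}{\sigma}; a derivation of \jdg{\Gamma}{s\app u}{\tau} comes from some $\mvar$, a splitting $\Gamma = \Gamma_0 + \sum_{\sigma\in\mvar}\Delta_\sigma$, a derivation of \jdg{\Gamma_0}{s}{\mvar\arr\tau}, and derivations of \jdg{\Delta_\sigma}{u}{\sigma} for each $\sigma\in\mvar$; and similarly for \tyruleEs.

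The base case $t = x$ is immediate, since $x\revert = x$. For $t = \l{x}.s$ we have $t\revert = \l{x}.(s\revert)$; inverting \tyruleAbs on both sides reduces the goal to the equivalence between \jdg{\Gamma,x:\mvar}{s}{\sigma} and \jdg{\Gamma,x:\mvar}{s\revert}{\sigma}, which is exactly the induction hypothesis applied to $s$ at the environment $\Gamma,x:\mvar$ and type $\sigma$. For $t = s\app u$ we have $t\revert = s\revert\app u\revert$; inverting \tyruleApp on both sides, the goal becomes: there exist a multiset $\mvar$, a splitting $\Gamma = \Gamma_0 + \sum_{\sigma\in\mvar}\Delta_\sigma$, a derivation of \jdg{\Gamma_0}{s}{\mvar\arr\tau}, and derivations of \jdg{\Delta_\sigma}{u}{\sigma} for all $\sigma\in\mvar$, if and only if the same data exists with $s$ and $u$ replaced by $s\revert$ and $u\revert$. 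This follows by applying the induction hypothesis for $s$ once, and the induction hypothesis for $u$ once for each element $\sigma\in\mvar$ (including the case where $\mvar$ is empty, where there is no subderivation for the argument at all).

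The only case that differs structurally is $t = s\esub{x}{u}$, where $t\revert = (\l{x}.s\revert)\app u\revert$. Here I first invoke the Proposition (Typing and substitution) with $s\revert$ and $u\revert$, which gives \jdg{\Gamma}{(\l{x}.s\revert)\app u\revert}{\tau} if and only if \jdg{\Gamma}{(s\revert)\esub{x}{u\revert}}{\tau}. It then remains to show that \jdg{\Gamma}{s\esub{x}{u}}{\tau} holds if and only if \jdg{\Gamma}{(s\revert)\esub{x}{u\revert}}{\tau} holds; this is handled exactly as in the application case, by inverting \tyruleEs on both sides and applying the induction hypothesis to $s$ and to each typing instance of $u$. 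Chaining the two equivalences closes this case, and hence the induction.

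I do not expect a real obstacle here: the argument is a routine structural induction. The only points that require a little care are remembering to carry the induction hypothesis in its universally quantified form (the inversions land on environments such as $\Gamma,x:\mvar$ and on types such as $\mvar\arr\tau$, not on the original $\Gamma$ and $\tau$), and handling the multiset $\mvar$ in the \tyruleApp and \tyruleEs cases correctly, where the argument must be re-typed once per element of $\mvar$ — so the induction hypothesis on the argument is applied once for each such instance, and vacuously when $\mvar = \mset{}$.
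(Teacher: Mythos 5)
Your proof is correct and follows the same route as the paper, which simply states ``by induction on $t$'': a structural induction whose only non-routine case is the explicit substitution, discharged via the preceding Proposition (Typing and substitution) together with inversion of the syntax-directed typing rules. The care you take with the universally quantified induction hypothesis and the per-element re-typing of arguments over the multiset $\mvar$ is exactly what the paper's one-line proof leaves implicit.
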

\begin{proof}
  By induction on $t$.
\end{proof}
Given any \lc-term $t$, the reverted term $t\revert$ can be related to
the unfolded term $t\unfold$ by $\beta$-reducing precisely the $\beta$-redexes
introduced by the $\revert$ operation. This is proved by a straightforward
induction on $t$.
\begin{prop}
For any term $t$, $t\revert \reb^* t\unfold$.
\end{prop}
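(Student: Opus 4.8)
The statement to prove is that for any \lc-term $t$, we have $t\revert \reb^* t\unfold$. The natural approach is a straightforward induction on the structure of $t$, following the grammar $t ::= x \mid \l{x}.t \mid t\app t \mid t\esub{x}{t}$. The only subtle case is the explicit substitution, where the two operations $\revert$ and $\unfold$ diverge: $\revert$ turns $t\esub{x}{u}$ into the $\beta$-redex $(\l{x}.t\revert)\app u\revert$, whereas $\unfold$ actually performs the substitution, giving $(t\unfold)\msub{x}{u\unfold}$.

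First I would dispatch the variable case, where $x\revert = x = x\unfold$ trivially. For the abstraction case $\l{x}.t$, the induction hypothesis gives $t\revert \reb^* t\unfold$, and since $\reb$ is closed under arbitrary contexts (in particular under $\l{x}.\hole$), we get $\l{x}.t\revert \reb^* \l{x}.t\unfold$, which is exactly $(\l{x}.t)\revert \reb^* (\l{x}.t)\unfold$. The application case $t\app u$ is analogous: apply the induction hypotheses to $t$ and $u$ separately and use context closure of $\reb$ under $\hole\app u\revert$ and $t\unfold\app\hole$, composing the two reduction sequences.

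The main case, and the only place requiring real thought, is $t\esub{x}{u}$. Here $(t\esub{x}{u})\revert = (\l{x}.t\revert)\app u\revert$. I would first take one $\beta$-step at the root: $(\l{x}.t\revert)\app u\revert \reb (t\revert)\msub{x}{u\revert}$. The task then reduces to showing $(t\revert)\msub{x}{u\revert} \reb^* (t\unfold)\msub{x}{u\unfold}$. This follows from the two induction hypotheses $t\revert \reb^* t\unfold$ and $u\revert \reb^* u\unfold$, combined with the fact that meta-level substitution is compatible with $\reb$-reduction: if $t_1 \reb^* t_2$ and $u_1 \reb^* u_2$, then $t_1\msub{x}{u_1} \reb^* t_2\msub{x}{u_2}$. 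The step $t_1\msub{x}{u_1} \reb^* t_2\msub{x}{u_1}$ holds because $\reb$ is context-closed and substitution preserves redex positions; the step $t_2\msub{x}{u_1} \reb^* t_2\msub{x}{u_2}$ holds because each free occurrence of $x$ in $t_2$ can be rewritten independently, concatenating $\deg$ copies of the sequence $u_1 \reb^* u_2$ where $\deg$ is the number of free occurrences of $x$. Chaining the root $\beta$-step with this substitution compatibility closes the case.

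I expect no genuine obstacle here: the proof is the routine induction the paper already announces. The one point deserving care is the substitution-compatibility lemma used in the explicit-substitution case, which is itself standard for $\beta$-reduction and could either be cited as folklore or proved by a small auxiliary induction on the term being substituted into. Everything else is mechanical context-closure bookkeeping.
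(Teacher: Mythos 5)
Your proof is correct and follows exactly the route the paper intends: the paper's justification is simply ``a straightforward induction on $t$,'' and your write-up is that induction carried out in full, with the explicit-substitution case handled by firing the introduced $\beta$-redex and invoking the standard substitutivity of $\reb^*$.
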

Hence $t\revert$ is weakly normalizing, and thus $t$ is typable, if and only
if $t\unfold$ is weakly normalizing.

\section{Weak call-by-need, revisited}
\label{sec:weak-cbn}

Call-by-need is usually defined using a formalism that is equivalent
or equal to \lc, by restricting the situations in which the reduction rules
can be applied.
The main idea is that since the argument of a function is not always needed,
we do not reduce in advance the right part of an application $t\app u$.
Instead, we first evaluate $t$ to an answer $(\l{x}.t')\lctx$ (a value under
some explicit substitutions), then apply a
\rdb-reduction to put the argument $u$ in the environment of $t'$, and
then go on with the resulting term $t'\esub{x}{u}\lctx$, evaluating $u$ only if
and when it is required.

\paragraph{Call-by-need using evaluation contexts}

This restriction on the reduced redexes is traditionally defined in
weak call-by-need through a notion of evaluation
context~\cite{CallByNeed-95,CallByNeed-2012}.
Such a definition using the syntax of our host calculus \lc can for instance
be found in~\cite{Accattoli-AbstractMachines}.
Evaluation contexts \ectx are characterized by the following grammar:
\[\begin{array}{rcl}
\ectx & ::= & \hole \mid \ectx\app u \mid \ectx\esub{x}{u} \mid \ectx[\plug{x}]\esub{x}{\ectx}
\end{array}\]

In this presentation, the \rdb rule is used exactly as it is presented in \lc,
and the \rlsv rule is restricted to allow substitution only for an occurrence
of $x$ that is reachable through an evaluation context. The rule \rgc is not
used, since it is never actually necessary.
\[\begin{array}{rcl@{\hspace{2em}}l}
(\l{x}.t)\lctx\app u & \mapsto_\rdb & t\esub{x}{u}\lctx \\
\ectx[\plug{x}]\esub{x}{v\lctx} & \mapsto_\rlsv & \ectx[\plug{v}]\esub{x}{v}\lctx
& \text{with $v$ value.}
\end{array}\]

Then $t$ is defined to reduce to $t'$ for a base rule $\rho$ whenever there
exists an evaluation context \ectx and a $\rho$-redex $r$ such that
$t=\ectx[r]$, $r\mapsto_\rho r'$ and $t'=\ectx[r']$.
Let us call \lwcbn the weak call-by-need calculus defined by these rules and
contexts.

\paragraph*{Example.}
If $v$ is a value, then
$(y\app u)\esub{y}{v\esub{z}{w}}\mapsto_\rlsv(v\app u)\esub{y}{v}\esub{z}{w}$
is an instance of the base reduction rule \rlsv. Indeed,
$(y\app u)\esub{y}{v\esub{z}{w}}$ can be written $\ectx[\plug{y}]\esub{y}{v\lctx}$
with $\ectx=\hole\app u$ an evaluation context and $\lctx=\esub{z}{w}$ a list of
explicit substitutions.
Moreover $\ecx'(\hole)=(x\app t)\esub{x}{\hole}$ is an evaluation context,
built with the shape $\ecx'_1\fplug{x}\esub{x}{\ecx'_2(\hole)}$ with $\ecx'_1(\hole)=\hole\app t$
and $\ecx'_2(\hole)=\hole$.
Thus by applying the previous $\mapsto_\rlsv$-reduction in the evaluation
context $\ecx'(\hole)$, we justify the following reduction step:
\[ (x\app t)\esub{x}{(y\app u)\esub{y}{v\esub{z}{w}}}
\re (x\app t)\esub{x}{(v\app u)\esub{y}{v}\esub{z}{w}} \]

This presentation based on evaluation contexts is difficult to handle.
A notable issue is the formalization of the plugging operations $\ectx[t]$
and $\ectx[\plug{t}]$ and their relation with $\alpha$-equivalence.
Hence we propose here a new presentation of weak call-by-need reduction
based on an equivalent inductive definition, which we will extend in the rest
of the paper to define and formalize strong call-by-need reduction.
%% We prove that this alternative presentation
%% is equivalent to the traditional presentation.

\paragraph{Inductive definition of call-by-need reduction}

We provide an alternative definition of~\lwcbn, based on an inductive
definition of a binary relation $\rex{\rho}$ on terms, parameterized by a
rule $\rho$ which can be \rdb, \rlsv, or two others that we will introduce
shortly. The inference rules are given in \autoref{fig:wcbn-rules}.

Rule \ruleAppLeft makes reduction always possible on the left of an
application, and rule \ruleEsLeft does the same for terms that are under
an explicit substitution. They correspond respectively to the shapes
$\ectx\app u$ and $\ectx\esub{x}{u}$ of evaluation contexts.
Remark that there is no inference rule allowing reduction inside a
\l-abstraction. Therefore, this truly defines only weak reduction steps.

Rule \ruleEsRight restricts reduction of the argument of a substitution to the
case where an occurrence of the variable affected by the substitution is at a
reducible position, as did the evaluation context shape
$\ectx[\plug{x}]\esub{x}{\ectx}$.
In order to probe a term for the presence of some variable $x$ at a reducible
position, which is what the condition $\ectx[\plug{x}]$ means, without
mentioning any evaluation context, it uses an auxiliary rule \ri{x} which
propagates using the same inductive reduction relation. This auxiliary
reduction rule does not modify the term to which it applies, as witnessed by
its base case \ruleId.

Rules \ruleDb and \ruleLsv are the base cases for applying reductions \rdb or
\rlsv. Using the context notations, they allow the following reductions.
\[\begin{array}{rcl@{\hspace{2em}}l}
(\l{x}.t)\lctx\app u & \rex{\rdb} & t\esub{x}{u}\lctx \\
\ectx[\plug{x}]\esub{x}{v\lctx} & \rex{\rlsv} & \ectx[\plug{v}]\esub{x}{v}\lctx
& \text{with $v$ value.}
\end{array}\]

However, we are aiming for a definition that does not rely on the notion
of context. We have to express \rlsv-reduction without the evaluation
context \ectx.
Moreover, a list \lctx of explicit substitutions \emph{is} a context and
thus brings the same formalization problems as evaluation contexts.
Therefore we define each of these rules using an auxiliary
(inductive) reduction relation dealing with the list \lctx of explicit
substitutions and, in the case of \rlsv, with the evaluation context \ectx.
The inference rules for these auxiliary relations are given in
\autoref{fig:waux-rules}.

\begin{figure}[t]
  \begin{mathpar}
  \inferrule[\ruleAppLeft]{t \rex{\rho} t'}{%
    t\app u \rex{\rho} t'\app u}
  \and
  \inferrule[\ruleEsLeft]{%
    t \rex{\rho} t'}{%
    t\esub{x}{u} \rex{\rho} t'\esub{x}{u}}
  \and
  \inferrule[\ruleEsRight]{%
    t \rex{\ri{x}} t%
    \and u \rex{\rho} u'}{%
    t\esub{x}{u} \rex{\rho} t\esub{x}{u'}}
  \\
  \inferrule[\ruleId]{~}{x \rex{\ri{x}} x}
  \and
  \inferrule[\ruleSub]{~}{x \rex{\rs{x}{v}} v}
  \and
  \inferrule[\ruleDb]{t \auxdb t'}{t \rex{\rdb} t'}
  \and
  \inferrule[\ruleLsv]{%
    t \wauxlsv t'}{%
    t \rex{\rlsv} t'}
  \end{mathpar}
  \caption{Reduction rules for \lwcbn.}
  \label{fig:wcbn-rules}
\end{figure}

\begin{figure}[t]
  \begin{mathpar}
  \inferrule[\ruleAuxDb]{~}{(\l{x}.t)\app u \auxdb t\esub{x}{u}}
  \and
  \inferrule[\ruleAuxLsv]{%
    t \rex{\rs{x}{v}} t'%
    \and v\text{ value}}{%
    t\esub{x}{v} \wauxlsv t'\esub{x}{v}}
  \\
  \inferrule[\ruleAuxDbSigma]{t\app u \auxdb s}{%
    t\esub{x}{w}\app u \auxdb s\esub{x}{w}}
  \and
  \inferrule[\ruleAuxLsvSigma]{%
    t\esub{x}{u} \wauxlsv t'}{%
    t\esub{x}{u\esub{y}{w}} \wauxlsv t'\esub{y}{w}}
  \end{mathpar}
  \caption{Auxiliary reduction rules for \lwcbn.}
  \label{fig:waux-rules}
\end{figure}

Rules \ruleAuxDb and \ruleAuxLsv describe the base cases of the auxiliary
reductions, where the list \lctx is empty. Note that, while \ruleAuxDb is an
axiom, the inference rule \ruleAuxLsv uses as a premise a reduction
\rex{\rho} using a new reduction rule $\rho = \rs{x}{v}$.
This rule substitutes one occurrence of the variable $x$ at
a reducible position by the value~$v$, and implements the transformation of
$\ectx[\plug{x}]$ into $\ectx[\plug{v}]$. As seen for \ri{x} above,
this reduction rule propagates using the same inductive reduction relation,
and its base case is the rule \ruleSub in \autoref{fig:wcbn-rules}.

Rule \ruleAuxDbSigma makes it possible to float out an explicit substitution
applied to the left part of an application. That is, if a \rdb-reduction is
possible without the substitution, then the reduction is performed and the
substitution is applied to the result.
Rule \ruleAuxLsvSigma achieves the same effect with the nested substitutions
applied to the value substituted by an \rlsv-reduction step.
In other words, each of these two rules applied repeatedly moves an arbitrary
list \lctx of substitutions until the corresponding base case becomes
applicable.

Finally, we say $t$ reduces to $t'$ whenever we can derive
$t\rex{\rho}t'$ with $\rho\in\set{\rdb,\rlsv}$.

\paragraph*{Example.}
The reduction $(x\app t)\esub{x}{(y\app u)\esub{y}{v\esub{z}{w}}}
\rex{\rlsv} (x\app t)\esub{x}{(v\app u)\esub{y}{v}\esub{z}{w}}$
presented in the previous example is also derivable with these
reduction rules. The derivation contains two main branches.
The left branch checks that an occurrence of $x$ is reachable in the
main term $x\app t$, using the relation~$\rex{\ri{x}}$.
The right branch then performs reduction in the argument of the
substitution. This right branch immediately invokes the auxiliary reduction
\wauxlsv, to conclude in two steps: first moving out the substitution
$\esub{z}{w}$ which hides the value $v$, using a $\sigma$-rule, and then
applying a linear substitution of $y$ by $v$ in the term $y\app u$, using
the relation~$\rex{\rs{y}{v}}$.
\begin{mathpar}
  \inferrule*[Right=\ruleEsRight]{%
    \inferrule*[Left=\ruleAppLeft]{%
      \inferrule*[Left=\ruleId]{~}{x\rex{\ri{x}}x}
    }{x\app t\rex{\ri{x}} x\app t}
    \and\and
    \inferrule*[Right=\ruleLsv]{%
      \inferrule*[Right=\ruleAuxLsvSigma]{%
        \inferrule*[Right=\ruleAuxLsv]{%
          \inferrule*[Left=\ruleAppLeft]{%
            \inferrule*[Left=\ruleSub]{~}{y\rex{\rs{y}{v}}v}
          }{
            y\app u\rex{\rs{y}{v}} v\app u
          }
          \and
          v\text{ value}
        }{
          (y\app u)\esub{y}{v}
          \wauxlsv (v\app u)\esub{y}{v}
        }
      }{
        (y\app u)\esub{y}{v\esub{z}{w}}
        \wauxlsv (v\app u)\esub{y}{v}\esub{z}{w}
      }
    }{
      (y\app u)\esub{y}{v\esub{z}{w}}
      \rex{\rlsv} (v\app u)\esub{y}{v}\esub{z}{w}
    }
  }{
    (x\app t)\esub{x}{(y\app u)\esub{y}{v\esub{z}{w}}}
    \rex{\rlsv} (x\app t)\esub{x}{(v\app u)\esub{y}{v}\esub{z}{w}}
  }
\end{mathpar}

\begin{thm}[Equivalence]\label{thm:equivalence}
  We have the following equivalences between the inductive definition of
  \lwcbn and the evaluation context-based presentation.
  \begin{enumerate}
  \item $t\rex{\rdb}t'$ iff there are \ectx, $r$, and $r'$ such that
    $t=\ectx[r]$, $r\mapsto_\rdb r'$, and $t'=\ectx[r']$;
  \item $t\rex{\rlsv}t'$ iff there are \ectx, $r$, and $r'$ such that
    $t=\ectx[r]$, $r\mapsto_\rlsv r'$, and $t'=\ectx[r']$;
  \item $t\rex{\ri{x}}t$ iff there is \ectx such that $t=\ectx[\plug{x}]$;
  \item $t\rex{\rs{x}{v}}t'$ iff there is \ectx such that $t=\ectx[\plug{x}]$
    and $t'=\ectx[\plug{v}]$.
  \end{enumerate}
\end{thm}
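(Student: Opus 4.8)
My plan is to prove the four equivalences in the order in which they are stated, since each one relies on the previous: item~(3) is self-contained, item~(4) invokes~(3) through rule \ruleEsRight, and items~(1)--(2) invoke both~(3) and~(4). In every case the left-to-right implication is proved by induction on the derivation of the inductive-style judgment, and the right-to-left implication by structural induction on the evaluation context.

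I would begin with item~(3). For $\Rightarrow$, I induct on the derivation of $t\rex{\ri{x}}t$, simultaneously over all variables; the only rules that can conclude a judgment of the form $\cdot\rex{\ri{x}}\cdot$ are \ruleId, \ruleAppLeft, \ruleEsLeft and \ruleEsRight. The axiom \ruleId gives $t=x$, handled by $\ectx=\hole$; \ruleAppLeft, \ruleEsLeft and \ruleEsRight extend the context returned by the induction hypothesis respectively into $\ectx\app u$, $\ectx\esub{y}{u}$, and $\ectx[\plug{y}]\esub{y}{\ectx}$, the last case applying the hypothesis to both premises (the left premise is itself an identity-probe judgment, for the variable bound by the substitution). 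For $\Leftarrow$, I induct on the structure of the context and read the same correspondence backwards, rebuilding the derivation from \ruleId and the three context rules; note that in the production $\ectx[\plug{y}]\esub{y}{\ectx}$ the bound variable coincides with the one probed in the outer sub-context but may differ from the one probed in the inner sub-context, which is precisely the generality that \ruleEsRight provides. Item~(4) follows the same template with \ruleSub, \ruleAppLeft, \ruleEsLeft, \ruleEsRight: the base case \ruleSub produces $x\rex{\rs{x}{v}}v$ (the $\ectx=\hole$ instance), and in the \ruleEsRight case the left premise is discharged by item~(3) while the right premise is handled recursively.

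For items~(1) and~(2) I would first isolate two standalone lemmas about the auxiliary relations. \textbf{Lemma~A:} $t\auxdb t'$ iff $t=(\l{x}.s)\lctx\app u$ and $t'=s\esub{x}{u}\lctx$ for some list \lctx --- proved by induction on the derivation, with \ruleAuxDb as the empty-\lctx case and \ruleAuxDbSigma appending the outermost substitution of \lctx. \textbf{Lemma~B:} $t\wauxlsv t'$ iff $t=s\esub{x}{v\lctx}$ and $t'=s'\esub{x}{v}\lctx$ for some value $v$, list \lctx, and reduction $s\rex{\rs{x}{v}}s'$ --- proved likewise, with \ruleAuxLsv as the base case and \ruleAuxLsvSigma floating the outermost substitution of \lctx out of the argument. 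Composing Lemma~B with item~(4) turns its hypothesis $s\rex{\rs{x}{v}}s'$ into $s=\ectx[\plug{x}]$, $s'=\ectx[\plug{v}]$, so that Lemma~B matches exactly the base rule $\mapsto_\rlsv$. Given these, items~(1) and~(2) go exactly like item~(3): for $\Rightarrow$, induction on the derivation, where \ruleDb (resp.\ \ruleLsv) is discharged by Lemma~A (resp.\ Lemma~B) with $\ectx=\hole$, and \ruleAppLeft, \ruleEsLeft, \ruleEsRight grow the context, the \ruleEsRight case invoking item~(3); for $\Leftarrow$, induction on the context, the $\hole$ case rebuilding the auxiliary derivation from Lemma~A or~B together with an inner induction on \lctx (one use of \ruleAuxDbSigma, resp.\ \ruleAuxLsvSigma, per substitution), the remaining cases being identical to item~(3).

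The main obstacle I anticipate is not any individual case but the combination of three interacting sources of bookkeeping: the mutual references between $\rex{\rho}$, $\rex{\ri{x}}$ and $\rex{\rs{x}{v}}$; the lists \lctx threaded through the auxiliary relations, which must be introduced and eliminated in precisely the order matching their nesting via the $\sigma$-rules; and the name handling of the plugging operations $\ectx[r]$ versus $\ectx[\plug{r}]$. The inductive presentation was designed precisely so as to avoid manipulating evaluation contexts and $\alpha$-equivalence, so this equivalence proof is exactly where that cost has to be paid back: one has to keep every statement that mentions a context stable under $\alpha$-renaming, and make sure that the reference of \ruleEsRight to $\ri{y}$ for an arbitrary variable $y$ does not compromise well-foundedness --- carrying out every $\Leftarrow$ direction by structural induction on the context, which is oblivious to the probed variable, makes that worry disappear.
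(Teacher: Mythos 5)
Your proof is correct, but it is organized differently from the paper's. The paper proves the four items of the theorem \emph{together} with the equivalence $t\wauxlsv t' \Leftrightarrow t\mapsto_\rlsv t'$ as a single five-way mutual induction, explicitly on the grounds that the rules for \wauxlsv mention \rex{\rho} while the base rule $\mapsto_\rlsv$ mentions an evaluation context, so that neither characterization seems derivable before the other. Your stratification breaks this cycle: by stating Lemma~B as a \emph{context-free} characterization of \wauxlsv --- in terms of the relation \rex{\rs{x}{v}} and a list \lctx, rather than in terms of $\mapsto_\rlsv$ --- you obtain a statement whose two directions are standalone inductions (on the derivation, resp.\ on the length of \lctx), and the evaluation context only enters afterwards, when Lemma~B is composed with item~(4), which you have already established independently thanks to the observation that the rules deriving \rex{\ri{x}} and \rex{\rs{x}{v}} judgments never pass through \ruleDb or \ruleLsv. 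The case analyses at the leaves are the same as the paper's, so nothing is saved in raw length, but the layered structure --- (3), then (4), then Lemmas A and B, then (1)--(2) --- gives each induction a single motive, which is easier to check and to formalize; the paper's mutual induction is more compact to state but requires keeping five induction hypotheses coherent at once. One point to keep explicit when you write it out: in the \ruleEsRight cases the probed variable changes between the two premises (the left premise probes the bound variable $y$, the right premise carries the original $\rho$), so items~(3) and~(4) must indeed be proved for all variables simultaneously, as you note, and the freshness convention $y\neq x$, $y\notin\fv(v)$ is what keeps the reassembled context $\ecx_1\fplug{y}\esub{y}{\ecx_2}$ well formed.
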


The proof of this theorem also requires statements relating the auxiliary
reduction relations \auxdb and \wauxlsv to the base reduction rules
$\mapsto_\rdb$ and $\mapsto_\rlsv$.
\begin{lem}[Equivalence]\label{lem:equivalence}
  We have the following equivalences between the inductive definition of
  the auxiliary rules of \lwcbn and the base reduction rules of the
  evaluation context-based presentation.
  \begin{enumerate}
  \item \label{lem:equivalence:db}
    $t\auxdb t'$ iff $t\mapsto_\rdb t'$;
  \item \label{lem:equivalence:lsv}
    $t\wauxlsv t'$ iff $t\mapsto_\rlsv t'$.
  \end{enumerate}
\end{lem}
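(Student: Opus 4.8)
The plan is to prove each of the two equivalences by a pair of inductions, one per direction. The implications from the auxiliary relation to the base rule (\autoref{lem:equivalence}, directions ``only if'') go by induction on the derivation of $t\auxdb t'$, resp.\ $t\wauxlsv t'$; the converse implications go by induction on the list $\lctx$ of explicit substitutions that witnesses the base rule $\mapsto_\rdb$, resp.\ $\mapsto_\rlsv$. In all four inductions the correspondence is uniform: the base rules \ruleAuxDb and \ruleAuxLsv match the case where $\lctx$ is empty, while the $\sigma$-rules \ruleAuxDbSigma and \ruleAuxLsvSigma match the step where one peels off (going one way) or adds back (going the other way) the outermost explicit substitution of $\lctx$.

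For the first equivalence: from $\auxdb$ to $\mapsto_\rdb$, a derivation ending with \ruleAuxDb gives $t=(\l{x}.s)\app u$ and $t'=s\esub{x}{u}$, which is a $\mapsto_\rdb$-redex with $\lctx=\hole$; a derivation ending with \ruleAuxDbSigma has premise $r\app u\auxdb p$ with $t=r\esub{y}{w}\app u$, and the induction hypothesis provides $\lctx_0,x,s$ with $r=(\l{x}.s)\lctx_0$ and $p=s\esub{x}{u}\lctx_0$, so $t$ is a $\mapsto_\rdb$-redex with $\lctx=\lctx_0\esub{y}{w}$ and $t'=p\esub{y}{w}=s\esub{x}{u}\lctx$. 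Conversely, given $t=(\l{x}.s)\lctx\app u$ and $t'=s\esub{x}{u}\lctx$, the case $\lctx=\hole$ is exactly \ruleAuxDb, and for $\lctx=\lctx_0\esub{y}{w}$ the induction hypothesis on the shorter list $\lctx_0$ followed by an application of \ruleAuxDbSigma yields $t\auxdb t'$.

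The second equivalence follows the same pattern, with one extra ingredient: the premise $s\rex{\rs{x}{v}}s'$ of \ruleAuxLsv must be put in the form $s=\ectx[\plug{x}]$, $s'=\ectx[\plug{v}]$, which is precisely \autoref{thm:equivalence}(4). With this, a derivation ending with \ruleAuxLsv gives the $\mapsto_\rlsv$-redex $\ectx[\plug{x}]\esub{x}{v}\mapsto_\rlsv\ectx[\plug{v}]\esub{x}{v}$ with $\lctx=\hole$; a derivation ending with \ruleAuxLsvSigma peels off the outermost substitution $\esub{y}{w}$ of the argument and, via the induction hypothesis, recovers $\ectx,v,\lctx_0$ with $s=\ectx[\plug{x}]$ and with the argument equal to $v\lctx_0$, giving a $\mapsto_\rlsv$-redex with $\lctx=\lctx_0\esub{y}{w}$. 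For the converse, given $t=\ectx[\plug{x}]\esub{x}{v\lctx}$ and $t'=\ectx[\plug{v}]\esub{x}{v}\lctx$, the case $\lctx=\hole$ uses \autoref{thm:equivalence}(4) to obtain $\ectx[\plug{x}]\rex{\rs{x}{v}}\ectx[\plug{v}]$ and then \ruleAuxLsv, while the case $\lctx=\lctx_0\esub{y}{w}$ uses the induction hypothesis on $\lctx_0$ followed by \ruleAuxLsvSigma.

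I do not expect a conceptual obstacle; the work is two pieces of bookkeeping. First, the dependency order: the proof invokes parts~3 and~4 of \autoref{thm:equivalence}, so those must be established before this lemma. This is harmless, since parts~3 and~4 concern only the rules \ruleId, \ruleSub, \ruleAppLeft, \ruleEsLeft and \ruleEsRight and never mention $\auxdb$ or $\wauxlsv$ — there is no circularity, and parts~1 and~2 of \autoref{thm:equivalence}, which do rely on the present lemma, come last. Second, and more delicate in a formal development, is the handling of $\alpha$-equivalence and of the freshness of the variables bound by $\lctx$ when floating an explicit substitution past an application or out of the value of an \rlsv-redex; this is exactly the plumbing whose awkwardness under the evaluation-context presentation motivated the inductive rules, so here it should reduce to routine reasoning on the syntactic constructors.
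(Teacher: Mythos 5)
Your proof is correct, and the case analysis (base rules versus $\sigma$-rules, matched against the length of \lctx) coincides step for step with the paper's. Where you genuinely differ is in the induction architecture. The paper proves all five equivalences — the four items of \autoref{thm:equivalence} and item~\ref{lem:equivalence:lsv} of \autoref{lem:equivalence} — in one simultaneous mutual induction, and explicitly remarks that the second equivalence ``has to be proved in a mutual induction with the main theorem.'' You observe instead that the dependency graph is stratified: a derivation of a step labelled \ri{x} or \rs{x}{v} can only pass through \ruleId, \ruleSub, \ruleAppLeft, \ruleEsLeft, and \ruleEsRight (the rules \ruleDb and \ruleLsv produce only the labels \rdb and \rlsv, and the congruence rules preserve the label), so items~3 and~4 of \autoref{thm:equivalence} form a closed system provable first, by a small mutual induction between themselves; \autoref{lem:equivalence} then follows by two ordinary, self-contained inductions, invoking item~4 as an already-established fact in the \ruleAuxLsv case. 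This buys a cleaner modular structure — and would simplify a mechanization, where large mutual inductions are costly — at no loss of content; the paper's single mutual induction buys only the convenience of not having to argue the stratification. The one point worth making explicit in your write-up is the inversion used in the \ruleAuxLsvSigma case of the left-to-right direction: from $t\esub{x}{u}\mapsto_\rlsv t'$ you must extract that the redex decomposes with the \emph{outermost} substitution as the one carrying the value, i.e.\ $t=\ectx[\plug{x}]$, $u=v\lctx_0$, $t'=\ectx[\plug{v}]\esub{x}{v}\lctx_0$; you state this, and it is the same inversion the paper performs, so nothing is missing.
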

The second equivalence has to be proved in a mutual induction with the
main theorem, since the inference rules for \wauxlsv use \rex{\rho} and
the base rule $\mapsto_\rlsv$ uses an evaluation context.
The first equivalence however is a standalone property.
\begin{proof}[Proof of \autoref{lem:equivalence} \autoref{lem:equivalence:db}.]
  The implication from left to right is a simple induction on the derivation
  of $t\auxdb t'$, while the reverse implication is proved by induction on the
  length of the list \lctx.
\end{proof}

\begin{proof}[Proof of \autoref{thm:equivalence} and \autoref{lem:equivalence} \autoref{lem:equivalence:lsv}.]
  The five equivalences are proved together. Implications from left to right
  are proved by a single induction on the derivation of $t\rex{\rho}t'$
  or $t\wauxlsv t'$, whereas implications from right to left are proved by a
  single induction on the evaluation context \ectx and on the length of the
  list \lctx.
  In the detailed proof, we extend the notation $\mapsto_\rho$
  for any $\rho$, by defining $x\mapsto_{\ri{x}}x$ and $x\mapsto_{\rs{x}{v}}v$.

  \noindent
  \emph{Implications from left to right, by mutual induction.}
  Consider a step \rex{\rho}.
  \begin{itemize}
  \item Case $t\app u\rex{\rho} t'\app u$ by rule \textsc{app-left}
    with $t\rex{\rho}t'$.
    By induction hypothesis, there are an evaluation context \ectx
    and two terms $r$ and $r'$, such that
    $t=\ectx[r]$, $r\mapsto_\rho r'$, and $t'=\ectx[r']$.
    We conclude using the context $\ecx' = \ectx\app u$, which is an
    evaluation context.
  \item Case $t\esub{x}{u}\rex{\rho} t'\esub{x}{u}$ by
    rule \textsc{es-left}
    % $t\rex{\rho}t'$
    is identical.
  \item Case $t\esub{x}{u}\rex{\rho}t\esub{x}{u'}$ by rule
    \textsc{es-right} with $t\rex{\ri{x}}t$ and $u\rex{\rho}u'$.
    By induction hypotheses, we have $\ecx_1$ such that $t=\ecx_1(\plug{x})$
    and $\ecx_2$, $r$, and $r'$, such that $u=\ecx_2(r)$, $r\mapsto_\rho r'$,
    and $u'=\ecx_2(r')$.
    We conclude using the context $\ecx' = \ecx_1(\plug{x})\esub{x}{\ecx_2(\hole)}$,
    which is an evaluation context.
  \item Cases $x\rex{\ri{x}}x$ and $x\rex{\rs{x}{v}}v$ by rules
    \textsc{id} and \textsc{sub} are immediate with the empty context.
  \item Case $t\rex{\rdb}t'$ by rule \textsc{dB} with $t\auxdb t'$ follows from
    \autoref{lem:equivalence} \autoref{lem:equivalence:db}.
  \item Case $t\rex{\rlsv}t'$ by rule \textsc{lsv} with $t\wauxlsv t'$ follows from
    \autoref{lem:equivalence} \autoref{lem:equivalence:lsv}, by
    mutual induction.
  \end{itemize}
  Consider a step \wauxlsv.
  \begin{itemize}
  \item Case $t\esub{x}{v} \wauxlsv t'\esub{x}{v}$ by rule \textsc{lsv-base}
    with $t \rex{\rs{x}{v}} t'$ and $v$ value.
    By (mutual) induction hypothesis, there is \ecx such that
    $t = \ectx[\plug{x}]$ and $t' = \ectx[\plug{v}]$.
    Then we have $\ectx[\plug{x}]\esub{x}{v} \mapsto_\rlsv \ectx[\plug{v}]\esub{x}{v}$.
  \item Case $t\esub{x}{u\esub{y}{w}} \wauxlsv t'\esub{y}{w}$ by
    rule \textsc{lsv-$\sigma$} with $t\esub{x}{u} \wauxlsv t'$.
    By induction hypothesis, $t\esub{x}{u} \mapsto_\rlsv t'$.
    Then necessarily there are \ecx, $v$, and \lctx, with $v$ a value, such that
    $t=\ectx[\plug{x}]$, $u=v\lctx$, and $t'=\ectx[\plug{v}]\esub{x}{v}\lctx$.
    Then $t\esub{x}{u\esub{y}{w}} = \ectx[\plug{x}]\esub{x}{v\lctx\esub{y}{w}}
    \mapsto_\rlsv \ectx[\plug{v}]\esub{x}{v}\lctx\esub{y}{w} = t'\esub{y}{w}$.
  \end{itemize}

  \noindent
  \emph{Implications from right to left, by mutual induction.}
  Consider a context and a reduction $r\mapsto_\rho r'$.
  \begin{itemize}
  \item Case $\ecx = \hole$. By case on $\rho$.
    \begin{itemize}
    \item If $r\mapsto_\rdb r'$ then $\ectx[r]=r\rex{\rdb}r'=\ectx[r']$
      by \autoref{lem:equivalence} \autoref{lem:equivalence:db}.
    \item If $r\mapsto_\rlsv r'$ then $\ectx[r]=r\rex{\rlsv}r'=\ectx[r']$
      by mutual induction hypothesis.
    \item If $r\mapsto_{\ri{x}}r'$ or $r\mapsto_{\rs{x}{v}}r'$ then
      conclusion is by a base rule of \rex{\rho}.
    \end{itemize}
  \item Case $\ectx\app u$.
    By induction hypothesis, $\ectx[r]\rex{\rho}\ectx[r']$, and
    then by rule \textsc{app-left}, $\ectx[r]\app u\rex{\rho}\ectx[r']\app u$.
  \item Case $\ectx\esub{x}{u}$ is identical.
  \item Case $\ecx_1(\plug{x})\esub{x}{\ecx_2(\hole)}$.
    By induction hypotheses, $\ecx_1(\plug{x}) \rex{\ri{x}} \ecx_1(\plug{x})$
    and $\ecx_2(r) \rex{\rho} \ecx_2(r')$, and we conclude by rule
    \textsc{es-right}.
  \end{itemize}
  Consider a step $\mapsto_\rlsv$.
  \begin{itemize}
  \item Case $\ectx[\plug{x}]\esub{x}{v} \mapsto_\rlsv \ectx[\plug{v}]\esub{x}{v}$.
    Conclusion is immediate by mutual induction hypothesis.
  \item Case $\ectx[\plug{x}]\esub{x}{v\lctx\esub{y}{w}} \mapsto_\rlsv \ectx[\plug{v}]\esub{x}{v}\lctx\esub{y}{w}$.
    By induction hypothesis,
    $\ectx[\plug{x}]\esub{x}{v\lctx} \wauxlsv \ectx[\plug{v}]\esub{x}{v}\lctx$,
    and we conclude by rule \textsc{lsv-$\sigma$}. \qedhere
  \end{itemize}
\end{proof}

\section{Strong call-by-need calculus \texorpdfstring{\lscbn}{lambda-sn}}
\label{sec:strong-cbn}

Our strong call-by-need calculus is defined by the same terms and reduction
rules as~\lc, with restrictions on where the reduction rules can be applied.
These restrictions ensure in particular that only the needed redexes are
reduced. Notice that \rgc-reduction is never needed in this calculus
and will thus be ignored from now on.

\subsection{Reduction in \texorpdfstring{\lscbn}{lambda-sn}}
\label{sec:red-lscbn}

The main reduction relation is written $t\red t'$ and represents
one step of \rdb- or \rlsv-reduction at an eligible position of the term $t$.
The starting point is the inductive definition of weak call-by-need
reduction, which we now extend to account for strong reduction.

\paragraph*{Top-level-like positions and frozen variables.}
Strong reduction brings new behaviors that cannot be observed in weak
reduction. For instance, consider the top-level
term $\l{x}.x\app t\app u$. It is an abstraction, which would not need to be
further evaluated in weak call-by-need. Here however, we have to reduce it
further to reach its putative normal form. So, let us gather some knowledge
on the term. Given its position, we know that this abstraction will
never be applied to an argument. This means in particular that its variable
$x$ will never be substituted by anything; it is blocked and is now part
of the rigid structure of the term. Following~\cite{SCBN-ICFP}, we say that
variable $x$ is \emph{frozen}. As for the arguments $t$ and $u$ given to
the frozen variable~$x$, they will always remain at their respective
positions and their neededness is guaranteed. So, the calculus allows
their reduction. Moreover, these subterms $t$ and $u$ will never be applied to
other subterms; they
are in \emph{top-level-like} positions and can be treated as independent
terms. In particular, assuming that the top-level term is $\l{x}.x\app (\l{y}.t')\app u$
(that is, $t$ is the abstraction~$\l{y}.t'$), the variable $y$
will never be substituted and both variables $x$ and $y$ can be seen as
frozen in the subterm $t'$.

Let us now consider the top-level term $(\l{x}.x\app (\l{y}.t')\app u)\app v$,
\emph{i.e.}, the previous one applied to some argument $v$. The analysis becomes
radically different. Indeed, both abstractions in this term are at positions
where they may eventually interact with other parts of the term:
$(\l{x}\ldots)$ is already applied to an argument, while $(\l{y}.t')$ might
eventually be substituted at some position inside $v$ whose properties are
not yet known.
Thus, none of the abstractions is at a top-level-like position and we cannot
rule out the possibility that some occurrences of $x$ or $y$ become substituted
eventually. Consequently, neither $x$ nor $y$ can be considered as frozen.
In addition, notice that the subterms $\l{y}.t'$ and $u$ are not even
guaranteed to be needed in $(\l{x}.x\app (\l{y}.t')\app u)\app v$.
Thus our calculus shall prevent them from being reduced until it gathers
more information about $v$.

Therefore, the positions of a subterm $t$ that are eligible for
reduction largely depend on the context surrounding $t$. They depend in
particular on the set of variables that are frozen by this context, which is
itself mutually dependent with the notion of top-level-like positions.
To characterize top-level-like positions and frozen variables, we use a
judgment $\ctx\vdash\mu,\varphi$ where
$\ctx$ is a context,
$\mu$ is a boolean flag whose value can be $\top$ or $\bot$, and
$\varphi$ is a set of variables.
A context hole \ctx is in a top-level-like position
if one can derive $\ctx\vdash\top,\varphi$. A variable $x$ is frozen in
the context \ctx if one can derive $\ctx\vdash\mu,\varphi$ with $x\in\varphi$.

\begin{figure}[t]
\begin{mathpar}
  \inferrule[\ruleTopLevel]
    {~}
    {\bullet \vdash \top,\varphi}
    \and
  \inferrule[\ruleAbsTop]
    {\ctx \vdash \top,\varphi}
    {\ctx[ \lambda x. \bullet ] \vdash \top,\varphi \cup \{x\}}
    \and
  \inferrule[\ruleAbsBot]
    {\ctx \vdash \bot,\varphi}
    {\ctx[ \lambda x. \bullet ] \vdash \bot,\varphi}
    \\
  \inferrule[\ruleAppLeft]
    {\ctx \vdash \mu,\varphi}
    {\ctx[ \bullet\app t ] \vdash \bot,\varphi}
    \and
  \inferrule[\ruleAppRight]
    {\ctx \vdash \mu,\varphi}
    {\ctx[ t\app\bullet ] \vdash \bot,\varphi}
    \and
  \inferrule[\ruleAppRightStruct]
    {\ctx \vdash \mu,\varphi \and t\in \struct[\varphi]}
    {\ctx[ t\app\bullet ] \vdash \top,\varphi}
    \\
  \inferrule[\ruleEsLeft]
    {\ctx \vdash \mu,\varphi}
    {\ctx[ \bullet\esub{x}{u} ] \vdash \mu,\varphi}
    \and
  \inferrule[\ruleEsRight]
    {\ctx \vdash \mu,\varphi}
    {\ctx[ t\esub{x}{\bullet} ] \vdash \bot,\varphi}
    \and
  \inferrule[\ruleEsLeftStruct]
    {\ctx \vdash \mu,\varphi \and u\in \struct[\varphi]}
    {\ctx[ \bullet\esub{x}{u} ] \vdash \mu,\varphi\cup \{x\}}
\end{mathpar}
\caption{Top-level-like positions and frozen variables.}
\label{fig:frozen}
\end{figure}

\autoref{fig:frozen} gives an inductive definition of the judgment
$\ctx\vdash\mu,\varphi$.
A critical aspect of this definition is the direction of the flow of
information. Indeed, we deduce information on the hole of the context from the
context itself. That is, information is flowing outside-in: from top-level
toward the hole of the context.

Rule \ruleTopLevel states that the top-level position of a term is
top-level-like, and that any (free) variable can be considered to be frozen
at this point.

Rules \ruleAbsTop and \ruleAbsBot indicate that a variable $x$ bound by an
abstraction is considered to be frozen if this abstraction is at a
top-level-like position. Moreover, if an abstraction is at a top-level-like
position, then the position of its body is also top-level-like.

Notice that frozen variables in a term $t$ are either free variables of $t$,
or variables introduced by binders in $t$. As such they obey the usual renaming
conventions. In particular, the rules \ruleAbsTop and \ruleAbsBot
implicitly require that the variable $x$ bound by the abstraction
is fresh and hence \emph{not} in the set $\varphi$. We keep this
\emph{freshness convention} in all the definitions of the paper.
In particular, it also applies to the three rules dealing with explicit
substitution.

Rules \ruleAppLeft, \ruleAppRight, \ruleEsLeft, and \ruleEsRight state that
the variables frozen on either side of an application or an explicit
substitution are the variables that are frozen at the level of the application
or explicit substitution itself. Rule \ruleEsLeft indicates in addition that
the top-level-like status of a position is inherited on the left side of
an explicit substitution.

Rule \ruleAppRightStruct strengthens the rule \ruleAppRight
in the specific case where the application is led by a frozen variable.
When this criterion is met, the argument position of an application can be
considered as top-level-like.
Rule \ruleEsLeftStruct strengthens the rule \ruleEsLeft when the argument
of an explicit substitution is led by a frozen variable. In this case,
the variable of the substitution can itself be considered as frozen.

\begin{figure}[t]
  \begin{mathpar}
    \inferrule{x\in\varphi}{x\in\struct}
    \and
    \inferrule{t\in\struct}{t\app u\in\struct}
    \and
    \inferrule{t\in\struct}{t\esub{x}{u}\in\struct}
    \and
    \inferrule{t\in\struct[\varphi\cup\set{x}] \and u\in\struct}{t\esub{x}{u}\in\struct}
  \end{mathpar}
  \caption{Structures of \lscbn.}
  \label{fig:structures}
\end{figure}

The specific criterion of the last two rules is made formal through the
notion of \emph{structure}, which is an application
$x\app t_1\app \ldots\app t_n$ led by a frozen variable $x$, possibly
interlaced with explicit substitutions (\autoref{fig:structures}).
We write \struct the set of structures under a set $\varphi$ of frozen
variables.
It differs from the notion in~\cite{SCBN-ICFP} in that it does not
require the term to be in normal form. For example, we have $x\app t
\in \struct[\{x\}]$ even if $t$ still contains some redexes.

The reason for considering structures in rules \ruleAppRightStruct and
\ruleEsLeftStruct is that they cannot reduce to an abstraction, even
under an explicit substitution. For example, $x\esub{x}{\lambda z.t}$
is not a structure, whatever the set $\varphi$. See also
\autoref{lem:stab-struct} for another take on structures.

Notice that, by our freshness condition, it is assumed that $x$ is
\emph{not} in the set $\varphi$ in the third and fourth rules of
\autoref{fig:structures}.
As implied by the last rule in \autoref{fig:structures}, an explicit
substitution in a structure may even affect the leading variable,
provided that the content of the substitution is itself a structure,
\emph{e.g.}, $(x\app t)\esub{x}{y\app u} \in \struct[\{y\}]$.

\paragraph*{Example.}
Consider the term $\l{x}.((\l{z}.x\app z\app (\l{y}.y\app t)\app(z\app u))\app v)$,
where $t$, $u$, and $v$ are arbitrary, not necessarily closed, subterms.
The above rules allow us to derive that the subterm
$x\app z\app(\l{y}.y\app t)\app(z\app u)$
is at a position where $x$ is a frozen variable.
\begin{mathpar}
  \inferrule*[Right=\ruleAbsBot]{%
    \inferrule*[Right=\ruleAppLeft]{%
      \inferrule*[Right=\ruleAbsTop]{%
        \inferrule*[Right=\ruleTopLevel]{~}{%
          \hole\vdash\top,\emptyset
        }
      }{%
        \l{x}.\hole\vdash\top,\set{x}
      }
    }{%
      \l{x}.(\hole\app v)\vdash\bot,\set{x}
    }
  }{%
    \l{x}.((\l{z}.\hole)\app v)\vdash\bot,\set{x}
  }
\end{mathpar}

This position, however, is \emph{not} top-level-like, as it is the position of
the body of an applied \l-abstraction. This can be observed in the previous
derivation by the unavoidable application of the rule \ruleAppLeft.
Independently of the fact that its position is not top-level-like, the subterm
$x\app z\app(\l{y}.y\app t)\app(z\app u)$ is a structure, as it is led by the
frozen variable $x$.
We can then build a follow-up of this derivation, first deriving with a
combination of the rules \ruleAppLeft and \ruleAppRightStruct that the second
argument $\l{y}.y\app t$ of this structure \emph{is} at a top-level-like
position, then deducing that $y$ is frozen and that the subterm $t$ is at a
top-level-like position.
\begin{mathpar}
  \inferrule*[Left=\ruleAppRightStruct]{%
    \inferrule*[Left=\ruleAbsTop]{%
      \inferrule*[Left=\ruleAppRightStruct]{%
        \inferrule*[Left=\ruleAppLeft]{%
          \l{x}.((\l{z}.\hole)\app v)\vdash\bot,\set{x}
        }{%
          \l{x}.((\l{z}.\hole\app(z\app u))\app v)\vdash\bot,\set{x}
        }
        \and
        x\app z\in\struct[\set{x}]
      }{%
        \l{x}.((\l{z}.(x\app z\app\hole)\app(z\app u))\app v)\vdash\top,\set{x}
      }
    }{%
      \l{x}.((\l{z}.x\app z\app(\l{y}.\hole)\app(z\app u))\app v)\vdash\top,\set{x,y}
    }
    \and
    y\in\struct[\set{x,y}]
  }{%
    \l{x}.((\l{z}.x\app z\app(\l{y}.y\app \hole)\app(z\app u))\app v)\vdash\top,\set{x,y}
  }
\end{mathpar}

Conversely, remark that the subterm $z\app u$, which it is at a top-level-like
position, is not a structure since $z$ is not frozen. Thus, $u$ is not at a
top-level-like position.

\paragraph*{Inductive definition of strong call-by-need reduction.}
The definition of reduction in our strong call-by-need calculus extends the
inductive definition given in Section~\ref{sec:weak-cbn}. First of all,
reductions have to be allowed, at least in some circumstances,
inside \l-abstractions and on the right side of an application. The criteria
we apply are:
\begin{itemize}
\item reduction under a needed \l-abstraction is always allowed, and
\item reduction on the right side of an application is allowed if and only
  if this application is both needed and a structure (\emph{i.e.,} led by
  a frozen variable).
\end{itemize}

Applying the latter criterion requires identifying the frozen variables, and
this identification in turn requires keeping track of the
top-level-like positions. For this, we define an inductive reduction relation
$t\rescbn{\rho}{\varphi}{\mu}t'$ parameterized by a reduction rule $\rho$ and by
some context information $\varphi$ and $\mu$. This relation plays two roles: identifying a
position where a reduction rule can be applied in $t$, and performing said
reduction.
The information on the context of the subterm $t$ mirrors
exactly the information given by a judgment $\ctx\vdash\mu,\varphi$. We thus
have:
\begin{itemize}
\item a flag $\mu$ indicating whether $t$ is at a top-level-like
  position ($\top$) or not ($\bot$);
\item the set $\varphi$ of variables that are frozen at the considered
  position.
\end{itemize}

Consequently, the top-level reduction relation $t\red t'$ holds whenever $t$
reduces to $t'$ by rule \rdb or \rlsv in the empty context. In other words,
$t\rescbn{\rdb}{\varphi}{\mu}t'$ or $t\rescbn{\rlsv}{\varphi}{\mu}t'$,
where the flag $\mu$ is $\top$, and the set $\varphi$ is typically empty
when $t$ is closed, or contains the free variables of $t$ otherwise.

The inference rules for $\rescbn{\rho}{\varphi}{\mu}$ are given in
\autoref{fig:scbn-rules}.
Notice that $\rho$ flows inside-out, \emph{i.e.}, from the position of the
reduction itself to top-level, as it was already the case in the definition
of weak call-by-need reduction (\autoref{fig:wcbn-rules}).
On the contrary, information about $\varphi$ and~$\mu$ flows outside-in, that
is from top-level to the position of the reduction. This also reflects the
information flow already seen in the definition of the judgment
$\ctx\vdash\mu,\varphi$. However, this now appears has an \emph{upward} flow
of information in the inference rules, which may seem like a reversal.
This comes from the fact that the inference rules in \autoref{fig:scbn-rules}
focus on terms, while \autoref{fig:frozen} focused on contexts.
Notice also that we do not rely explicitly on the judgment
$\ctx\vdash\mu,\varphi$ in the inference rules. Instead, we blend the
inference rules defining this judgment into the reduction rules.

Rule \ruleAppLeft makes reduction always possible on the
left of an application, but as shown by the premise, this position is not a
$\top$ position. Rule \ruleAppRight on the other hand allows reducing
on the right of an application, and even doing so in $\top$ mode, but only
when the application is led by a frozen variable. The latter is expressed
using the notion of structure defined above (\autoref{fig:structures}).

Rules \ruleAbsTop and \ruleAbsBot make reduction always possible
inside a \l-abstraction, \emph{i.e.}, unconditional strong reduction.
If the abstraction is in a $\top$ position,
its variable is added to the set of frozen variables while reducing
the body of the abstraction.
Rules \ruleEsLeft and \ruleEsLeftFrozen show that it is always possible to
reduce a term affected by an explicit substitution. If the
substitution contains a structure, the variable bound by the
substitution can be added to the set of frozen variables.
Rule \ruleEsRight restricts reduction inside a substitution to the case where
an occurrence of the substituted variable is at a reducible position.
It uses the auxiliary rule \ri{x} already discussed in Section~\ref{sec:weak-cbn},
which propagates using the same inductive reduction relation, to probe a term
for the presence of some variable~$x$ at a reducible position. By freshness,
in all the rules with a binder, the bound variable $x$ can appear neither in $\varphi$
nor in $\rho$, that is, if $\rho = \ri{y}$ then
$x\neq y$, and if $\rho = \rs{y}{v}$ then $x\neq y$ and $x\not\in\fv(v)$.

\begin{figure}[t]
  \begin{mathpar}
  \inferrule[\ruleAppLeft]{t \rescbn{\rho}{\varphi}{\bot} t'}{%
    t\app u \rescbn{\rho}{\varphi}{\mu} t'\app u}
  \and
  \inferrule[\ruleAppRight]{t\in\struct
    \and u \rescbn{\rho}{\varphi}{\top} u'}{%
    t\app u \rescbn{\rho}{\varphi}{\mu} t\app u'}
  \and
  \inferrule[\ruleAbsTop]{%
    t \rescbn{\rho}{\varphi\cup\set{x}}{\top} t'}{%
    \l{x}.t \rescbn{\rho}{\varphi}{\top} \l{x}.t'}
  \and
  \inferrule[\ruleAbsBot]{%
    t \rescbn{\rho}{\varphi}{\bot} t'}{%
    \l{x}.t \rescbn{\rho}{\varphi}{\bot} \l{x}.t'}
  \and
  \inferrule[\ruleEsLeft]{%
    t \rescbn{\rho}{\varphi}{\mu} t'}{%
    t\esub{x}{u} \rescbn{\rho}{\varphi}{\mu} t'\esub{x}{u}}
  \and
  \inferrule[\ruleEsLeftFrozen]{%
    t \rescbn{\rho}{\varphi\cup\set{x}}{\mu} t' \and u\in\struct}{%
    t\esub{x}{u} \rescbn{\rho}{\varphi}{\mu} t'\esub{x}{u}}
  \and
  \inferrule[\ruleEsRight]{%
    t \rescbn{\ri{x}}{\varphi}{\mu} t%
    \and u \rescbn{\rho}{\varphi}{\bot} u'}{%
    t\esub{x}{u} \rescbn{\rho}{\varphi}{\mu} t\esub{x}{u'}}
  \and
  \inferrule[\ruleId]{~}{x \rescbn{\ri{x}}{\varphi}{\mu} x}
  \and
  \inferrule[\ruleSub]{~}{x \rescbn{\rs{x}{v}}{\varphi}{\mu} v}
  \and
  \inferrule[\ruleDb]{t \auxdb t'}{t \rescbn{\rdb}{\varphi}{\mu} t'}
  \and
  \inferrule[\ruleLsv]{%
    t \auxlsv{\varphi}{\mu} t'}{%
    t \rescbn{\rlsv}{\varphi}{\mu} t'}
  \end{mathpar}
  \caption{Reduction rules for \lscbn.}
  \label{fig:scbn-rules}
\end{figure}

Rules \ruleDb and \ruleLsv are the base cases for applying reductions \rdb or
\rlsv. As in Section~\ref{sec:weak-cbn}, each is defined using an auxiliary
reduction relation dealing with the list \lctx of explicit substitutions.
These auxiliary reductions are given in \autoref{fig:aux-rules}. They differ
from the ones from \autoref{fig:waux-rules} in two ways.

First, since the inference rule \ruleAuxLsv uses as a premise a reduction
$t\rescbn{\rs{x}{v}}{\varphi}{\mu}t'$, the auxiliary relation
\auxlsv{\varphi}{\mu} has to keep track of the context information. It is
thus parameterized by $\varphi$ and $\mu$.
The combination of these parameters and of the rules \ruleLsv and \ruleAuxLsv
makes it possible, in the case of a \rlsv-reduction, to resume the search for a
reducible variable in the context in which the substitution has been found
(instead of resetting the context). In~\cite{SCBN-ICFP}, a similar effect was
achieved using a more convoluted condition on a composition of contexts.

\begin{figure}[t]
  \begin{mathpar}
  \inferrule[\ruleAuxDb]{~}{(\l{x}.t)\app u \auxdb t\esub{x}{u}}
  \and
  \inferrule[\ruleAuxLsv]{%
    t \rescbn{\rs{x}{v}}{\varphi}{\mu} t'%
    \and v\text{ value}}{%
    t\esub{x}{v} \auxlsv{\varphi}{\mu} t'\esub{x}{v}}
  \\
  \inferrule[\ruleAuxDbSigma]{t\app u \auxdb z}{%
    t\esub{x}{w}\app u \auxdb z\esub{x}{w}}
  \and
  \inferrule[\ruleAuxLsvSigma]{%
    t\esub{x}{u} \auxlsv{\varphi}{\mu} t'}{%
    t\esub{x}{u\esub{y}{w}} \auxlsv{\varphi}{\mu} t'\esub{y}{w}}
  \and
  \inferrule[\ruleAuxLsvSigmaFrozen]{%
    t\esub{x}{u} \auxlsv{\varphi\cup\set{y}}{\mu} t'
    \and w\in\struct}{%
    t\esub{x}{u\esub{y}{w}} \auxlsv{\varphi}{\mu} t'\esub{y}{w}}
  \end{mathpar}
  \caption{Auxiliary reduction rules for \lscbn.}
  \label{fig:aux-rules}
\end{figure}

Second, a new rule \ruleAuxLsvSigmaFrozen appears, which strengthens the
already discussed \ruleAuxLsvSigma rule in the same way that \ruleEsLeftStruct
strengthens \ruleEsLeft.
The difference between \ruleAuxLsvSigma and \ruleAuxLsvSigmaFrozen
can be ignored until Section~\ref{sec:optimal-strategies}.

\paragraph*{Example.}
The reduction
$(\l{a}.a\app x)\esub{x}{(\l{y}.t)\esub{z}{u}\app v}
\red(\l{a}.a\app x)\esub{x}{t\esub{y}{v}\esub{z}{u}}$
is allowed by \lscbn, as shown by the following derivation.
The left branch of the derivation checks
that an occurrence of the variable $x$ is actually at a needed position in $\l{a}.a\app x$, while
its right branch reduces the argument of the substitution.
\begin{mathpar}
  \inferrule*[Right=\ruleEsRight]{%
    \inferrule*[Left=\ruleAbsTop]{%
      \inferrule*[Left=\ruleAppRight]{%
        a\in\struct[\set{a}]
        \and
        \inferrule*[Right=\ruleId]{~}{x\rescbn{\ri{x}}{\set{a}}{\top}x}
      }{%
        a\app x \rescbn{\ri{x}}{\set{a}}{\top} a\app x
      }
    }{%
      \l{a}.a\app x \rescbn{\ri{x}}{\emptyset}{\top} \l{a}.a\app x
    }
    \and
    \inferrule*[Right=\ruleDb]{%
      \inferrule*[Right=\ruleAuxDbSigma]{%
        \inferrule*[Right=\ruleAuxDb]{~}{(\l{y}.t)\app v \auxdb t\esub{y}{v}}
      }{%
        (\l{y}.t)\esub{z}{u}\app v
        \auxdb
        t\esub{y}{v}\esub{z}{u}
      }
    }{%
      (\l{y}.t)\esub{z}{u}\app v
      \rescbn{\rdb}{\emptyset}{\bot}
      t\esub{y}{v}\esub{z}{u}
    }
  }{%
    (\l{a}.a\app x)\esub{x}{(\l{y}.t)\esub{z}{u}\app v}
    \rescbn{\rdb}{\emptyset}{\top}
    (\l{a}.a\app x)\esub{x}{t\esub{y}{v}\esub{z}{u}}
  }
\end{mathpar}

This example also shows that top-level-like positions and evaluation positions
are different sets, even if their definitions are related. Indeed, we can
observe that the position actually reduced by the rule \ruleDb is labeled
by $\bot$. On the other hand, the definition of top-level-like positions in
\autoref{fig:frozen} may label as top-level-like a position that would not be
considered for reduction, as the position of $t$ in the term
$\l{x}.((\l{y}.y\;(x\;t))\;u)$, which is an argument position in a structure
(led by the frozen variable $x$) that is itself in argument position of another
application, the latter not being known to be a structure.

\paragraph{Basic properties}
The calculus \lscbn as we just defined it is not confluent. This can be observed
in the reduction of the term $(\l{x}.x)\app(\l{y}.(\l{z}.z)\app y)$ below.
%% \noindent
\begin{center}
  \begin{tikzpicture}[yscale=1.2]
    \node(base) {$(\l{x}.x)\app(\l{y}.(\l{z}.z)\app y)$};
    \node(split) at (0,-1) {$x\esub{x}{\l{y}.(\l{z}.z)\app y}$};
    \node(left1) at (-4,-2) {$(\l{y}.(\l{z}.z)\app y)\esub{x}{\l{y}.(\l{z}.z)\app y}$};
    \node(left2) at (-4,-3) {$(\l{y}.z\esub{z}{y})\esub{x}{\l{y}.(\l{z}.z)\app y}$};
    \node(right1) at (4, -2) {$x\esub{x}{\l{y}.z\esub{z}{y}}$};
    \node(right2) at (4, -3) {$(\l{y}.z\esub{z}{y})\esub{x}{\l{y}.z\esub{z}{y}}$};

    \draw[->] (base)--(split);
    \draw[->] (split)--(left1);
    \draw[->] (left1)--(left2);
    \draw[->] (split)--(right1);
    \draw[->] (right1)--(right2);
    \draw[double equal sign distance]  (left2)--(right2)
                              node[pos=0.5,sloped] {$/$};
  \end{tikzpicture}
\end{center}

After the first reduction step, we arrive at the term
$x\esub{x}{\l{y}.(\l{z}.z)\app y}$ where the only occurrence of $x$ is at
an evaluation position, and two reductions are possible: substituting the
term $\l{y}.(\l{z}.z)\app y$, which is a value, or reducing the redex
$(\l{z}.z)\app y$ inside this value. The latter case, on the right path,
leads to the normal form $(\l{y}.z\esub{z}{y})\esub{x}{\l{y}.z\esub{z}{y}}$
after reducing the only remaining redex.
The former case, on the left path, can only lead to
$(\l{y}.z\esub{z}{y})\esub{x}{\l{y}.(\l{z}.z)\app y}$, where the redex
$\l{y}.(\l{z}.z)\app y$ in the explicit substitution will never be accessed
again since the variable $x$ does not appear in the term anymore.
However, as a corollary of \autoref{thm:soundness}, we know that such a defect
of confluence may only happen inside an unreachable substitution. Moreover,
we will see in Section~\ref{sec:optimal-strategies} that strict confluence can
be restored in a restriction of the calculus.

Finally, note that the strong call-by-need strategy introduced
in~\cite{SCBN-ICFP} is included in our calculus.
One can recover this strategy by imposing two restrictions on
\rescbn{\rho}{\varphi}{\mu}:
\begin{itemize}
\item remove the rule \ruleAbsBot, so as to only reduce abstractions that are in top-level-like positions;
\item restrict the rule \ruleAppRight to the case where the left member
  of the application is a structure \emph{in normal form},
  since the strategy imposes left-to-right reduction of structures.
\end{itemize}

\subsection{Soundness}
\label{sec:soundness}

The calculus \lscbn is sound with respect to the \l-calculus, in the sense
that any normalizing reduction in \lscbn can be related to a normalizing
$\beta$-reduction through unfolding. This section establishes this result
(\autoref{thm:soundness}).
All the proofs in this section are formalized in Abella (see Section~\ref{sec:abella}).

The first part of the proof requires relating \lscbn-reduction to
$\beta$-reduction.
\begin{lem}[Simulation]\label{lem:simulation}
  If \textup{$t\red t'$} then $t\unfold \re^*_\beta {t'}\unfold$.
\end{lem}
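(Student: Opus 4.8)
The plan is to prove the statement by induction on the derivation of $t\red t'$, which by definition is a derivation of $t\rescbn{\rdb}{\varphi}{\top}t'$ or $t\rescbn{\rlsv}{\varphi}{\top}t'$ with $\varphi$ the free variables of $t$. Since such a derivation contains, through rules \ruleEsRight, \ruleAuxLsv, \ruleDb and \ruleLsv, subderivations for the auxiliary judgements $\rescbn{\ri{x}}{\varphi'}{\mu}$, $\rescbn{\rs{x}{v}}{\varphi'}{\mu}$, $\auxdb$ and $\auxlsv{\varphi'}{\mu}$, the statement has to be generalized so as to carry a suitable invariant through all four shapes. I would prove simultaneously, by mutual induction on derivations: (i) if $t\rescbn{\rho}{\varphi}{\mu}t'$ with $\rho\in\set{\rdb,\rlsv}$ then $t\unfold\reb^*t'\unfold$; (ii) if $t\rescbn{\ri{x}}{\varphi}{\mu}t'$ then $t=t'$; (iii) if $t\rescbn{\rs{x}{v}}{\varphi}{\mu}t'$ then $t\unfold\msub{x}{v\unfold}=t'\unfold\msub{x}{v\unfold}$; (iv) if $t\auxdb t'$ then $t\unfold\reb t'\unfold$; (v) if $t\auxlsv{\varphi}{\mu}t'$ then $t\unfold=t'\unfold$. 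The lemma is (i) at $\mu=\top$. Intuitively, (iii) says that a $\rs{x}{v}$-step only turns some occurrences of $x$ into $v$, which the outer $\msub{x}{v\unfold}$ flattens away, and (v) says that the linear substitution of a value is invisible after unfolding. (One could alternatively derive the lemma from the general fact, recalled in Section~\ref{sec:lambda-c}, that any $\lc$-step $s\rec s'$ gives $s\unfold\reb^*s'\unfold$, since $\red$ is a restriction of $\rec$; but the inductive argument below is what matches the Abella formalization.)

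The genuinely new content lies in the base cases. The $\beta$-step is created by rule \ruleAuxDb: $((\l{x}.t)\app u)\unfold = (\l{x}.t\unfold)\app u\unfold \reb t\unfold\msub{x}{u\unfold} = (t\esub{x}{u})\unfold$. Rule \ruleAuxDbSigma then just pushes the surrounding explicit substitution $\esub{x}{w}$ through this step, using $x\notin\fv(u)$ (freshness) so that $\msub{x}{w\unfold}$ distributes over the application and commutes with the redex; hence (iv) always yields a single $\beta$-step. For (v), rule \ruleAuxLsv reduces directly to (iii), namely $(t\esub{x}{v})\unfold = t\unfold\msub{x}{v\unfold} = t'\unfold\msub{x}{v\unfold} = (t'\esub{x}{v})\unfold$, while \ruleAuxLsvSigma and \ruleAuxLsvSigmaFrozen float an outer explicit substitution $\esub{y}{w}$ out, which is sound after unfolding by the substitution lemma $s\msub{x}{a}\msub{y}{b} = s\msub{x}{a\msub{y}{b}}$ applied with $s = t\unfold$, the freshly bound $y$ satisfying $y\notin\fv(t)$ and $y\ne x$. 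For (ii), \ruleId gives $t=t'$ on the nose, and for (iii), \ruleSub gives $x\unfold\msub{x}{v\unfold} = v\unfold = v\unfold\msub{x}{v\unfold}$ because $x\notin\fv(v)$ by freshness.

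The remaining cases are the structural rules \ruleAppLeft, \ruleAppRight, \ruleAbsTop, \ruleAbsBot, \ruleEsLeft, \ruleEsLeftFrozen, \ruleEsRight, together with \ruleDb and \ruleLsv which merely re-export (iv) and (v). Most are handled uniformly: unfolding commutes with application and with $\lambda$ (modulo the freshness of the bound variable), $\reb^*$ is closed under application contexts, under $\lambda$ and under meta-substitution on either argument, and the side conditions $t\in\struct$ / $u\in\struct$ are irrelevant to unfolding. The one structural rule that needs care is \ruleEsRight, of the form $t\esub{y}{u}\rescbn{\rho}{\varphi}{\mu}t\esub{y}{u'}$ with $t\rescbn{\ri{y}}{\varphi}{\mu}t$ (so $t$ is unchanged, by (ii)) and $u\rescbn{\rho}{\varphi}{\bot}u'$: when $\rho\in\set{\rdb,\rlsv}$ one concludes from $u\unfold\reb^*u'\unfold$ since $t\unfold\msub{y}{u\unfold}\reb^*t\unfold\msub{y}{u'\unfold}$ by reducing within each substituted copy; when $\rho=\rs{x}{v}$ one rewrites both sides with $s\msub{y}{a}\msub{x}{b} = s\msub{x}{b}\msub{y}{a\msub{x}{b}}$ (valid since $y\notin\fv(v\unfold)$ by freshness), applied to $s=t\unfold$, and then invokes (iii) for $u$, using $u\unfold\msub{x}{v\unfold} = u'\unfold\msub{x}{v\unfold}$. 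The main obstacle I anticipate is exactly this bureaucracy of meta-substitution commutations — propagating the invariant (iii) through \ruleEsLeft, \ruleEsLeftFrozen, \ruleEsRight and the $\sigma$-rules — where one leans repeatedly on the substitution lemma and on the freshness convention ensuring that the variable bound by an explicit substitution or a $\lambda$ neither occurs in $v$ nor equals $x$; this is also the part that makes the Abella development nontrivial, whereas the creation of the $\beta$-step in \ruleAuxDb and its propagation are immediate.
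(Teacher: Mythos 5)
Your proposal is correct and follows exactly the route the paper takes: its proof is the one-liner ``by induction on the reduction $t\rescbn{\rho}{\varphi}{\mu}t'$'', and your mutual induction with invariants (ii)--(v) is precisely the generalization needed to make that induction go through the auxiliary judgments $\ri{x}$, $\rs{x}{v}$, \auxdb and \auxlsv{\varphi}{\mu}. The invariants you chose (in particular stating (iii) with the outer meta-substitution $\msub{x}{v\unfold}$, and handling \ruleEsRight and the $\sigma$-rules via the substitution lemma and the freshness convention) are the right ones and the argument is sound.
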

\begin{proof}
  By induction on the reduction $t\rescbn{\rho}{\varphi}{\mu}t'$.
\end{proof}

The second part requires relating the normal forms of \lscbn to
$\beta$-normal forms.
The normal forms of \lscbn correspond to the normal forms of the strong
call-by-need strategy~\cite{SCBN-ICFP}. They can be characterized by the
inductive definition given in \autoref{fig:normalforms}.
Please note a subtlety of this definition: there are two rules for terms
of the form $t\esub{x}{u}$. One applies only when~$u$ is a structure,
and allows freezing the variable $x$ in $t$. The other applies without any
constraint on $u$, but requires $t$ to be a normal form without $x$ being
frozen (by our freshness convention on variables, $\varphi$ cannot contain
$x$). Since variables are considered normal forms only when frozen, the
latter case requires $x$ not to appear in any position of $t$ whose normality
is checked. In fact, for this rule to apply successfully, $x$ may appear only
in positions of $t$ that can be removed by one or more steps of \rgc.
Hence, $(\l{a}.a\app x)\esub{x}{r}$ is not a normal form if $r$ is a redex,
but $(\l{a}.a\app x)\esub{y}{z}\esub{z}{r}$ is, for any term $r$.
\begin{figure}[t]
  \begin{mathpar}
    \inferrule{x\in\varphi}{x\in\nf}
    \and
    \inferrule{t\in\nf \and t\in\struct \and u\in\nf}{t\app u\in\nf}
    \and
    \inferrule{t\in\nf[\varphi\cup\set{x}]}{\l{x}.t\in\nf}
    \and
    \inferrule{t\in\nf[\varphi\cup\set{x}] \and u\in\nf \and u\in\struct}{%
      t\esub{x}{u}\in\nf}
    \and
    \inferrule{t\in\nf}{t\esub{x}{u}\in\nf}
  \end{mathpar}
  \caption{Normal forms of \lscbn.}
  \label{fig:normalforms}
\end{figure}

\begin{lem}[Normal forms]\label{lem:normal-forms-lscbn}
  $t\in\nf$ if and only if there is no reduction
  \textup{$t\rescbn{\rho}{\varphi}{\mu}t'$}.
\end{lem}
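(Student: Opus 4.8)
The plan is to prove the two implications separately, both by induction on the structure of $t$ with the parameters $\varphi$, $\mu$ and $\rho\in\set{\rdb,\rlsv}$ universally quantified, after setting up a few auxiliary facts. First, characterizations of the auxiliary relations analogous to \autoref{lem:equivalence}: $t\auxdb t'$ forces $t$ to have shape $(\l x.s)\lctx\app u$, and $t\auxlsv{\varphi}{\mu}t'$ forces $t$ to have shape $s\esub{x}{v\lctx}$ with $v$ a value and a premise $s\rescbn{\rs{x}{v}}{\varphi'}{\mu}s'$, where $\varphi'$ enlarges $\varphi$ by the variables bound by the structure-carrying substitutions crossed along $\lctx$; each is an induction on $\lctx$. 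Second, rigidity of structures (induction on \autoref{fig:structures}): a structure is never of shape $(\l x.s)\lctx$ nor of shape $v\lctx$ with $v$ a value, so no structure heads a \auxdb-redex and $s\esub{x}{u}$ with $u\in\struct$ is never \auxlsv{\varphi}{\mu}-reducible. Third, monotonicity: enlarging $\varphi$ preserves $\struct$-, $\nf$- and reducibility-membership, and a reduction derivable with flag $\bot$ can be replayed with flag $\top$ (the only mode-sensitive rules are \ruleAbsTop/\ruleAbsBot, and the $\top$ variant asks only more of $\varphi$, which the previous point grants). These are all routine.

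\emph{From left to right: $t\in\nf$ implies irreducibility.} Assume a step $t\rescbn{\rho}{\varphi}{\mu}t'$ and derive a contradiction by inspecting the last rule of the derivation of $t\in\nf$. For $x\in\varphi$: no \ruleDb/\ruleLsv step applies to a bare variable. For $t_1\app t_2$ with $t_1\in\struct$: \ruleAppLeft reduces $t_1$ and \ruleAppRight reduces $t_2$, against the induction hypotheses, while \ruleDb is excluded by rigidity of $t_1$. For $\l x.t_1$: \ruleAbsTop/\ruleAbsBot reduce $t_1$ in $\varphi\cup\set x$ (resp.\ in $\varphi$, hence in $\varphi\cup\set x$ by monotonicity), against the induction hypothesis. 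For $t_1\esub x{t_2}$ by the constrained rule ($t_2\in\struct$): \ruleEsLeft/\ruleEsLeftFrozen reduce $t_1$, \ruleEsRight reduces $t_2$, against the induction hypotheses, and \ruleLsv is excluded by rigidity of $t_2$. For $t_1\esub x{t_2}$ by the unconstrained rule ($t_1\in\nf[\varphi]$): \ruleEsLeft is against the induction hypothesis; \ruleEsLeftFrozen needs $t_2\in\struct$ and reduces $t_1$ in $\varphi\cup\set x$, but $t_1\in\nf[\varphi]$ gives $t_1\in\nf[\varphi\cup\set x]$ by $\nf$-monotonicity, so again against the induction hypothesis; \ruleEsRight and \ruleLsv as before.

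\emph{From right to left: irreducibility implies $t\in\nf$.} I expect the statement has to be strengthened to push the induction through explicit substitutions, along the lines of: if $t$ has no reduction in context $(\varphi,\mu)$, then either $t\in\nf[\varphi]$ or $t$ has, at a reachable position, an occurrence of a variable outside $\varphi$; the lemma then follows from the standing convention $\fv(t)\subseteq\varphi$. I would prove the strengthened statement contrapositively, by induction on $t$. The variable case is immediate (a bare variable is reachable). For $\l x.t_1$: irreducibility (via \ruleAbsTop, \ruleAbsBot, and the $\bot\to\top$ replay) gives irreducibility of $t_1$ in $\varphi\cup\set x$, and the induction hypothesis puts $t_1$, hence $\l x.t_1$, in $\nf$. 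The heart is the application $t_1\app t_2$, where one needs a \emph{progress} dichotomy, proved by a side induction on $t_1$: every term is a structure, or an abstraction-under-substitutions $(\l x.s)\lctx$, or reducible in its context. If $t_1=(\l x.s)\lctx$, \ruleDb would fire; if $t_1\in\struct$, irreducibility forces $t_1\in\nf$ (else \ruleAppLeft) and $t_2\in\nf$ (else \ruleAppRight, after the replay), so $t_1\app t_2\in\nf$; otherwise \ruleAppLeft applies, contradicting irreducibility. The substitution case $t_1\esub x{t_2}$ splits on reachability of $x$ in $t_1$: if $x$ is reachable then a reducible $t_2$ triggers \ruleEsRight and a $t_2$ of the form $v\lctx$ triggers \ruleLsv, both contradicting irreducibility, leaving the constrained $\nf$-rule to apply; if $x$ is not reachable, the unconstrained $\nf$-rule applies, the body being handled by the induction hypothesis with the reachable-variable side condition propagated correctly.

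The main obstacle I expect is precisely this right-to-left direction: finding the right strengthening (the reachable-variable side condition) so that the induction survives explicit substitutions, and then checking that the two substitution rules of \autoref{fig:normalforms} together with the progress dichotomy exactly carve out the irreducible terms --- in particular, that an occurrence of a bound variable $x$ that could be ``woken up'' by freezing always sits in a position the normality judgment deliberately ignores. The rigidity of structures, the characterizations of \auxdb/\auxlsv, and the monotonicity facts are what discharge the non-routine steps; everything else is routine case analysis.
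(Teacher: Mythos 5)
Your overall strategy---one direction by case analysis showing that normal forms admit no reduction, the other by a progress argument showing that every irreducible term is normal, both by structural induction on $t$ with $\varphi$ and $\mu$ universally quantified---is exactly the paper's (whose own proof is only a two-line sketch), and most of your supporting machinery (the shapes of $\auxdb$- and $\auxlsv{\varphi}{\mu}$-redexes, rigidity of structures, $\varphi$-monotonicity, the $\bot$-to-$\top$ replay) is the right scaffolding. There is, however, a concrete gap in the left-to-right direction, in the case of the \emph{unconstrained} substitution rule of \autoref{fig:normalforms}. There you dismiss \ruleEsRight and \ruleLsv ``as before'', but the previous arguments relied on the hypotheses $t_2\in\nf$ and $t_2\in\struct$, which the unconstrained rule does not provide: $t_2$ is completely arbitrary there, possibly a reducible term or an answer $v\lctx$. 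The only way to rule out \ruleEsRight and \ruleLsv in that case is to kill their \emph{other} premise, namely the reachability of $x$ in $t_1$: one must show that $t_1\in\nf$ with $x\notin\varphi$ implies that $t_1$ admits no step $\rescbn{\ri{x}}{\varphi}{\mu}$ (nor $\rescbn{\rs{x}{v}}{\varphi'}{\mu}$ for the enlarged $\varphi'$ produced by your characterization of the auxiliary \rlsv relation). This is precisely the subtlety the paper highlights after \autoref{fig:normalforms} with the example $(\l{a}.a\app x)\esub{x}{r}$, and it forces you to strengthen the statement of the left-to-right induction so that $\rho$ also ranges over $\ri{x}$ and $\rs{x}{v}$ for $x\notin\varphi$---the same reachability strengthening you already introduce for the converse direction, just not carried over.

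A second, smaller point on the right-to-left direction: the progress trichotomy you invoke for applications (``every term is a structure, or of the form $(\l{x}.s)\lctx$, or reducible in its context'') is false as stated---a bare variable outside $\varphi$ is none of the three---so it must carry the same fourth disjunct as your strengthened main statement, namely a reachable occurrence of a non-frozen free variable. With that disjunct added uniformly, and the left-to-right induction generalized to $\ri{x}$ and $\rs{x}{v}$, your proof goes through and coincides with the argument the paper intends.
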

\begin{proof}
  The first part (a term cannot be both in normal form and reducible) is by
  induction on the reduction rules. The second part (any term is
  either a normal form or a reducible term) is by induction on $t$.
\end{proof}
Please note that the characterization of normal forms in
\autoref{fig:normalforms} and the associated
\autoref{lem:normal-forms-lscbn} take into account the fact that we
did not include any \rgc rule in the calculus. If that were the case,
the characterization of normal forms could be adapted, by removing the
fifth rule.

\begin{lem}[Unfolded normal forms]\label{lem:normal-forms-lambda}
  If $t\in\nf$ then $t\unfold$ is a normal form in the \l-calculus.
\end{lem}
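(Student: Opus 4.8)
The plan is to proceed by induction on the derivation of $t\in\nf$, using the five rules of \autoref{fig:normalforms}. The argument hinges on two invariants of the unfolding operation that I would establish first as separate lemmas. The first concerns structures: if $t\in\struct$ then $t\unfold$ has the shape $x\app s_1\app\cdots\app s_n$ for some $x\in\varphi$ and some $n\geq 0$; in particular $t\unfold$ is never an abstraction. This is proved by a direct induction on the derivation of $t\in\struct$ (\autoref{fig:structures}); the only delicate rule is the last one, $t\esub{x}{u}\in\struct$ with $t\in\struct[\varphi\cup\set{x}]$ and $u\in\struct$, where the induction hypothesis on $t$ may give a head equal to the substituted variable $x$: in that case unfolding plugs $u\unfold$ in head position, and the induction hypothesis on $u$ says that $u\unfold$ is in turn headed by some frozen variable, so the spine merely gets longer; in the other case the head is some $y\in\varphi$, which by the freshness convention differs from $x$ and is therefore untouched by the meta-substitution. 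The second invariant concerns free variables: if $t\in\nf$ then $\fv(t\unfold)\subseteq\varphi$. This also follows by induction on the derivation of $t\in\nf$, the key point being that in the cases $\l{x}.t$ and $t\esub{x}{u}$ the freshness convention guarantees $x\notin\varphi$, so that $x$ is erased from $\fv(t\unfold)$ either by the binder or by the meta-substitution.

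With these in place, the main induction is mostly routine. A frozen variable $x\in\varphi$ unfolds to $x$, which is normal. An abstraction $\l{x}.t$ unfolds to $\l{x}.t\unfold$, normal by the induction hypothesis. An application $t\app u$ with $t\in\nf$, $t\in\struct$, $u\in\nf$ unfolds to $t\unfold\app u\unfold$; the two immediate subterms are normal by the induction hypotheses, and $t\unfold$ is not an abstraction by the structure lemma, so there is no head redex and the whole term is normal. For the garbage-collection rule, where $t\esub{x}{u}\in\nf$ follows from $t\in\nf$ alone, the free-variable lemma gives $x\notin\fv(t\unfold)$ since $x\notin\varphi$ by freshness, hence $(t\esub{x}{u})\unfold = t\unfold\msub{x}{u\unfold} = t\unfold$, which is normal by the induction hypothesis.

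The one genuinely delicate case is the rule $t\esub{x}{u}\in\nf$ with $t\in\nf[\varphi\cup\set{x}]$, $u\in\nf$ and $u\in\struct$. Here $(t\esub{x}{u})\unfold = t\unfold\msub{x}{u\unfold}$, and even though $t\unfold$ and $u\unfold$ are both $\beta$-normal by the induction hypotheses, substituting one normal form into another need not yield a normal form in general. This is precisely where the structure hypothesis on $u$ is used: the structure lemma tells us that $u\unfold$ is rigid, that is, headed by a variable rather than an abstraction. I would isolate the standard fact that substituting a rigid $\beta$-normal form for a variable in a $\beta$-normal form again yields a $\beta$-normal form --- proved by a short induction on the normal form receiving the substitution, splitting on whether it is an abstraction, a neutral term (an application headed by a variable) whose head is $x$, in which case plugging the rigid $u\unfold$ only extends the head spine so that no redex is created, or a neutral term headed by another variable. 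Applying this fact with $t\unfold$ and $u\unfold$ closes the case, and with it the proof.

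I expect the bulk of the work to be in stating the two auxiliary invariants in exactly the shape the induction consumes --- in particular the ``head is a frozen variable'' form of the structure lemma, which is what makes the substitution in the last case harmless --- rather than in any single computation. Everything else is bookkeeping with unfolding and the freshness convention.
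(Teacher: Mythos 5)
Your proof is correct and takes essentially the same route as the paper, whose proof is simply stated as ``by induction on $t\in\nf$''. The two auxiliary invariants you isolate --- that a term of $\struct$ unfolds to a spine headed by a variable of $\varphi$ (hence never to an abstraction), and that $\fv(t\unfold)\subseteq\varphi$ for $t\in\nf$ (which trivializes the garbage-collection case) --- together with the standard fact about substituting a neutral normal form, are exactly the ingredients needed to carry out that induction.
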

\begin{proof}
  By induction on $t\in\nf$.
\end{proof}

\noindent
Soundness is a direct consequence of the three previous lemmas.
\begin{thm}[Soundness]\label{thm:soundness}
  Let $t$ be a \textup{\lscbn}-term. If there is a reduction \textup{$t\red^* u$}
  with $u$ a \textup{\lscbn}-normal form, then $u\unfold$ is the $\beta$-normal
  form of $t\unfold$.
\end{thm}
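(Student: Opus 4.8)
The plan is to chain the three preceding lemmas and then invoke confluence of $\beta$-reduction. First I would iterate \autoref{lem:simulation}: by a straightforward induction on the length of the reduction sequence $t\red^* u$, applying the Simulation lemma at every step and composing the resulting $\beta$-reduction sequences, one obtains $t\unfold \reb^* u\unfold$ in the pure $\lambda$-calculus. Next, since $u$ is a \lscbn-normal form, \autoref{lem:normal-forms-lscbn} gives that $u$ belongs to the inductively defined set of normal forms of \autoref{fig:normalforms}, and then \autoref{lem:normal-forms-lambda} yields that $u\unfold$ is a normal form for $\beta$-reduction. Hence $t\unfold$ $\beta$-reduces to the $\beta$-normal form $u\unfold$; by the Church--Rosser property of $\beta$-reduction, $u\unfold$ is therefore the unique $\beta$-normal form of $t\unfold$, which is exactly the claim.

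I expect no genuine obstacle in this argument itself: the theorem is essentially a one-line corollary of the three lemmas together with the standard uniqueness of $\beta$-normal forms, so the only care needed is to state explicitly that ``the $\beta$-normal form of $t\unfold$'' is well defined here precisely because $t\unfold$ has been exhibited as reducing to a normal form and $\beta$ is confluent. The real work lies upstream, in the lemmas I take as given: chiefly \autoref{lem:simulation}, whose proof must check that each SOS-style rule of \autoref{fig:scbn-rules}, including the auxiliary rules of \autoref{fig:aux-rules}, is reflected after unfolding by a possibly-empty sequence of $\beta$-steps; and the pair \autoref{lem:normal-forms-lscbn}--\autoref{lem:normal-forms-lambda}, which must establish that the inductive characterization of normal forms coincides with $\red$-irreducibility and that unfolding maps such terms into $\beta$-normal forms. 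Given those, the proof of the soundness theorem is immediate.
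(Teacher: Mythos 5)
Your proof is correct and follows exactly the route the paper intends: the paper states that soundness is a direct consequence of Lemmas~\ref{lem:simulation}, \ref{lem:normal-forms-lscbn}, and~\ref{lem:normal-forms-lambda}, combined with uniqueness of $\beta$-normal forms, which is precisely your chain of reasoning. Nothing is missing.
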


This theorem implies that all the \lscbn-normal forms of a term $t$ are
equivalent modulo unfolding. This mitigates the fact that the calculus,
without a \rgc rule, is not confluent. For instance, the term
$(\l{x}.x)\app(\l{y}.(\l{z}.z)\app y)$ admits two normal forms
$(\l{y}.z\esub{z}{y})\esub{x}{\l{y}.(\l{z}.z)\app y}$ and
$(\l{y}.z\esub{z}{y})\esub{x}{\l{y}.z\esub{z}{y}}$, but both of them
unfold to $\l{y}.y$.

\subsection{Completeness}

Our strong call-by-need calculus is complete with respect to normalization in
the \l-calculus in a strong sense: Whenever a \l-term $t$ admits a normal form
in the pure \l-calculus, every reduction path in \lscbn eventually reaches a
representative of this normal form.
This section is devoted to proving this completeness result (\autoref{thm:completeness}).
The proof relies on the non-idempotent intersection type system in the
following way. First,
typability (\autoref{thm:typability}) ensures that any weakly normalizing
  \l-term admits a typing derivation (with no positive occurrence of \mset{}).
Second, we prove that any \lscbn-reduction in a typed \lscbn-term~$t$ (with
  no positive occurrence of \mset{}) is at a typed position of $t$
  (\autoref{thm:typed-reduction}).
Third, weighted subject reduction
  (\autoref{thm:weighted subject reduction}) provides a decreasing measure
  for \lscbn-reduction.
Finally, the obtained normal form is related to the $\beta$-normal form using
Lemmas~\ref{lem:simulation}, \ref{lem:normal-forms-lscbn},
and~\ref{lem:normal-forms-lambda}.

The proof of the forthcoming typed reduction
(\autoref{thm:typed-reduction}) uses a refinement of the
non-idempotent intersection types system of \lc,
given in \autoref{fig:annotated typing}.
Both systems derive the same typing judgments with the same typed
positions. The refined system, however, features an annotated typing
judgment $\smash{\ajdg{\Gamma}{\varphi}{\mu}{t}{\tau}}$ embedding the same
context information that are defined by the judgment $\ctx\vdash\mu,\varphi$
and used in the inductive reduction relation
\rescbn{\rho}{\varphi}{\mu}, namely the set $\varphi$ of frozen variables
at the considered position and a marker $\mu$ of top-level-like positions.
These annotations are faithful counterparts to the corresponding annotations
of \lscbn reduction rules; their information flows upward in the inference
rules following the same criteria.

In particular, the rule for typing an abstraction is split in two versions
\tyruleAbsBot and \tyruleAbsTop, the latter being applicable to $\top$
positions and thus freezing the variable bound by the abstraction
(in both rules, by freshness convention we assume $x\not\in\varphi$).
The rule for typing an application is also split into two versions:
\tyruleAppStruct is applicable when the left part of the application is a
structure and marks the right part as a $\top$ position, while
\tyruleApp is applicable otherwise. Note that this second rule allows the
argument of the application to be typed even if its position is not (yet)
reducible, but its typing is in a $\bot$ position.
Finally, the rule for typing an explicit substitution is similarly split into
two versions, depending on whether the content of the substitution is a
structure or not, and handling the set of frozen variables accordingly.
In both cases, the content of the substitution is typed in a $\bot$ position,
since this position is never top-level-like.

As it was the case for evaluation positions, the definition of typed positions
largely depends on the notions of frozen variables and top-level-like
positions, but the sets of typed positions and of top-level-like positions do
not have any particular relation to each other. Some typed positions are not
top-level-like (\emph{e.g.}, on the left of a typed application),
and some top-level-like positions are not typed (\emph{e.g.}, the argument of
an untyped structure). On the other hand, there is an actual connection between
the set of typed positions and the set of evaluation positions, given by
\autoref{thm:typed-reduction}.

We write \aderiv{\Gamma}{\varphi}{\mu}{t}{\tau} if there is a derivation $\Phi$
of the annotated typing judgment \ajdg{\Gamma}{\varphi}{\mu}{t}{\tau}.
We denote by $\fzt(\Phi)$ the set of types associated to
frozen variables in judgments of the derivation $\Phi$.

\begin{figure}[t]
\begin{mathpar}
  \inferrule[\tyruleVar]{~}{\ajdg{x:\mset{\sigma}}{\varphi}{\mu}{x}{\sigma}}
  \and
  \inferrule[\tyruleAbsBot]{\ajdg{\Gamma,x:\mvar}{\varphi}{\bot}{t}{\tau}}{%
    \ajdg{\Gamma}{\varphi}{\bot}{\l{x}.t}{\mvar\arr\tau}}
  \and
  \inferrule[\tyruleAbsTop]{\ajdg{\Gamma,x:\mvar}{\varphi\cup\set{x}}{\top}{t}{\tau}}{%
    \ajdg{\Gamma}{\varphi}{\top}{\l{x}.t}{\mvar\arr\tau}}
  \\
  \inferrule[\tyruleApp]{\ajdg{\Gamma}{\varphi}{\bot}{t}{\mvar\arr\tau}
    \and (\ajdg{\Delta_\sigma}{\varphi}{\bot}{u}{\sigma})_{\sigma\in\mvar}}{%
    \ajdg{\Gamma + \textstyle \sum_{\sigma\in\mvar} \Delta_\sigma}{\varphi}{\mu}{t\app u}{\tau}}
  \quad
  \inferrule[\tyruleAppStruct]{\ajdg{\Gamma}{\varphi}{\bot}{t}{\mvar\arr\tau}
    \quad t\in\struct
    \quad (\ajdg{\Delta_\sigma}{\varphi}{\top}{u}{\sigma})_{\sigma\in\mvar}}{%
    \ajdg{\Gamma + \textstyle \sum_{\sigma\in\mvar} \Delta_\sigma}{\varphi}{\mu}{t\app u}{\tau}}
  \\
  \inferrule[\tyruleEs]{\ajdg{\Gamma,x:\mvar}{\varphi}{\mu}{t}{\tau}
    \quad (\ajdg{\Delta_\sigma}{\varphi}{\bot}{u}{\sigma})_{\sigma\in\mvar}}{%
    \ajdg{\Gamma + \textstyle \sum_{\sigma\in\mvar} \Delta_\sigma}{\varphi}{\mu}{t\esub{x}{u}}{\tau}}
  \quad
  \inferrule[\tyruleEsStruct]{\ajdg{\Gamma,x:\mvar}{\varphi\cup\set{x}}{\mu}{t}{\tau}
    \quad u\in\struct
    \quad (\ajdg{\Delta_\sigma}{\varphi}{\bot}{u}{\sigma})_{\sigma\in\mvar}}{%
    \ajdg{\Gamma + \textstyle \sum_{\sigma\in\mvar} \Delta_\sigma}{\varphi}{\mu}{t\esub{x}{u}}{\tau}}
\end{mathpar}
  \caption{Annotated system of non-idempotent intersection types.}
  \label{fig:annotated typing}
\end{figure}

\begin{lem}[Typing derivation annotation]
  If there is a derivation \deriv{\Gamma}{t}{\tau},
  then for any $\varphi$ and $\mu$ there is a derivation
  \aderiv[\Phi']{\Gamma}{\varphi}{\mu}{t}{\tau}
  such that the sets of typed positions in $\Phi$ and $\Phi'$ are equal.
\end{lem}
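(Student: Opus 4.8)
The plan is to proceed by structural induction on the derivation $\Phi$ — equivalently, on the term $t$, since the base typing system of \autoref{fig:base typing rules} is syntax-directed. The induction hypothesis is that, for every strict subderivation of $\Phi$ and every choice of annotations, a corresponding annotated derivation with the same set of typed positions exists; I then fix an arbitrary pair $\varphi$, $\mu$ for the conclusion and reconstruct $\Phi'$ rule by rule.

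First I would treat the axiom case: if $\Phi$ is a single \tyruleVar node then $t=x$, and the annotated rule \tyruleVar — which carries arbitrary $\varphi$ and $\mu$ — produces $\Phi'$ directly, with typed positions $\set{\varepsilon}$ on both sides. For the abstraction case, $t=\l{x}.s$, and after renaming $x$ to satisfy the freshness convention (so $x\notin\varphi$) I would split on $\mu$: if $\mu=\top$, apply the induction hypothesis to the premise with annotations $\varphi\cup\set{x}$ and $\top$ and conclude with \tyruleAbsTop; if $\mu=\bot$, apply it with $\varphi$ and $\bot$ and conclude with \tyruleAbsBot. For the application case $t=s\app u$: apply the induction hypothesis to the subderivation of $s$ with annotations $\varphi$, $\bot$ (both annotated application rules put the left premise in $\bot$ mode), and to each instance of the subderivation of $u$; whether $s\in\struct$ holds under the current $\varphi$ then determines whether one concludes with \tyruleAppStruct (typing the copies of $u$ in $\top$ mode) or with \tyruleApp (typing them in $\bot$ mode), and since \tyruleApp carries no side condition this case split is exhaustive. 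The explicit-substitution case $t=s\esub{x}{u}$ is analogous: thread $\mu$ through the left premise, rename $x$ away from $\varphi$, apply the induction hypothesis to the premises, and use \tyruleEsStruct (freezing $x$, i.e.\ recursing with $\varphi\cup\set{x}$) exactly when $u\in\struct$, else \tyruleEs.

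In every case, the set of typed positions of the reconstructed $\Phi'$ is obtained from the typed positions of its immediate subderivations by exactly the same prefixing operation ($\set{\varepsilon}$, then $0p$ for positions of the first premise, then $1p$ for positions of the instances of the second premise) as the one used for the corresponding base rule in $\Phi$. Combined with the induction hypothesis, which preserves typed positions at each immediate subderivation, this gives the required equality of the two typed-position sets.

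I do not expect a genuine obstacle here: the lemma is essentially a bookkeeping statement, the two systems differing only by annotations that the typed-position extraction ignores. The one point requiring care is the freshness convention attached to the binder rules \tyruleAbsTop, \tyruleAbsBot, \tyruleEs, and \tyruleEsStruct: one must be free to $\alpha$-rename the bound variable away from $\varphi$, and then observe that this renaming does not perturb the correspondence of typed positions, positions being stable under $\alpha$-equivalence. A secondary subtlety is that the choice between a \emph{struct} and a non-\emph{struct} rule depends on the ambient $\varphi$ — a subterm may be a structure under one frozen set but not under another — so the case analysis in the application and explicit-substitution cases must be performed only after $\varphi$ has been fixed by the outer induction.
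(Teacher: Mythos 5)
Your proof is correct and follows the same route as the paper, which simply states ``by induction on $\Phi$, since annotations do not interfere with typing''; your case analysis (splitting on $\mu$ for abstractions and on the structure side condition for applications and substitutions, noting that \tyruleApp and \tyruleEs carry no side condition so the split is exhaustive) is exactly the intended expansion of that one-line argument.
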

\begin{proof}
  By induction on $\Phi$, since annotations do not interfere with typing.
\end{proof}
The converse property is also true, by erasure of the annotations, but it is not
used in the proof of the completeness result.

The most crucial part of the proof of \autoref{thm:typed-reduction} is ensuring that any argument of a typed
structure is itself at a typed position. This follows from the following three
lemmas.

\begin{lem}[Typed structure]\label{lem:typed-structure}
If \ajdg{\Gamma}{\varphi}{\mu}{t}{\tau} and $t\in\struct$,
    then there is $x\in\varphi$
    such that $\tau\in\tpos(\Gamma(x))$.
\end{lem}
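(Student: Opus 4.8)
The plan is to argue by induction on the derivation that $t\in\struct$ (equivalently, by structural induction on $t$), inverting the last rule of the annotated typing derivation of \ajdg{\Gamma}{\varphi}{\mu}{t}{\tau}. Two auxiliary facts will be needed. First, a \emph{downward-closure} property of type occurrences: for all types or multisets $\sigma,\rho$, if $\rho\in\tpos(\sigma)$ then $\tpos(\rho)\subseteq\tpos(\sigma)$ and $\tneg(\rho)\subseteq\tneg(\sigma)$, and symmetrically if $\rho\in\tneg(\sigma)$ then $\tpos(\rho)\subseteq\tneg(\sigma)$ and $\tneg(\rho)\subseteq\tpos(\sigma)$; this is a routine mutual induction on $\sigma$ (only the first implication is used directly below, the rest being needed so the arrow case of the induction goes through). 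We also record two trivialities: $\tau\in\tpos(\tau)$ for every type $\tau$, and, since a type is never a multiset, $\tau\in\tpos(\mvar_1)$ with $\mvar_1\subseteq\mvar_2$ implies $\tau\in\tpos(\mvar_2)$; in particular adding contributions to an environment can only enlarge its set of positive occurrences of a fixed type. Second, a \emph{weakening of the frozen set} for the annotated system: if \ajdg{\Gamma}{\varphi}{\mu}{t}{\tau} and $\varphi\subseteq\varphi'$ (with bound variables renamed to avoid $\varphi'$), then \ajdg{\Gamma}{\varphi'}{\mu}{t}{\tau}; this follows by induction on the derivation, using that structures are monotone in the frozen set (another easy induction) so that the side conditions of \tyruleAppStruct and \tyruleEsStruct survive.

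For the induction itself, the cases \tyruleAbsBot and \tyruleAbsTop are vacuous, since an abstraction is never a structure. If the typing ends with \tyruleVar then $t=x$, $\Gamma=x:\mset{\tau}$, and $t\in\struct$ forces $x\in\varphi$; since $\tau\in\tpos(\mset{\tau})$ we are done. If it ends with \tyruleApp or \tyruleAppStruct then $t=s\app u$ with \ajdg{\Gamma_1}{\varphi}{\bot}{s}{\mvar\arr\tau} and $\Gamma=\Gamma_1+\sum_{\sigma\in\mvar}\Delta_\sigma$, and $t\in\struct$ forces $s\in\struct$ (by the rule $t\in\struct\Rightarrow t\app u\in\struct$ of \autoref{fig:structures}). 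The induction hypothesis on $s$ gives $x\in\varphi$ with $\mvar\arr\tau\in\tpos(\Gamma_1(x))$, hence $\mvar\arr\tau\in\tpos(\rho)$ for some $\rho\in\Gamma_1(x)$; by downward closure $\tpos(\mvar\arr\tau)\subseteq\tpos(\rho)$, and since $\tau\in\tpos(\mvar\arr\tau)$ we get $\tau\in\tpos(\rho)\subseteq\tpos(\Gamma(x))$.

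The substantive case is $t=s\esub{x}{u}$, typed with \tyruleEs or \tyruleEsStruct; then $t\in\struct$ because either $s\in\struct$, or $s\in\struct[\varphi\cup\set{x}]$ and $u\in\struct$. In every subcase the first premise of the typing rule is \ajdg{\Gamma_1,x:\mvar}{\varphi''}{\mu}{s}{\tau} with $\varphi''=\varphi$ for \tyruleEs and $\varphi''=\varphi\cup\set{x}$ for \tyruleEsStruct, and $\Gamma=\Gamma_1+\sum_{\sigma\in\mvar}\Delta_\sigma$. Using frozen-set weakening and monotonicity of structures one arranges uniformly that $s\in\struct[\psi]$ and \ajdg{\Gamma_1,x:\mvar}{\psi}{\mu}{s}{\tau}, where $\psi=\varphi$ when the typing rule is \tyruleEs and the structure membership comes from $s\in\struct$, and $\psi=\varphi\cup\set{x}$ in all remaining subcases. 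The induction hypothesis then yields $y\in\psi$ with $\tau\in\tpos((\Gamma_1,x:\mvar)(y))$. If $y\neq x$ (automatic when $\psi=\varphi$, since $x\notin\varphi$ by the freshness convention), then $(\Gamma_1,x:\mvar)(y)=\Gamma_1(y)\subseteq\Gamma(y)$, so $\tau\in\tpos(\Gamma(y))$ with $y\in\varphi$. Otherwise $y=x$, so $\tau\in\tpos(\mvar)$; as $\tau$ is a type, $\mvar$ is nonempty and $\tau\in\tpos(\sigma)$ for some $\sigma\in\mvar$. This is exactly where the structure hypothesis on the content becomes available: $u\in\struct$, typed by \ajdg{\Delta_\sigma}{\varphi}{\bot}{u}{\sigma}, so the induction hypothesis on $u$ gives $z\in\varphi$ with $\sigma\in\tpos(\Delta_\sigma(z))$; choosing $\rho\in\Delta_\sigma(z)$ with $\sigma\in\tpos(\rho)$, downward closure gives $\tpos(\sigma)\subseteq\tpos(\rho)$, hence $\tau\in\tpos(\rho)\subseteq\tpos(\Gamma(z))$, and $z\in\varphi$ is the required frozen variable.

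The main obstacle is precisely this last subcase: when the frozen variable produced by the induction hypothesis for $s$ is the variable $x$ bound by the substitution, the witness must be relayed through the content $u$, and this is only possible because the very combination of typing and structure rules that freezes $x$ in $s$ also forces $u$ to be a structure — typed, for each $\sigma\in\mvar$, by a subderivation small enough for the induction hypothesis. The rest is bookkeeping: the mismatch between $\varphi$ and $\varphi\cup\set{x}$ across the \tyruleEs/\tyruleEsStruct and structure-rule choices, absorbed by the frozen-set weakening lemma, and the (elementary) verification of the downward-closure property of \tpos and \tneg.
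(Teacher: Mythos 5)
Your proof is correct and follows essentially the same route as the paper's: structural induction on $t$, with the decisive case being the explicit substitution $s\esub{x}{u}$ where the frozen witness returned for $s$ may be the bound variable $x$, in which case the claim is relayed through the content $u$, which the combination of structure and typing rules forces to be a (typed) structure. You merely make explicit some facts the paper uses silently (downward closure of \tpos, frozen-set weakening of the annotated typing judgment, monotonicity of structures in $\varphi$); the only caveat is that the induction really should be on the term $t$ rather than on the derivation of $t\in\struct$, since in one subcase the structure derivation for $u$ comes from a typing side condition and is not a subderivation — a point your parenthetical ``equivalently, by structural induction on $t$'' already covers.
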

\begin{proof}
  By induction on the structure of $t$. The most interesting case is
  the one of an explicit substitution $t_1\esub{x}{t_2}$. The induction
  hypothesis applied on $t_1$ can give the variable $x$ which does not appear
  in the conclusion, but in that case $t_2$ is guaranteed to be a structure
  whose type contains $\tau$. Let us give some more details.
\begin{itemize}
\item Case $t=x$. By inversion of $x\in\struct$ we deduce $x\in\varphi$.
  Moreover the only rule applicable to derive
  \ajdg{\Gamma}{\varphi}{\mu}{x}{\tau} is \tyruleVar, which gives the
  conclusion.
\item Case $t=t_1\app t_2$. By inversion of $t_1\app t_2\in\struct$ we deduce
  $t_1\in\struct$. Moreover the only rules applicable to
  derive \ajdg{\Gamma}{\varphi}{\mu}{t_1\app t_2}{\tau} are \tyruleApp and
  \tyruleAppStruct. Both have a premise
  \ajdg{\Gamma'}{\varphi}{\bot}{t_1}{\mvar\arr\tau}
  with $\Gamma'\subseteq\Gamma$, to which the induction hypothesis
  applies, ensuring $\mvar\arr\tau\in\tpos(\Gamma'(x))$ and thus
  $\tau\in\tpos(\Gamma'(x))$ and $\tau\in\tpos(\Gamma(x))$.
\item Case $t=t_1\esub{x}{t_2}$.
  We reason by case on the last rules applied to derive
  $t_1\esub{x}{t_2}\in\struct$ and
  \ajdg{\Gamma}{\varphi}{\mu}{t_1\esub{x}{t_2}}{\tau}.
  There are two possible rules for each.
  \begin{itemize}
  \item The case $t_1\esub{x}{t_2}\in\struct$ is deduced from
    $t_1\in\struct$ (with $x\not\in\varphi$),
    and \ajdg{\Gamma}{\varphi}{\mu}{t_1\esub{x}{t_2}}{\tau} comes
    from rule \tyruleEs. This rule has in particular a premise
    \ajdg{\Gamma'}{\varphi}{\mu}{t_1}{\tau} for
    a $\Gamma'=\Gamma'',x:\mvar$ such that $\Gamma''\subseteq\Gamma$.
    We thus have by induction hypothesis on $t_1$ that
    $\tau\in\tpos(\Gamma'(y))$ for some
    $y\in\varphi\cap\dom(\Gamma')$.
    Since $y\in\varphi$ and $x\not\in\varphi$, we have $y\neq x$.
    Then, $y\in\dom(\Gamma'')$, $y\in\dom(\Gamma)$,
    and $\Gamma(y)=\Gamma''(y)$.
  \item In the three other cases, we have:
    \begin{enumerate}
    \item a hypothesis $t_1\in\struct$ or $t_1\in\struct[\varphi\cup\set{x}]$,
      from which we deduce $t_1\in\struct[\varphi\cup\set{x}]$,
    \item a hypothesis \ajdg{\Gamma'}{\varphi}{\mu}{t_1}{\tau} or
      \ajdg{\Gamma'}{\varphi\cup\set{x}}{\mu}{t_1}{\tau} (for a
      $\Gamma'=\Gamma'',x:\mvar$ such that $\Gamma''\subseteq\Gamma$),
      from which we deduce \ajdg{\Gamma'}{\varphi\cup\set{x}}{\mu}{t_1}{\tau},
      and
    \item a hypothesis $t_2\in\struct$, coming from the
      derivation of $t_1\esub{x}{t_2}$ or the derivation of
      \ajdg{\Gamma}{\varphi}{\mu}{t_1\esub{x}{t_2}}{\tau} (or both).
    \end{enumerate}
    Then by induction hypothesis on $t_1$, we have
    $\tau\in\tpos(\Gamma'(y))$
    for some $y\in\varphi\cup\set{x}$.
    \begin{itemize}
    \item If $y\neq x$, then $y\in\varphi$ and $\Gamma(y)=\Gamma''(y)$,
      which allows a direct conclusion.
    \item If $y=x$, then $\tau\in\tpos(\Gamma'(x))$ implies $\mvar\neq\mset{}$.
      Let $\sigma\in\mvar$ with $\tau\in\tpos(\sigma)$.
      The instance of the rule \tyruleEs or \tyruleEsStruct we
      consider thus has at least one premise
      \ajdg{\Delta}{\varphi}{\bot}{t_2}{\sigma}
      with $\Delta\subseteq\Gamma$.
      Since $t_2\in\struct$, by induction hypothesis on $t_2$, there is
      $z\in\varphi\cap\dom(\Delta)$ such that
      $\sigma\in\tpos(\Delta(z))$.
      Then, $\tau\in\tpos(\Delta(z))$ and $\tau\in\Gamma$.
      \qedhere
    \end{itemize}
  \end{itemize}
\end{itemize}
\end{proof}

\begin{lem}[Subformula property]\label{lem:subformula}~
  \begin{enumerate}
  \item If \aderiv{\Gamma}{\varphi}{\top}{t}{\tau}
    then \(\left\{\begin{array}{rcl}
    \tpos(\fzt(\Phi)) & \subseteq &
    \bigcup_{x\in\varphi}\tpos(\Gamma(x))\cup\tneg(\tau) \\
    \tneg(\fzt(\Phi)) & \subseteq &
    \bigcup_{x\in\varphi}\tneg(\Gamma(x))\cup\tpos(\tau)
    \end{array}\right.\)
  \item If \aderiv{\Gamma}{\varphi}{\bot}{t}{\tau}
    then \(\left\{\begin{array}{rcl}
    \tpos(\fzt(\Phi)) & \subseteq & \bigcup_{x\in\varphi}\tpos(\Gamma(x)) \\
    \tneg(\fzt(\Phi)) & \subseteq & \bigcup_{x\in\varphi}\tneg(\Gamma(x))
    \end{array}\right.\)
  \end{enumerate}
\end{lem}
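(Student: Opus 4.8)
The plan is to prove the two inclusions of both items simultaneously, by induction on the derivation $\Phi$ with a case analysis on its last rule. The only facts I expect to need beyond the induction hypotheses are two elementary observations about polarities, both proved by a trivial induction on types. The first is the pair of decompositions that read off directly from the definition: $\tneg(\mvar\arr\tau) = \tpos(\mvar)\cup\tneg(\tau)$ and $\tpos(\mvar\arr\tau) = \set{\mvar\arr\tau}\cup\tneg(\mvar)\cup\tpos(\tau)$. The second is that occurrence is transitive with the expected polarity flip: if a type (or multiset) $\chi$ occurs positively in $\rho$ then $\tpos(\chi)\subseteq\tpos(\rho)$ and $\tneg(\chi)\subseteq\tneg(\rho)$, and if it occurs negatively then $\tpos(\chi)\subseteq\tneg(\rho)$ and $\tneg(\chi)\subseteq\tpos(\rho)$. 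I will also use, in every case, that the frozen variables of the conclusion carry types that decompose into types of frozen variables of the premises — because the conclusion environment is either a sum of the premise environments or a $\varphi$-preserving extension of one of them — so that $\fzt(\Phi)$ is exactly the union of the $\fzt$ of the immediate subderivations.

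The base case \tyruleVar is immediate, since the only frozen type that can occur is $\sigma$ and $\tpos(\sigma)\subseteq\tpos(\mset{\sigma})$, $\tneg(\sigma)\subseteq\tneg(\mset{\sigma})$. For \tyruleAbsTop the subderivation is in $\top$ mode with frozen set $\varphi\cup\set{x}$ and environment extended by $x:\mvar$; part~1 of the induction hypothesis then yields a bound whose extra terms are precisely $\tpos(\mvar)$ and $\tneg(\tau)$ on the positive side and $\tneg(\mvar)$ and $\tpos(\tau)$ on the negative side, which the two arrow decompositions package back into $\tneg(\mvar\arr\tau)$ and $\tpos(\mvar\arr\tau)$ — so the case closes directly. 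The case \tyruleAbsBot is simpler: by freshness $x\notin\varphi$, so the subderivation keeps the frozen set $\varphi$ and no repackaging is needed. The cases \tyruleApp and \tyruleEs combine the induction hypotheses of the subderivations, which are all in $\bot$ mode and hence contribute no new polarity slack; what remains is mostly routine set-inclusion bookkeeping for how $\tpos$ and $\tneg$ distribute over the environment sum.

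The two cases I expect to be the real work are \tyruleAppStruct and \tyruleEsStruct, and the key point — which is also the main obstacle — is to reabsorb the slack that $\top$-mode premises introduce. In \tyruleAppStruct the argument $u$ is typed in $\top$ mode at each $\sigma\in\mvar$, so part~1 of the induction hypothesis adds the terms $\tneg(\sigma)$ and $\tpos(\sigma)$, which are in general not covered by $\tneg(\tau)$ and $\tpos(\tau)$. Here I invoke \autoref{lem:typed-structure}: since $t\in\struct$ and $t$ is typed with $\mvar\arr\tau$ in a subenvironment of the conclusion's, there is a frozen variable $z\in\varphi$ with $\mvar\arr\tau\in\tpos(\Gamma(z))$; transitivity of occurrence gives $\tpos(\mvar\arr\tau)\subseteq\tpos(\Gamma(z))$ and $\tneg(\mvar\arr\tau)\subseteq\tneg(\Gamma(z))$, and the arrow decompositions then place $\tneg(\mvar)\cup\tpos(\tau)$ inside $\tpos(\Gamma(z))$ and $\tpos(\mvar)\cup\tneg(\tau)$ inside $\tneg(\Gamma(z))$, so every extra term lands in $\bigcup_{x\in\varphi}\tpos(\Gamma(x))$ resp.\ $\bigcup_{x\in\varphi}\tneg(\Gamma(x))$, also when the conclusion is itself in $\top$ mode. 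The case \tyruleEsStruct is analogous, with the slack coming from the $\bot$-mode typings of the structure $u$ and from the body's multiset $\mvar$ for the now-frozen variable, again absorbed by applying \autoref{lem:typed-structure} to $u$ together with each of its typings. The final check is that, in each of these two rules, the conclusion's frozen environment does cover the union of the subderivations' contributions after the environment addition that the rule performs.
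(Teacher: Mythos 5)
Your proof is correct and follows essentially the same route as the paper's: mutual induction on the typing derivation, the decomposition $\tneg(\mvar\arr\tau)=\tpos(\mvar)\cup\tneg(\tau)$ to close the \tyruleAbsTop case, and \autoref{lem:typed-structure} combined with the polarity-flipping transitivity of type occurrence to reabsorb the $\top$-mode slack in \tyruleAppStruct. The only divergence is that you also single out \tyruleEsStruct as needing \autoref{lem:typed-structure} (applied to the structure $u$ at each $\sigma\in\mvar$, to absorb the $\tpos(\mvar)$ and $\tneg(\mvar)$ contribution of the newly frozen variable), whereas the paper dismisses the explicit-substitution cases as immediate from the induction hypothesis; your reading is the more careful one, since that slack genuinely has to be absorbed there.
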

\begin{proof}
  By mutual induction on the typing derivations.
  Most cases are fairly straightforward. The only difficult case comes
  from the rule \tyruleAppStruct, in which there is a premise
  $\ajdg{\Delta}{\varphi}{\top}{u}{\sigma}$ with mode $\top$ but with a type
  $\sigma$ that does not clearly appear in the conclusion. Here we need
  the typed structure (\autoref{lem:typed-structure}) to conclude.
  Let us give some more details.
\begin{itemize}
\item Both properties are immediate in case \tyruleVar, where
  $\fzt(\Phi)=\set{\sigma}$.
\item Cases for abstractions.
  \begin{itemize}
  \item If \aderiv{\Gamma}{\varphi}{\bot}{\l{x}.t}{\mvar\arr\tau}
    by rule \tyruleAbsBot with premise
    \aderiv[\Phi']{\Gamma,x:\mvar}{\varphi}{\bot}{t}{\tau}.
    Write $\Gamma'=\Gamma,x:\mvar$.
    By induction hypothesis we have
    $\tpos(\fzt(\Phi'))\subseteq\bigcup_{y\in\varphi}\tpos(\Gamma'(y))$.
    Since $x\not\in\varphi$ by renaming convention, we deduce that
    $\tpos(\fzt(\Phi'))\subseteq\bigcup_{y\in\varphi}\tpos(\Gamma(y))$
    and
    $\tpos(\fzt(\Phi))\subseteq\bigcup_{y\in\varphi}\tpos(\Gamma(y))$.
    The same applies to negative type occurrences, which concludes the case.
  \item If \aderiv{\Gamma}{\varphi}{\top}{\l{x}.t}{\mvar\arr\tau}
    by rule \tyruleAbsTop with premise
    \aderiv[\Phi']{\Gamma,x:\mvar}{\varphi\cup\set{x}}{\top}{t}{\tau}.
    Write $\Gamma'=\Gamma,x:\mvar$.
    By induction hypothesis we have
    \[\begin{array}{rcl}
    \tpos(\fzt(\Phi')) & \subseteq &
    \bigcup_{y\in(\varphi\cup\set{x})}\tpos(\Gamma'(y))\cup\tneg(\tau) \\
    & = & \bigcup_{y\in\varphi}\tpos(\Gamma(y))\cup\tpos(\mvar)\cup\tneg(\tau) \\
    & = & \bigcup_{y\in\varphi}\tpos(\Gamma(y))\cup\tneg(\mvar\arr\tau)
    \end{array}\]
    Thus $\tpos(\fzt(\Phi))\subseteq\bigcup_{y\in\varphi}\tpos(\Gamma(y))\cup\tneg(\mvar\arr\tau)$.
    The same applies to negative occurrences, which concludes the case.
  \end{itemize}
\item Cases for application.
  \begin{itemize}
  \item Cases for \tyruleApp are by immediate application of the induction
    hypotheses.
  \item If \aderiv{\Gamma}{\varphi}{\mu}{t\app u}{\tau}
    by rule \tyruleAppStruct, with premises
    \aderiv[\Phi_t]{\Gamma_t}{\varphi}{\bot}{t}{\mvar\arr\tau},
    $t\in\struct$ and
    \aderiv[\Phi_\sigma]{\Delta_\sigma}{\varphi}{\top}{u}{\sigma}
    for $\sigma\in\mvar$, with $\Gamma_t\subseteq\Gamma$ and
    $\Gamma_\sigma\subseteq\Gamma$ for all $\sigma\in\mvar$.
    Independently of the value of $\mu$, we show that
    $\tpos(\fzt(\Phi))\subseteq\bigcup_{x\in\varphi}\tpos(\Gamma(x))$
    and $\tneg(\fzt(\Phi))\subseteq\bigcup_{x\in\varphi}\tneg(\Gamma(x))$
    to conclude on both sides of the mutual induction.

    Directly from the induction hypothesis,
    \(\tpos(\fzt(\Phi_t))\subseteq\bigcup_{x\in\varphi}\Gamma_t(x)
    \subseteq\tpos(\fzt(\Phi))\).
    By induction hypothesis on the other premises we have
    $\tpos(\fzt(\Phi_\sigma))\subseteq\bigcup_{x\in\varphi}\Gamma_\sigma(x)\cup\tneg(\tau)$
    for $\sigma\in\mvar$.
    We immediately have
    $\bigcup_{x\in\varphi}\Gamma_\sigma(x)\subseteq\bigcup_{x\in\varphi}\Gamma(x)$.
    We conclude by showing that $\tneg(\sigma)\subseteq\tpos(\Gamma_t(x))$ for
    some $x\in\varphi$.
    Since $t\in\struct$, by the first subformula property and the typing
    hypothesis on $t$ we deduce that
    there is a $x\in\varphi$ such that $\mvar\arr\tau\in\tpos(\Gamma_t(x))$.
    By closeness of type occurrences sets $\tpos(\tau)$ this means
    $\tpos(\mvar\arr\tau)\subseteq\tpos(\Gamma_t(x))$.
    By definition
    $\tpos(\mvar\arr\tau)=\tneg(\mvar)\cup\tpos(\tau)=\bigcup_{\sigma\in\mvar}\tneg(\sigma)\cup\tpos(\tau)$, which allows us to conclude the proof that
    $\bigcup_{x\in\varphi}\Gamma_\sigma(x)\cup\tneg(\tau) \subseteq \bigcup_{x\in\varphi}\Gamma(x)$.

    The same argument also applies to negative positions, and concludes the case.
  \end{itemize}
\item Cases for explicit substitution immediately follow the induction hypothesis.
  \qedhere
\end{itemize}
\end{proof}

\begin{lem}[Typed structure argument]\label{lem:typed-structure-argument}
  If \aderiv{\Gamma}{\varphi}{\mu}{t}{\tau}
  with $\mset{}\not\in\tpos(\jdg{\Gamma}{t}{\tau})$,
  then every typing judgment of the shape
  \ajdg{\Gamma'}{\varphi'}{\mu'}{s}{\mvar\arr\sigma}
  in $\Phi$ with $s\in\struct[\varphi']$ satisfies
  $\mvar\neq\mset{}$.
\end{lem}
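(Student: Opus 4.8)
\noindent
The plan is to obtain the statement as a direct corollary of \autoref{lem:typed-structure} and \autoref{lem:subformula}, with no further induction: if some structure occurring in $\Phi$ were typed with an arrow type of empty domain, then the empty multiset $\mset{}$ would be forced into a position that the side condition $\mset{}\not\in\tpos(\jdg{\Gamma}{t}{\tau})$ forbids.

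Concretely, I would fix a judgment \ajdg{\Gamma'}{\varphi'}{\mu'}{s}{\mvar\arr\sigma} occurring in $\Phi$ with $s\in\struct[\varphi']$, assume towards a contradiction that $\mvar=\mset{}$, and proceed in three steps. \emph{(1)} By \autoref{lem:typed-structure} there is $x\in\varphi'$ with $\mvar\arr\sigma\in\tpos(\Gamma'(x))$; since $\tpos(\mset{})=\set{\mset{}}$ contains no arrow type, $\Gamma'(x)$ is a nonempty multiset, so $x\in\dom(\Gamma')$ and every type of the multiset $\Gamma'(x)$ belongs to $\fzt(\Phi)$ by definition of $\fzt$. \emph{(2)} Using the closure of type occurrences under subformulas (with the usual polarity flip), $\mvar\arr\sigma\in\tpos(\Gamma'(x))$ gives $\tneg(\mvar\arr\sigma)\subseteq\tneg(\Gamma'(x))$; as $\mvar=\mset{}$ and $\tneg(\mset{}\arr\sigma)=\tpos(\mset{})\cup\tneg(\sigma)$ contains $\mset{}$, this yields $\mset{}\in\tneg(\Gamma'(x))\subseteq\tneg(\fzt(\Phi))$. \emph{(3)} Apply \autoref{lem:subformula} to $\Phi$ itself: whether $\mu=\top$ or $\mu=\bot$, it gives $\tneg(\fzt(\Phi))\subseteq\bigcup_{y\in\varphi}\tneg(\Gamma(y))\cup\tpos(\tau)$ (the term $\tpos(\tau)$ being absent when $\mu=\bot$). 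Hence $\mset{}\in\bigcup_{y\in\varphi}\tneg(\Gamma(y))\cup\tpos(\tau)$, which contradicts $\mset{}\not\in\tpos(\jdg{\Gamma}{t}{\tau})=\tpos(\tau)\cup\bigcup_{y\in\dom(\Gamma)}\tneg(\Gamma(y))$, noting that variables in $\varphi\setminus\dom(\Gamma)$ contribute $\tneg(\mset{})=\emptyset$. So $\mvar\neq\mset{}$.

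The argument is short because the real work is packaged in the two preceding lemmas: \autoref{lem:typed-structure} is what makes the arrow type of a structure resurface among the types of a \emph{frozen} variable, and \autoref{lem:subformula} is what confines those frozen-variable types to the (negative) occurrences of the root judgment's environment and type. The only points to check are routine: that $\tpos$ and $\tneg$ are genuinely closed under taking subformulas with polarity inversion (so that a positive occurrence of $\mset{}\arr\sigma$ entails a negative occurrence of $\mset{}$), and the monotonicity $\tneg(\Gamma'(x))\subseteq\tneg(\fzt(\Phi))$ coming from $\Gamma'(x)\subseteq\fzt(\Phi)$. Neither presents any difficulty, so I do not foresee a real obstacle beyond getting this bookkeeping right.
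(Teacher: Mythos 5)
Your proposal is correct and follows essentially the same route as the paper's proof: \autoref{lem:typed-structure} to place $\mvar\arr\sigma$ among the positive occurrences of a frozen variable's types, the polarity-flipping subformula closure to get $\mvar\in\tneg(\fzt(\Phi))$, and \autoref{lem:subformula} to push this into $\tpos(\jdg{\Gamma}{t}{\tau})$, contradicting the side condition. The paper phrases it directly (concluding $\mvar\in\tpos(\jdg{\Gamma}{t}{\tau})$, hence $\mvar\neq\mset{}$) rather than by contradiction, but the content is identical.
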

\begin{proof}
  Let \ajdg{\Gamma'}{\varphi'}{\mu'}{s}{\mvar\arr\sigma} in $\Phi$
  with $s\in\struct[\varphi']$.
  By \autoref{lem:typed-structure}, there is $x\in\varphi'$
  such that $\mvar\arr\sigma\in\tpos(\Gamma'(x))$.
  Then $\mvar\in\tneg(\Gamma'(x))$ and $\mvar\in\tneg(\fzt(\Phi))$.
  By \autoref{lem:subformula},
  $\mvar\in\tpos(\ajdg{\Gamma}{\varphi}{\mu}{t}{\tau})$,
  thus $\mvar\neq\mset{}$.
\end{proof}

\begin{thm}[Typed reduction]\label{thm:typed-reduction}
  If \aderiv{\Gamma}{\varphi}{\mu}{t}{\tau}
  with $\mset{}\not\in\tpos(\jdg{\Gamma}{t}{\tau})$,
  then every \textup{\lscbn}-reduction \textup{$t\rescbn{\rho}{\varphi}{\mu}t'$}
  is at a typed position.
\end{thm}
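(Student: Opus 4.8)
The plan is to prove the statement by induction on the derivation of the reduction $t \rescbn{\rho}{\varphi}{\mu} t'$, treating the four possible reduction rules $\rho \in \{\rdb, \rlsv, \ri{x}, \rs{x}{v}\}$ uniformly and adding a subsidiary (mutual) induction on the auxiliary relations $\auxdb$ and $\auxlsv{\varphi}{\mu}$ of \autoref{fig:aux-rules}. Throughout, I fix once and for all the annotated derivation $\Phi$ together with its side condition $\mset{}\notin\tpos(\jdg{\Gamma}{t}{\tau})$ at the root, so that \autoref{lem:typed-structure-argument} is available on \emph{every} sub-judgment of $\Phi$; this way I never have to re-establish the side condition when descending into a subderivation. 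Concretely I prove the strengthened claim: for every subderivation $\Phi_s \vartriangleright \Gamma_s \vdash_{\varphi_s}^{\mu_s} s : \tau_s$ of $\Phi$ and every reduction $s \rescbn{\rho}{\varphi_s}{\mu_s} s'$ \emph{with the same annotations}, the position rewritten by that step (the $\rdb$-redex, or the occurrence of the variable touched by $\rlsv$, $\ri{x}$, or $\rs{x}{v}$) is a typed position of $\Phi_s$; the theorem is the instance $\Phi_s=\Phi$.

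The congruence rules \ruleAppLeft, \ruleAbsTop, \ruleAbsBot, \ruleEsLeft, \ruleEsLeftFrozen, \ruleEsRight are handled by inverting $\Phi_s$, matching its last rule, and invoking the induction hypothesis on the relevant premise, after which the rewritten position is prefixed by $0$ or $1$. Two small reconciliations are needed. First, the annotations must agree between the reduction premise and the typing premise, but the split rules do not line up syntactically: a step via \ruleEsLeftFrozen enlarges $\varphi$ by $x$ whereas \tyruleEs keeps it (and conversely \ruleEsLeft versus \tyruleEsStruct), and \ruleAppRight types its argument in mode $\top$ while \tyruleApp uses mode $\bot$. Both gaps are closed in the \emph{safe} direction only, using that reduction is monotone under enlarging $\varphi$, and that an annotated derivation can be reinterpreted with an enlarged $\varphi$ and with $\bot$ replaced by $\top$ without changing its set of typed positions (the only rule forced to change being \tyruleAbsBot, which becomes \tyruleAbsTop). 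Second, to reach the argument of an application or the right-hand side of a substitution one needs the pertinent multiset $\mvar$ to be nonempty: for \ruleEsRight this follows from the induction hypothesis applied to the $\ri{x}$-probe on the left component, which bottoms out at \ruleId/\tyruleVar and hence forces $x$ into the domain of the context; for \ruleAppRight --- the only genuinely interesting case --- it is precisely \autoref{lem:typed-structure-argument} applied to the left component, which is a structure, so its arrow type has a nonempty domain and the argument position is typed.

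The base cases are light. For \ruleDb the $\rdb$-redex is all of the current subterm $s$, hence at position $\varepsilon$, which is typed in any subderivation; no analysis of $\auxdb$ is needed. For \ruleId and \ruleSub the rewritten variable occurrence is again at position $\varepsilon$, typed by \tyruleVar. For \ruleLsv one unfolds the auxiliary derivation $s \auxlsv{\varphi_s}{\mu_s} s'$: its base case \ruleAuxLsv presents $s$ as $s_1\esub{x}{v}$ with $v$ a value, so $v\notin\struct[\varphi_s]$, forcing $\Phi_s$ to end with \tyruleEs whose first premise derives $s_1$ with the same annotations; the induction hypothesis for $\rho=\rs{x}{v}$ then locates the rewritten occurrence of $x$ at a typed position inside $s_1$, hence at a typed position of $\Phi_s$. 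The $\sigma$-rules \ruleAuxLsvSigma and \ruleAuxLsvSigmaFrozen, which float an explicit substitution out of the body of the substitution being fired, are dispatched by the same induction together with a routine commutation lemma stating that typing and typed positions are preserved by that rearrangement --- a stability result in the same vein as those of \autoref{sec:lambda-c}.

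I expect the main obstacle to be bookkeeping rather than mathematical depth: keeping the annotation parameters $\varphi$ and $\mu$ synchronized between the reduction derivation and the typing derivation across the split rules, and handling the substitution floating performed by $\auxlsv{\varphi}{\mu}$ when the list $\lctx$ is nonempty. The one conceptually load-bearing step --- that every argument of a typed structure sits at a typed position --- is delegated entirely to \autoref{lem:typed-structure-argument}, which itself rests on \autoref{lem:typed-structure} and the subformula property \autoref{lem:subformula}; once those are in hand, what remains is a disciplined case analysis.
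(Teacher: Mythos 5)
Your proposal is correct and follows essentially the same route as the paper: an induction on the reduction derivation, generalized so that \autoref{lem:typed-structure-argument} is available on every subderivation, whose only non-trivial cases are \ruleAppRight (where the structure hypothesis yields $\mvar\neq\mset{}$ and hence a typed argument) and \ruleEsRight (where the induction hypothesis on the $\ri{x}$ probe shows $x$ is typed in $t$, hence $\mvar\neq\mset{}$). The only divergences are that your mutual induction on the auxiliary relation $\auxlsv{\varphi}{\mu}$ and its commutation lemma are unnecessary for the stated conclusion --- the \textsf{dB}- or \rlsv-redex sits at the root of the currently visited subterm, a position that is typed in any subderivation, which is how the paper dispatches \ruleDb and \ruleLsv --- while your explicit reconciliation of the $\varphi$/$\mu$ annotations between the split reduction rules and the split typing rules addresses a bookkeeping point that the paper's write-up silently glosses over.
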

\begin{proof}
  We prove by induction on $t\rescbn{\rho}{\varphi}{\mu}t'$
  that, if $\smash{\aderiv{\Gamma}{\varphi}{\mu}{t}{\tau}}$ with $\Phi$ such that
  any typing judgment \ajdg{\Gamma'}{\varphi'}{\mu'}{s}{\mvar\arr\sigma}
  in $\Phi$ with $s\in\struct[\varphi']$ satisfies $\mvar\neq\mset{}$,
  then $t\rescbn{\rho}{\varphi}{\mu}t'$ reduces at a typed position
  (the restriction on $\Phi$ is enabled by
  \autoref{lem:typed-structure-argument}).
  Since all the other reduction cases concern positions that are systematically
  typed, we focus here on \ruleAppRight and \ruleEsRight.
  \begin{itemize}
  \item Case \ruleAppRight: $t\app u\rescbn{\rho}{\varphi}{\mu} t\app u'$
    with $t\in\struct$ and $u\rescbn{\rho}{\varphi}{\top}u'$, assuming
    \aderiv[\Phi]{\Gamma}{\varphi}{\mu}{t\app u}{\sigma}.
    By inversion of the last rule in $\Phi$ we know there is a subderivation
    \aderiv[\Phi']{\Gamma'}{\varphi}{\bot}{t}{\mvar\arr\sigma}
    and by hypothesis $\mvar\neq\mset{}$.
    Then $u$ is typed in $\Phi$ and we can conclude by induction hypothesis.
  \item Case \ruleEsRight: $t\esub{x}{u}\rescbn{\rho}{\varphi}{\mu}t\esub{x}{u'}$
    with $t\rescbn{\ri{x}}{\varphi}{\mu}t'$
    and $u\rescbn{\rho}{\varphi}{\bot}u'$, assuming
    $\smash{\aderiv[\Phi]{\Gamma}{\varphi}{\mu}{t\esub{x}{u}}{\tau}}$.
    By inversion of the last rule in $\Phi$ we kown there is a subderivation
    \aderiv[\Phi']{\Gamma',x:\mvar}{\varphi}{\mu}{t}{\tau}.
    By induction hypothesis we know that reduction
    $t\rescbn{\smash{\ri{x}}}{\varphi}{\mu}t'$ is at a typed position in $\Phi'$,
    thus $x$ is typed in $t$ and $\mvar\neq\mset{}$.
    Then $u$ is typed in $\Phi$ and we can conclude by induction
    hypothesis on $u\rescbn{\rho}{\varphi}{\bot}u'$.
    \qedhere
  \end{itemize}
\end{proof}

\begin{thm}[Completeness]\label{thm:completeness}
  If a \l-term $t$ is weakly normalizing in the \l-calculus,
  then $t$ is strongly normalizing in \textup{\lscbn}.
  Moreover, if $n_\beta$ is the normal form of $t$ in the \l-calculus,
  then any normal form \textup{$n_\rscbn$} of $t$ in \textup{\lscbn} is such that
  \textup{$n_\rscbn\unfold = n_\beta$}.
\end{thm}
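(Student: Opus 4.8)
The plan is to assemble the ingredients announced at the beginning of this subsection. Since $t$ is weakly normalizing in the \l-calculus, \autoref{thm:typability} provides a typing judgment \jdg{\Gamma}{t}{\tau} with $\mset{}\not\in\tpos(\jdg{\Gamma}{t}{\tau})$, witnessed by some derivation $\Phi_0$. Fixing $\mu=\top$ and a finite set $\varphi\supseteq\fv(t)$ --- the parameters with which the top-level relation $\red$ is defined --- the typing derivation annotation lemma turns $\Phi_0$ into a derivation $\Phi$ of the annotated judgment \ajdg{\Gamma}{\varphi}{\top}{t}{\tau} having the very same set of typed positions; in particular the side condition $\mset{}\not\in\tpos(\jdg{\Gamma}{t}{\tau})$ is preserved.

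I would then prove strong normalization by exhibiting a measure that strictly decreases along every $\red$-step. Consider one step $t\red t'$, that is $t\rescbn{\rho}{\varphi}{\top}t'$ with $\rho\in\set{\rdb,\rlsv}$. By \autoref{thm:typed-reduction} it occurs at a typed position of $\Phi$, hence at a typed position of the underlying un-annotated derivation. A $\red$-step being in particular a \lc-reduction step, \autoref{thm:weighted subject reduction} then yields a strictly smaller typing derivation of \jdg{\Gamma}{t'}{\tau}; since $\Gamma$ and $\tau$ are untouched, the positivity condition still holds for $t'$, so one may re-annotate and repeat the argument. Therefore the size of the typing derivation, a natural number, strictly decreases at each $\red$-step, so no infinite $\red$-sequence starts from $t$: the term $t$ is strongly normalizing in \lscbn, and every maximal $\red$-sequence from $t$, being finite, ends on a term $n_\rscbn$ which by \autoref{lem:normal-forms-lscbn} belongs to $\nf$. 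The only point requiring some care here is this bookkeeping of re-establishing the annotated hypothesis and the positivity condition after each step; it goes through precisely because weighted subject reduction keeps the typing environment and the type verbatim. All the genuinely delicate work --- the interaction of frozen variables, structures and typed positions --- is already isolated in \autoref{thm:typed-reduction} and the subformula lemmas it rests on; that theorem would be the main obstacle were it not already established.

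It remains to identify the normal form. Iterating \autoref{lem:simulation} along a reduction $t\red^* n_\rscbn$ gives $t\unfold\reb^* n_\rscbn\unfold$, and $t\unfold=t$ since $t$ is a pure \l-term. By \autoref{lem:normal-forms-lambda}, $n_\rscbn\unfold$ is a $\beta$-normal form, and since $\beta$-reduction is confluent it must be the unique $\beta$-normal form $n_\beta$ of $t$. Hence $n_\rscbn\unfold=n_\beta$, which finishes the argument.
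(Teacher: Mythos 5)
Your proof is correct and follows essentially the same route as the paper: typability gives a positively-typed judgment, the annotation lemma plus typed reduction plus weighted subject reduction give a strictly decreasing derivation size and hence strong normalization, and the normal form is identified via the normal-form lemmas, simulation, and uniqueness of $\beta$-normal forms. The extra bookkeeping you spell out (re-annotating after each step and noting that the positivity condition survives because $\Gamma$ and $\tau$ are preserved) is implicit in the paper's terser argument and is exactly the right justification.
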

\begin{proof}
  Let $t$ be a pure \l-term that admits a normal form $n_\beta$ for
  $\beta$-reduction. By \autoref{thm:typability} there exists
  a derivable typing judgment \jdg{\Gamma}{t}{\tau} such that
  $\mset{}\not\in\tpos(\jdg{\Gamma}{t}{\tau})$.
  Thus by Theorems~\ref{thm:typed-reduction}
  and~\ref{thm:weighted subject reduction}, the term $t$ is
  strongly normalizing for \red.
  Let $t\re_\rscbn^*n_\rscbn$ be a maximal reduction in \lscbn.
  By \autoref{lem:normal-forms-lscbn}, $n_\rscbn\in\nf$, and by
  \autoref{lem:normal-forms-lambda}, $n_\rscbn\unfold$ is a normal form
  in the \l-calculus.
  Moreover, by simulation (\autoref{lem:simulation}), there is a reduction
  $t\unfold\reb^*n_\rscbn\unfold$. By uniqueness of the normal form in the
  \l-calculus, $n_\rscbn\unfold=n_\beta$.
\end{proof}

Note that, despite the fact that \lscbn does not enjoy the diamond property,
our theorems of soundness (\autoref{thm:soundness}) and completeness
(\autoref{thm:completeness}) imply that, in \lscbn, a term is weakly normalizing
if and only if it is strongly normalizing.

These termination and completeness results also allow us to formalize our
claim that \lscbn reduces only ``needed'' redexes.

\begin{thm}[Neededness]\label{thm:neededness}
  %% Consider a weakly normalizing \l-term $t_0$, and a reduction sequence
  %% \textup{$t_0\re_\rscbn^*t\rescbn{\rdb}{\varphi}{\mu}t'$} in \textup{\lscbn}.
  %% Then the redex $r$ reduced in $t$ the step
  %% \textup{$t\rescbn{\rdb}{\varphi}{\mu}t'$}
  %% is such that the corresponding $\beta$-redex $r'$ in $t\unfold$ is needed.
  Consider a \textup{\lscbn}-term $t$ such that $t\unfold$ is weakly
  normalizing in the \l-calculus.
  For any \textup{\rdb}-reduction \textup{$t\rescbn{\rdb}{\varphi}{\mu}t'$}
  in \textup{\lscbn}, the subterm $r$ of $t$ that is reduced is such that at
  least one occurrence of the corresponding $\beta$-redex $r'$ in $t\unfold$
  is needed.
\end{thm}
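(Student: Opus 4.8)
The plan is to combine three ingredients already in the paper — that \lscbn reduces only at typed positions (\autoref{thm:typed-reduction}), that typed positions survive unfolding, and a new lemma: in a pure \l-term carrying an intersection-typing derivation with no positive occurrence of \mset{}, a \(\beta\)-redex sitting at a typed position is \emph{needed}, \emph{i.e.}\ every \(\beta\)-reduction to normal form contracts one of its residuals.

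First I would produce a suitable typing derivation of $t$. Since $t\unfold$ is weakly normalizing, so is the pure \l-term $t\revert$ (using $t\revert\reb^* t\unfold$), and \autoref{thm:typability} yields a judgment \jdg{\Gamma}{t\revert}{\tau} with $\mset{}\not\in\tpos(\jdg{\Gamma}{t\revert}{\tau})$; \autoref{prop:revert typing} transports it to a derivation \deriv[\Phi_0]{\Gamma}{t}{\tau} with the same judgment, hence still free of positive \mset{}. Annotating $\Phi_0$ with the set $\varphi$ and flag $\mu$ of the given reduction (at top level, $\varphi$ consists of the free variables of $t$ and $\mu=\top$) via the typing-derivation-annotation lemma, and applying \autoref{thm:typed-reduction}, the application $r = (\l{x}.s)\lctx\app u$ contracted by the step $t\rescbn{\rdb}{\varphi}{\mu}t'$ sits at a typed position $p$ of $\Phi_0$.

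Next I would transfer this to the unfolding. Iterating the substitution lemma underlying subject reduction turns $\Phi_0$ into a derivation \deriv[\Xi]{\Gamma}{t\unfold}{\tau} (same judgment, still free of positive \mset{}), and a case analysis on explicit substitutions shows that each typed position of $\Phi_0$ yields at least one typed position of $\Xi$: whenever such a position lies inside the content $w$ of a substitution $s\esub{x}{w}$, then $w$ is typed, so $x$ is assigned a non-empty multiset, so $x$ occurs at a typed position of $s$, and $w\unfold$ is duplicated there at least once. Since $r$ is a \rdb-redex, its image $r\unfold$ has the shape $(\l{x}.s')\app u\unfold$, a genuine \(\beta\)-redex, of which the above gives at least one occurrence $q$ in $t\unfold$ located at a typed position of $\Xi$; in particular $q$ exists, which already establishes that the corresponding \(\beta\)-redex $r'$ does occur in $t\unfold$.

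Finally I would prove that $q$ is needed, and conclude. Suppose a reduction $t\unfold\reb^* n$ to a \(\beta\)-normal form $n$ contracts no residual of $q$. Along this reduction the residuals of $q$ stay at typed positions: a contraction at a non-typed position leaves the current derivation (and its typed positions) untouched, while a contraction at a typed position is tracked by \autoref{thm:weighted subject reduction}, whose proof carries the subderivation standing at a typed position onto the subderivations standing at the residuals of that position. Crucially, a residual of $q$ cannot be erased: erasing a subterm amounts to contracting a redex whose bound variable does not occur in its body, which drops only the (necessarily untyped) argument, whereas the residuals of $q$ are typed. Hence $n$ would still contain a residual of $q$ — a \(\beta\)-redex — contradicting normality of $n$. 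Therefore every reduction of $t\unfold$ to normal form contracts a residual of $q$, \emph{i.e.}\ $q$ is a needed occurrence of $r'$. The main obstacle is precisely this refinement of \autoref{thm:weighted subject reduction}: one has to inspect its proof to check that it not only shrinks the derivation but also sends the subderivation at a typed position onto the subderivations at the residual positions, so that ``being at a typed position'' is genuinely preserved under arbitrary \(\beta\)-reductions and not just under the contracted step itself.
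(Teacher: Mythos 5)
Your proposal is correct in outline but follows a genuinely different route from the paper. You prove neededness \emph{positively}: transport the typing of $t$ (obtained, as in the paper, via $t\revert$, \autoref{thm:typability} and \autoref{prop:revert typing}) through \autoref{thm:typed-reduction} to place $r$ at a typed position, push typed positions through unfolding, and then show that a $\beta$-redex at a typed position of a derivation with no positive \mset{} must be contracted by every normalizing reduction, because typed positions have typed residuals and cannot be erased (erasable arguments carry the empty multiset in this relevant system). This is essentially the classical quantitative-types characterization of neededness, and it is sound, but it forces you to develop two pieces of machinery the paper never states: a substitution lemma tracking typed positions through unfolding, and the refinement of \autoref{thm:weighted subject reduction} that you correctly flag as the main obstacle, namely that subject reduction maps the subderivation at a typed position onto subderivations at its residuals under \emph{arbitrary} $\beta$-steps, not just the contracted one. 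The paper instead argues by contradiction with an $\Omega$-replacement trick: if no occurrence of $r'$ were needed, then $t(\Omega)_p$ (the term $t$ with $r$ replaced by a diverging term) would still have a weakly normalizing unfolding, hence would be typable and, by Theorems~\ref{thm:typed-reduction} and~\ref{thm:weighted subject reduction}, strongly normalizing in \lscbn; yet the very eligibility of position $p$ for reduction, which depends only on the surrounding context, yields an infinite \lscbn-reduction inside $\Omega$. That argument reuses the completeness machinery as a black box and needs no residual tracking at all, at the price of being less informative; your route is heavier but yields the sharper statement that the typed occurrence of $r'$ itself is needed.
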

\begin{proof}
  Suppose no occurrence of the redex $r'$ is needed in $t\unfold$.
  Then by definition there is a $\beta$-reduction sequence of $t\unfold$
  that leads to the normal form
  of $t\unfold$ without reducing any copy of~$r'$ nor any of its residuals (and
  the normal form, being normal, does not contain any residual of~$r'$).
  Write $p$ the position of $r$ in $t$, and $t(\Omega)_p$ the term obtained
  from $t$ by replacing~$r$ by the diverging term $\Omega$.
  Then the term $(t(\Omega)_p)\unfold$ is equal to the term obtained from
  $t\unfold$ by replacing each copy of $r'$ by $\Omega$, and can still be
  $\beta$-normalized by the same $\beta$-reduction sequence. Thus, this term is
  weakly normalizing in the \l-calculus.

  Since $(t(\Omega)_p)\revert$ $\beta$-reduces to $(t(\Omega)_p)\unfold$,
  we deduce that the pure term $(t(\Omega)_p)\revert$ is also weakly
  normalizing in the \l-calculus. Thus by \autoref{thm:typability}
  there is a typing judgment \jdg{\Gamma}{(t(\Omega)_p)\revert}{\tau}
  such that $\mset{}\not\in\tpos(\jdg{\Gamma}{t}{\tau})$, and by
  \autoref{prop:revert typing} we also have \jdg{\Gamma}{t(\Omega)_p}{\tau}.
  Then by Theorems~\ref{thm:typed-reduction}
  and~\ref{thm:weighted subject reduction}, the term $t(\Omega)_{p}$ is
  strongly normalizing in \lscbn.

  However, if $\re_\rscbn$ allows reducing $r$ in $t$, then it also allows
  reducing $\Omega$ in $t(\Omega)_{p}$, as well as any reduct of $\Omega$
  at the same position, which implies the existence of an infinite
  $\re_\rscbn$-reduction sequence from $t$: contradiction.  
  Thus at least one occurrence of the $\beta$-redex $r'$ is needed
  in~$t\unfold$.
\end{proof}

\section{Relatively optimal strategies}
\label{sec:optimal-strategies}

Our proposed \lscbn-calculus guarantees that, in the process of reducing
a term to its strong normal form, only needed redexes are ever reduced.
This does not tell anything about the length of reduction sequences,
though. Indeed, a term might be substituted several times before being
reduced, thus leading to duplicate computations. To prevent this
duplication, we introduce a notion of \emph{local normal form}, which
is used to restrict the \emph{value} criterion in the \ruleAuxLsv
rule.
This restricted calculus, named \lscbnp, has the same rules as \lscbn
(Figures~\ref{fig:scbn-rules} and~\ref{fig:aux-rules}),
except that \ruleAuxLsv is replaced by the rule shown in
\autoref{fig:aux-rules-plus}.

\begin{figure}[t]
  \begin{mathpar}
    \inferrule[\ruleAuxLsv]{%
      t \rescbn{\rs{x}{v}}{\varphi}{\mu} t'%
      \and v\in\lnf[\varphi,\emptyset,\bot]%
      \and v\text{ value}
    }{%
    t\esub{x}{v} \auxlsv{\varphi}{\mu} t'\esub{x}{v}}
  \end{mathpar}

  \caption{New rule \ruleAuxLsv for \lscbnp.}
  \label{fig:aux-rules-plus}
\end{figure}

We then show that this restriction is
strong enough to guarantee the diamond property. Finally, we explain why
our restricted calculus only produces minimal sequences,
among all the reduction sequences allowed by \lscbn. This makes it
a relatively optimal strategy.

\subsection{Local normal forms}

In \lc and \lscbn, substituted terms can be arbitrary values. In
particular, they might be abstractions whose body contains some redexes. Since
substituted variables can appear multiple times, this would cause the
redex to be reduced several times if the value is substituted too soon.
Let us illustrate this phenomenon on the following example, where $\textsf{id} =
\l{x}.x$. The sequence of reductions does not depend on the set $\varphi$
of frozen variables nor on the position~$\mu$, so we do not write them to
lighten a bit the notations. Subterms that are about to be substituted or
reduced are underlined.

\[\begin{array}{rcl}
\underline{(\l{w}.w\app w)\app(\l{y}.\textsf{id}\app y)}
&\literescbn{db}&(\underline{w}\app w)\esub{w}{\l{y}.\textsf{id}\app y}\\
&\literescbn{lsv}&((\l{y}.\underline{\textsf{id}\app y})\app w)\esub{w}{\l{y}.\textsf{id}\app y}\\
&\literescbn{db}&\underline{((\l{y}.x\esub{x}{y})\app w)}\esub{w}{\l{y}.\textsf{id}\app y}\\
&\literescbn{db}&\underline{x}\esub{x}{y}\esub{y}{w}\esub{w}{\l{y}.\textsf{id}\app y}\\
&\literescbn{lsv$\null\times 3$}&(\l{y}.\underline{\textsf{id}\app y})\esub{x}{\l{y}.\textsf{id}\app y}\esub{y}{\l{y}.\textsf{id}\app y}\esub{w}{\l{y}.\textsf{id}\app y}\\
&\literescbn{db}&(\l{y}.x\esub{x}{y})\esub{x}{\l{y}.\textsf{id}\app y}\esub{y}{\l{y}.\textsf{id}\app y}\esub{w}{\l{y}.\textsf{id}\app y}\\
\end{array}\]
Notice how $\textsf{id}\app y$ is reduced twice, which would not have
happened if the second reduction had focused on the body of the
abstraction.

This suggests that a substitution should only be allowed if the
substituted term is in normal form. But such a strong requirement is
incompatible with our calculus, as it would prevent the abstraction
$\l{y}.y\app\Omega$ (with $\Omega$ a diverging term) to ever be
substituted in the following example, thus preventing normalization (with $a$ a closed term).
\[\begin{array}{rcl}
\underline{w}\app (\l{x}.a)\esub{w}{\l{y}.y\app\Omega}
&\literescbn{lsv}&\underline{(\l{y}.y\app\Omega)\app (\l{x}.a)}\esub{w}{\l{y}.y\app\Omega}\\
&\literescbn{db}&(\underline{y}\app \Omega)\esub{y}{\l{x}.a}\esub{w}{\l{y}.y\app\Omega}\\
&\literescbn{lsv}&(\underline{(\l{x}.a)\app \Omega})\esub{y}{\l{x}.a}\esub{w}{\l{y}.y\app\Omega}\\
&\literescbn{db}&a\esub{x}{\Omega}\esub{y}{\l{x}.a}\esub{w}{\l{y}.y\app\Omega}\\
\end{array}\]
Notice how the sequence of reductions has progressively removed all the
occurrences of $\Omega$, until the only term left to reduce is the closed
term $a$.

To summarize, substituting any value is too permissive and can cause
duplicate computations, while substituting only normal forms is too
restrictive as it prevents normalization. So, we need some relaxed notion
of normal form, which we call \emph{local normal form}. The intuition is
as follows. The term $\l{y}.y\app\Omega$ is not in normal form, because
it could be reduced if it was in a $\top$ position. But in a $\bot$
position, variable $y$ is not frozen, which prevents any further
reduction of $y\app\Omega$. The inference rules are
presented in \autoref{fig:local-nf}.

\begin{figure}[t]
  \begin{mathpar}
    \inferrule[\textsc{var}]{x\in\varphi\cup\omega}{x\in\Vewt}
    \and
    \inferrule[\textsc{abs-$\top$}]{t\in \Vexwtt}{\l{x}.t\in \Vewtt}
    \and
    \inferrule[\textsc{abs-$\bot$}]{t\in \Vewxtb} {\l{x}.t\in\Vewtb}
    \and
    \inferrule[\textsc{es}]{t\in \Vewt}{t[x\setminus u]\in \Vewt}
    \\
    \inferrule[\textsc{app-$\varphi$}]{t\in \Vewt \and t\in \struct
\and u\in\Vewtt}
    {t~u\in\Vewt}
    \and
    \inferrule[\textsc{app-$\omega$}]{t\in \Vewt \and t\in
      \struct[\omega] }{t~u\in\Vewt}
    \\
    \inferrule[\textsc{es-$\varphi$}]{t\in \Vexwt \and u\in \Vewtb \and
    u\in \struct}{t[x\setminus u] \in \Vewt}
    \and
    \inferrule[\textsc{es-$\omega$}]{t\in \Vewxt \and u\in \struct[\omega]}
    {t[x\setminus u] \in \Vewt}
  \end{mathpar}
  \caption{Local normal forms.}
  \label{fig:local-nf}
\end{figure}

If an abstraction is in a $\top$ position, its variable is added to the
set $\varphi$ of frozen variables, as in \autoref{fig:scbn-rules}. But
if an abstraction is in a $\bot$ position, its variable is added to a new
set $\omega$, as shown in rule \textsc{abs-$\bot$} of
\autoref{fig:local-nf}. That is what will happen to $y$ in
$\l{y}.y\app\Omega$.

For an application, the left part is still required to be
a structure. But if the leading variable of the structure is not frozen
(and thus in $\omega$), our \lscbn-calculus guarantees that no reduction
will occur in the right part of the application. So, this part
does not need to be constrained in any way. This is rule
\textsc{app-$\omega$} of \autoref{fig:local-nf}. It applies to our
example, since~$y\app\Omega$ is a structure led by $y\in\omega$.
Substitutions are handled in a similar way, as shown by rule~\textsc{es-$\omega$}.

\paragraph*{Example.}
The justification that our example argument $\l{y}.y\app\Omega$ is a local
normal form in a \emph{non-}top-level-like position is summed up by the
following derivation.
\begin{mathpar}
  \inferrule*[Right=\textsc{abs-$\bot$}]{%
    \inferrule*[Right=\textsc{app-$\omega$}]{%
      \inferrule*[Left=\textsc{var}]{%
        y\in\set{y}
      }{
        y\in\lnf[\emptyset,\set{y},\bot]
      }
      \and
      y\in\struct[\omega]
    }{%
      y\app\Omega\in\lnf[\emptyset,\set{y},\bot]
    }
  }{%
    \l{y}.y\app\Omega\in\lnf[\emptyset,\emptyset,\bot]
  }
\end{mathpar}

The definition of local normal forms (\autoref{fig:local-nf}) is consistent
with the already given definition of normal forms (\autoref{fig:normalforms}).
In a sense, normal forms are top-level local normal forms.

\begin{lem}[Local normal forms]\label{lem:local-normal-form}
  $t\in\nf$ if and only if $t\in\lnf[\varphi,\emptyset,\top]$.
\end{lem}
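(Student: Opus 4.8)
The plan is to prove the two inclusions $\nf\subseteq\lnf[\varphi,\emptyset,\top]$ and $\lnf[\varphi,\emptyset,\top]\subseteq\nf$ separately. In both directions, once the right auxiliary fact about structures is in place, the rules of \autoref{fig:normalforms} and the rules of \autoref{fig:local-nf} reachable from the parameters $(\varphi,\emptyset,\top)$ line up almost one-to-one, and the proof becomes a rule induction.

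The auxiliary fact I would isolate is: \emph{for a structure $s\in\struct$ one has $s\in\lnf[\varphi,\emptyset,\top]$ iff $s\in\lnf[\varphi,\emptyset,\bot]$; moreover, any term belonging to $\lnf[\varphi,\emptyset,\top]$ or $\lnf[\varphi,\emptyset,\bot]$ that is a structure over some set of variables is already a structure over $\varphi$}. The intuition is that the flag $\mu$ only ever affects a local-normal-form derivation at the $\l$-abstraction nodes lying on the leftmost spine, and the leftmost spine of a structure $x\app t_1\app\cdots\app t_n$ (possibly interleaved with explicit substitutions) ends on the frozen variable $x$, never on an abstraction. I would prove this by induction on $s$, alongside the (easy, independent) monotonicity of $\nf$, $\struct$ and $\lnf$ in the set of frozen variables. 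The only delicate case is $s=s_1\esub{x}{s_2}$ that is a structure only through the last rule of \autoref{fig:structures} (that is, with $s_1\in\struct[\varphi\cup\set{x}]$ and $s_2\in\struct$ but $s_1\notin\struct$): there, the leading occurrence of $x$ in $s_1$ cannot be accounted for by any rule of \autoref{fig:local-nf} because $x\notin\varphi$ and $\omega=\emptyset$, so $s_1\notin\lnf[\varphi,\emptyset,\mu]$ and the case is vacuous.

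For $\nf\subseteq\lnf[\varphi,\emptyset,\top]$, I would check that the $\varphi$-indexed family $\lnf[\varphi,\emptyset,\top]$ is closed under the five inference rules of \autoref{fig:normalforms}; since $\nf$ is the least such family, this yields the inclusion. The rules for a frozen variable, an abstraction, an application led by a structure, and an unconstrained explicit substitution are matched verbatim by the rules \textsc{var}, \textsc{abs-$\top$}, \textsc{app-$\varphi$} and \textsc{es} of \autoref{fig:local-nf}. The only rule that requires the auxiliary fact is the one deriving $t\esub{x}{u}\in\nf$ from $t\in\nf[\varphi\cup\set{x}]$, $u\in\nf$ and $u\in\struct$: since rule \textsc{es-$\varphi$} of \autoref{fig:local-nf} checks the content of the substitution in a $\bot$ position, one passes from $u\in\lnf[\varphi,\emptyset,\top]$ to $u\in\lnf[\varphi,\emptyset,\bot]$ using that $u$ is a structure.

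For the converse, I would argue by induction on the derivation of $t\in\lnf[\varphi,\emptyset,\top]$, strengthened with the clause ``if $t\in\lnf[\varphi,\emptyset,\bot]$ and $t\in\struct$ then $t\in\nf$''. The key preliminary remark is that, starting from $(\varphi,\emptyset,\top)$, the set $\omega$ stays empty along the whole derivation: rules \textsc{app-$\omega$} and \textsc{es-$\omega$} never fire because $\struct[\emptyset]$ is empty, and rule \textsc{abs-$\bot$} never fires because every $\bot$-flagged position reachable from $(\varphi,\emptyset,\top)$ carries a structure, which is not an abstraction. Hence only \textsc{var}, \textsc{abs-$\top$}, \textsc{app-$\varphi$}, \textsc{es} and \textsc{es-$\varphi$} occur, and each maps to the matching rule of \autoref{fig:normalforms}; the auxiliary fact is used once more to recognise that the $\bot$-flagged structure content of an \textsc{es-$\varphi$} step — and the left premise of an \textsc{es} step whose conclusion is a structure — is a genuine normal form over $\varphi$. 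I expect the main obstacle, in both directions, to be precisely this reconciliation of the two explicit-substitution rules (both freeze the bound variable when the content is a structure, but \autoref{fig:local-nf} constrains that content as a $\bot$-local normal form whereas \autoref{fig:normalforms} only asks it to be a normal form); the auxiliary structure fact, together with the freshness convention and the monotonicity lemmas, is what closes that gap, and everything else is bookkeeping.
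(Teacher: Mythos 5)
Your proposal is correct and follows essentially the same route as the paper: each direction by induction on the (local) normal-form derivation, with the crux being exactly the auxiliary lemma that a structure $u\in\struct$ satisfies $u\in\lnf[\varphi,\omega,\top]$ iff $u\in\lnf[\varphi,\omega,\bot]$, which the paper also isolates to handle rule \textsc{es-$\varphi$}. Your additional observations (that $\struct[\emptyset]=\emptyset$ kills the $\omega$-rules, and that a term headed by a non-frozen variable cannot be a local normal form, which makes the fourth structure rule's awkward subcase vacuous) are correct refinements of the same argument.
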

\begin{proof}
  Each direction is by induction on the derivation of (local) normality.
  Dealing with the rule \textsc{es-$\varphi$} requires a lemma stating that
  when $u\in\struct$, we have $u\in\Vewtt$ if and only if $u\in\Vewtb$.
\end{proof}

This implies that the two calculi \lscbn and \lscbnp have the same normal
forms, and that the completeness theorem stated for
\lscbn (\autoref{thm:completeness}) is still valid for the restricted calculus~\lscbnp.

\subsection{Diamond property}

As mentioned before, in both \lc and \lscbn, terms might be substituted
as soon as they are values, thus potentially causing duplicate
computations, and even breaking confluence.
Things are different however in our restricted calculus \lscbnp.

Consider for instance the counterexample to confluence given at the
end of Section~\ref{sec:red-lscbn}. Given the term
$x\esub{x}{\l{y}.(\l{z}.z)\app y}$, the calculus \lscbn allows
both substituting the value $\l{y}.(\l{z}.z)\app y$ or reducing it
further. The restricted calculus \lscbnp, however, prevents the
substitution of $\l{y}.(\l{z}.z)\app y$, since this value is not a
local normal form. Hence, \lscbnp allows only one of the two paths
that are possible in \lscbn, which is enough to restore confluence.
Actually, \lscbnp even enjoys the stronger \emph{diamond property},
that is, confluence in one step.

\begin{thm}[Diamond]\label{thm:diamond}
Suppose \textup{$t\rescbnp{\rho_1}{\varphi}{\mu}t_1$} and
\textup{$t\rescbnp{\rho_2}{\varphi}{\mu}t_2$} with $t_1\neq t_2$.
Assume that, if $\rho_1$ and
$\rho_2$ are \textup{\textsf{sub}} or \textup{\textsf{id}}, then they apply
to separate variables and that, if $\rho_1$ or $\rho_2$ is
\textup{\textsf{sub}}, then it applies to a variable that is not in $\varphi$.
Then there exists $t'$ such that
\textup{$t_1\rescbnp{\rho_2}{\varphi}{\mu}t'$} and
\textup{$t_2\rescbnp{\rho_1}{\varphi}{\mu}t'$}.
\end{thm}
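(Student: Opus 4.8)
The plan is to prove the statement by induction on the term $t$, after first generalizing it so that $\rho_1$ and $\rho_2$ range over \emph{all} reduction labels (including the auxiliary ones $\ri{x}$ and $\rs{x}{v}$), keeping the two side conditions. One then performs a case analysis on the last inference rule of each of the two derivations $t\rescbnp{\rho_1}{\varphi}{\mu}t_1$ and $t\rescbnp{\rho_2}{\varphi}{\mu}t_2$. Three families of cases are essentially routine. (a) When both derivations enter the same immediate subterm of $t$, one applies the induction hypothesis to that subterm; to align the two contexts first, one uses that $\rescbnp{\rho}{\varphi}{\mu}$ is monotone in $\varphi$ (so that if one derivation used \ruleEsLeft and the other \ruleEsLeftFrozen, both sub-reductions can be viewed at the enlarged frozen set $\varphi\cup\set{x}$), and one checks that the side conditions survive, which they do since $\rho$ is carried unchanged and, by the freshness convention, a variable reduced by \ruleSub differs from any bound variable crossed and hence stays out of the possibly enlarged frozen set. (b) When the two derivations enter disjoint subterms (\ruleAppLeft versus \ruleAppRight, \ruleEsLeft or \ruleEsLeftFrozen versus \ruleEsRight, and so on), the redexes are independent and one closes the diamond by firing each step on the result of the other; the only point is that the side conditions of the two steps survive each other, which uses that structures are stable under reduction (\autoref{lem:stab-struct}) and that the probe $t\rescbnp{\ri{x}}{\varphi}{\mu}t$ is stable under reduction elsewhere. (c) When one derivation is a base step \ruleDb or \ruleLsv and the other reduces strictly inside a component of the contracted redex ($t$, $u$, or a surrounding explicit substitution — note that the argument of a \rdb-redex is never at a reducible position, which removes a sub-case), the base step is a mere rearrangement — $\beta$ at a distance turns an application into an explicit substitution, and \rlsv copies a value which in \lscbnp is a local normal form — so the two steps commute; the $\sigma$-rules \ruleAuxDbSigma, \ruleAuxLsvSigma and \ruleAuxLsvSigmaFrozen are handled likewise, by a nested induction on the length of the explicit substitution list.

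The one genuinely new critical pair, and the reason \lscbnp behaves better than \lscbn, is the one between an \ruleLsv step substituting a value $v$ and a reduction taking place inside that very $v$. In \lscbn this pair is not joinable — it is precisely the failure of confluence exhibited at the end of Section~\ref{sec:red-lscbn}. In \lscbnp it cannot arise at all: the premise $v\in\lnf[\varphi,\emptyset,\bot]$ of the restricted rule \ruleAuxLsv forces $v$ to be a local normal form, and the key auxiliary lemma is that a local normal form is irreducible in the matching context — in particular $v\in\lnf[\varphi,\emptyset,\bot]$ entails that there is no step $v\rescbnp{\rho}{\varphi}{\bot}v'$, which is exactly the mode in which the content of an explicit substitution is reduced by rule \ruleEsRight. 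This lemma is proved by induction on the derivation of local normality, in the spirit of \autoref{lem:normal-forms-lscbn} and \autoref{lem:local-normal-form}; the subtle point is the interplay of $\varphi$ and $\omega$, since a variable of $\omega$ makes its application a structure for rule \textsc{app-$\omega$} but not a \emph{reducible} one, because \ruleAppRight demands a $\varphi$-structure. A reduction occurring inside the list of explicit substitutions that may surround $v$ does not conflict with \ruleLsv and falls under case (c).

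It then remains to discharge a handful of bookkeeping lemmas, each by a straightforward induction: stability of structures under reduction (\autoref{lem:stab-struct}); monotonicity of the reduction relation in $\varphi$; and stability of the probe relation, namely that $t\rescbnp{\ri{x}}{\varphi}{\mu}t$ together with $t\rescbnp{\rho}{\varphi'}{\mu}t'$ imply $t'\rescbnp{\ri{x}}{\varphi}{\mu}t'$ whenever $x\notin\varphi$ and $\varphi\subseteq\varphi'$ — this holds because the occurrence of $x$ witnessing reachability lies on the rigid spine of the term and never inside the argument of a non-structure application, which is exactly where the hypothesis $x\notin\varphi$ is used. These lemmas are also where the side conditions of the statement earn their keep: a \ruleId-probe on $x$ together with a \ruleSub-step on the \emph{same} $x$ would \emph{not} close into a diamond, since substituting the unique reachable occurrence of $x$ destroys its reachability, whereas \ruleId- and \ruleSub-steps acting on distinct variables do commute; and the requirement that a \ruleSub-reduced variable lie outside $\varphi$ is precisely what makes the spine argument go through. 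I expect the main obstacle of the whole proof to be getting the local-normal-form irreducibility lemma and the probe-stability lemma stated at the right level of generality over $\varphi$, $\omega$, and $\mu$, together with the sheer number of commutation sub-cases among the $\sigma$-variants of the auxiliary relations; none of these is conceptually deep, but each demands close attention to the freshness convention.
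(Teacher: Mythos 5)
Your proposal is correct and follows essentially the same route as the paper's proof: generalize the statement to cover the auxiliary relations, induct on the term, and do a case analysis on the pair of last rules, supported by stability of structures, weakening in $\varphi$, and a strengthening/probe-stability fact (you package the paper's separate weakening and strengthening lemmas into a single probe-stability lemma, which is equivalent). Your explicit identification of the \ruleLsv-versus-reduction-inside-$v$ critical pair, ruled out by irreducibility of local normal forms in mode $\bot$, is exactly the point that makes \lscbnp (unlike \lscbn) diamond, even though the paper's selected subcases do not display it.
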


\begin{proof}
  The statement of the diamond theorem has first to be
  generalized so that the steps $t\to t_1$ and $t\to t_2$ can use the main
  reduction \rescbn{\rho}{\varphi}{\mu} or
  the auxiliary reductions \auxdb and \auxlsv{\varphi}{\mu}.
  The proof is then an induction on the size of $t$, with tedious reasoning
  by case on the shape of $t$ and on the last inference rule applied on each
  side. Most cases are rather unsurprising.
  We present here selected subcases applying to the shape $t\esub{x}{u}$,
  which illustrate the main subtleties encountered in the whole proof.
  \begin{itemize}
  \item Case \ruleEsLeft vs \ruleEsRight. More precisely we have
    \begin{itemize}
    \item $t\esub{x}{u}\rescbnp{\rho_1}{\varphi}{\mu}t_1\esub{x}{u}$
      by rule \ruleEsLeft with $t\rescbnp{\rho_1}{\varphi}{\mu}t_1$,
      and
    \item $t\esub{x}{u}\rescbnp{\rho_2}{\varphi}{\mu}t\esub{x}{u_2}$
      by rule \ruleEsRight with $t\rescbnp{\ri{x}}{\varphi}{\mu}t$
      and $u\rescbnp{\rho_2}{\varphi}{\mu}u_2$.
    \end{itemize}
    Notice that, by freshness of $x$, the rule $\rho_1$ cannot be a
    \textsf{sub} or \textsf{id} applying to the variable $x$. Then we
    can apply our induction hypothesis to
    $t\rescbnp{\rho_1}{\varphi}{\mu}t_1$ and
    $t\rescbnp{\ri{x}}{\varphi}{\mu}t$ to deduce that
    $t_1\rescbnp{\ri{x}}{\varphi}{\mu}t_1$.
    With rule \ruleEsRight we then deduce
    $t_1\esub{x}{u}\rescbnp{\rho_2}{\varphi}{\mu}t_1\esub{x}{u_2}$.
    Moreover, the rule \ruleEsLeft applied to
    $t\rescbnp{\rho_1}{\varphi}{\mu}t_1$ immediately gives
    $t\esub{x}{u_2}\rescbnp{\rho_2}{\varphi}{\mu}t_1\esub{x}{u_2}$
    and closes the case.
  \item Case \ruleEsLeftFrozen vs \ruleEsRight. This is superficially similar to the previous one,
    but adds some subtleties, as we now have
    \begin{itemize}
    \item $t\esub{x}{u}\rescbnp{\rho_1}{\varphi}{\mu}t_1\esub{x}{u}$
      by rule \ruleEsLeftFrozen with $u\in\struct$ and
      $t\rescbnp{\rho_1}{\varphi\cup\set{x}}{\mu}t_1$, and
    \item $t\esub{x}{u}\rescbnp{\rho_2}{\varphi}{\mu}t\esub{x}{u_2}$
      by rule \ruleEsRight with $t\rescbnp{\ri{x}}{\varphi}{\mu}t$
      and $u\rescbnp{\rho_2}{\varphi}{\mu}u_2$.
    \end{itemize}
    The reduction $t\rescbnp{\rho_1}{\varphi\cup\set{x}}{\mu}t_1$ uses the
    frozen variables $\varphi\cup\set{x}$ rather than just $\varphi$.
    To apply the induction hypothesis, we first have to weaken
    $t\rescbnp{\ri{x}}{\varphi}{\mu}t$ into
    $t\rescbnp{\ri{x}}{\varphi\cup\set{x}}{\mu}t$
    (\autoref{lem:phi-weakening} below).
    The induction hypothesis then gives
    $t_1\rescbnp{\ri{x}}{\varphi\cup\set{x}}{\mu}t_1$, which then has to
    be strengthened into $t_1\rescbnp{\ri{x}}{\varphi}{\mu}t_1$
    (\autoref{lem:phi-id-strengthening} below)
    to obtain $t_1\esub{x}{u}\rescbnp{\rho_2}{\varphi}{\mu}t_1\esub{x}{u_2}$
    with rule \ruleEsRight.

    Finally, to close the case with rule \ruleEsLeftFrozen we also need
    to justify that $u_2\in\struct$, which we do using a stability property
    of structures (\autoref{lem:stab-struct} below).
  \item Case \ruleEsLeftFrozen vs \ruleLsv.
    These cases are not compatible. Indeed \ruleEsLeftFrozen applies to
    a term $t\esub{x}{u}$ where $u\in\struct$, whereas \ruleLsv and the
    associated auxiliary rules apply to a term $t\esub{x}{u}$ where $u$
    is an answer.
  \item Case \ruleEsLeft vs \ruleAuxLsv. In this case our starting term
    has necessarily the shape $t\esub{x}{v}$ with $v$ a value and we have
    \begin{itemize}
    \item $t\esub{x}{v}\rescbnp{\rho_1}{\varphi}{\mu}t_1\esub{x}{v}$
      by rule \ruleEsLeft with $t\rescbnp{\rho_1}{\varphi}{\mu}t_1$, and
    \item $t\esub{x}{v}\auxlsv{\varphi}{\mu}t_2\esub{x}{v}$
      by rule \ruleAuxLsv with $t\rescbn{\rs{x}{v}}{\varphi}{\mu}t_2$.
    \end{itemize}
    Due to freshness, the variable $x$ cannot appear in $\rho_1$ nor
    in $\varphi$. Then by induction hypothesis we obtain $t_3$ such that
    $t_1\rescbn{\rs{x}{v}}{\varphi}{\mu}t_3$ and
    $t_2\rescbn{\rho_1}{\varphi}{\mu}t_3$. By applying rule \ruleAuxLsv to
    the first reduction we obtain
    $t_1\esub{x}{v}\auxlsv{\varphi}{\mu}t_3\esub{x}{v}$,
    and by applying rule \ruleEsLeft to the second reduction we obtain
    $t_2\esub{x}{v}\rescbn{\rho_1}{\varphi}{\mu}t_3\esub{x}{v}$ and
    conclude.
  \item Case \ruleEsLeft vs \ruleAuxLsvSigma. In this case our starting term
    has necessarily the shape $t\esub{x}{u\esub{y}{w}}$ and we have
    \begin{itemize}
    \item $t\esub{x}{u\esub{y}{w}}\rescbnp{\rho_1}{\varphi}{\mu}t_1\esub{x}{u\esub{y}{w}}$
      by rule \ruleEsLeft with $t\rescbnp{\rho_1}{\varphi}{\mu}t_1$, and
    \item $t\esub{x}{u\esub{y}{w}}\auxlsv{\varphi}{\mu}t_2\esub{y}{w}$
      by rule \ruleAuxLsvSigma with $t\esub{x}{u}\auxlsv{\varphi}{\mu}t_2$.
    \end{itemize}
    From the premise $t\rescbnp{\rho_1}{\varphi}{\mu}t_1$, by rule
    \ruleEsLeft we deduce
    $t\esub{x}{u}\rescbnp{\rho_1}{\varphi}{\mu}t_1\esub{x}{u}$.
    Since the term $t\esub{x}{u}$ is smaller than $t\esub{x}{u\esub{y}{w}}$,
    we apply the induction hypothesis to obtain a term $t_3$ such that
    $t_1\esub{x}{u}\auxlsv{\varphi}{\mu}t_3$ and
    $t_2\rescbnp{\rho_1}{\varphi}{\mu}t_3$.

    From the first reduction, by rule \ruleAuxLsvSigma we deduce
    $t_1\esub{x}{u\esub{y}{w}}\auxlsv{\varphi}{\mu}t_3\esub{y}{w}$
    and from the second reduction, by rule \ruleEsLeft we deduce
    $t_2\esub{y}{w}\rescbnp{\rho_1}{\varphi}{\mu}t_3\esub{y}{w}$.
    These two final reductions close the diamond on the term $t_3\esub{y}{w}$. \qedhere
  \end{itemize}
\end{proof}

\noindent
The following facts were used in the proof of \autoref{thm:diamond},
but they have merits on their own, especially
\autoref{lem:stab-struct} of stability of structures by reduction.
\begin{lem}[Stability of structures]\label{lem:stab-struct}
  Suppose $t\in\struct$ and \textup{$t \rescbnp{\rho}{\varphi}{\mu} t'$}.
  Assume that, if $\rho$ is \textsf{sub}, it applies to a variable that
  is not in $\varphi$. Then $t'\in\struct$.
\end{lem}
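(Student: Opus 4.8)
The plan is to prove the statement by induction on the derivation of the reduction, after slightly \emph{generalizing} it: I will show that whenever $t\in\struct$, $\varphi\subseteq\psi$, and $t\rescbnp{\rho}{\psi}{\mu}t'$ with the side condition that, if $\rho$ is \textsf{sub}, it applies to a variable not in $\psi$, then $t'\in\struct$; the original statement is the case $\psi=\varphi$. This slack between the ``structure'' set $\varphi$ and the ``reduction'' set $\psi$ is what lets the induction pass through the explicit-substitution rules and the auxiliary \rlsv-rules, where a frozen variable is added on one side but not the other.

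For the base case $t=x$, inversion of $x\in\struct$ gives $x\in\varphi\subseteq\psi$, so the hypothesis forbids $\rho$ from being a \textsf{sub} applied to $x$, leaving only \ruleId, for which $t'=x$. A step by \ruleAbsTop or \ruleAbsBot is impossible since a structure is never an abstraction. For $t=t_1\app t_2$, inversion of the structure judgment gives $t_1\in\struct$; a step by \ruleAppRight leaves $t_1$ unchanged, a step by \ruleAppLeft rewrites $t_1$ and the induction hypothesis (same $\rho$, same $\psi$) yields $t_1'\in\struct$, and in both cases $t'\in\struct$ by the application rule of \autoref{fig:structures}. For $t=t_1\esub{x}{t_2}$, inversion gives either $t_1\in\struct[\varphi]$, or $t_1\in\struct[\varphi\cup\set{x}]$ together with $t_2\in\struct[\varphi]$; a step by \ruleEsRight, \ruleEsLeft or \ruleEsLeftFrozen rewrites one of the two components, and in each subcase I would apply the induction hypothesis to the rewritten component (sometimes the rebuild is immediate and the hypothesis is not even needed) and rebuild $t'\in\struct$ with the third or fourth structure rule. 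Two bookkeeping points: for \ruleEsLeft in the second inversion case one first weakens the reduction from $\psi$ to $\psi\cup\set{x}$, which is legitimate because adding a frozen variable never disables a reduction and the freshness convention keeps $x$ unbound inside $t_1$ (the routine $\varphi$-weakening property, used in the same spirit elsewhere in this section); and the side conditions propagate because the explicit-substitution rules of \autoref{fig:scbn-rules} already force the variable of $\rho$ to differ from the bound variable~$x$.

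The two genuinely delicate cases are \ruleDb and \ruleLsv, both handled via one auxiliary fact that I would isolate as a small lemma with a one-line proof by inversion: \emph{no structure is an answer $v\lctx$} (peeling the explicit substitutions off a structure with the last two rules of \autoref{fig:structures} eventually exposes a structure that is not an explicit substitution, hence a frozen variable or an application, never an abstraction). For \ruleDb on $t=t_1\app t_2\in\struct$, any $\auxdb$-derivation of $t_1\app t_2$ forces $t_1$ to have the shape $(\l z.s)\lctx$, contradicting that $t_1\in\struct$ is not an answer; the case is vacuous. For \ruleLsv on $t=t_1\esub{x}{t_2}\in\struct$: if the structure judgment came from the fourth rule then $t_2\in\struct$, which again is not an answer, so no $\auxlsv{\psi}{\mu}$-derivation exists and the case is vacuous; otherwise $t_1\in\struct[\varphi]$, and unfolding the $\auxlsv{\psi}{\mu}$-derivation writes $t_2=v\lctx$ and produces $t'=t_1'\esub{x}{v}\lctx$ with $t_1\rescbnp{\rs{x}{v}}{\psi''}{\mu}t_1'$, where $\psi''$ extends $\psi$ by the (fresh) variables bound inside $\lctx$ that lead structures. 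Since $x$ is fresh for $t_1$ we have $x\notin\psi''$, so the induction hypothesis applies to $t_1\in\struct[\varphi]$, $\varphi\subseteq\psi''$, and this reduction, giving $t_1'\in\struct[\varphi]$; then $t'=t_1'\esub{x}{v}\lctx\in\struct[\varphi]$ by repeated use of the third structure rule. The main obstacle is precisely this \ruleLsv case: tracking how the frozen-variable set grows as the auxiliary \rlsv-relation threads through nested substitutions via \ruleAuxLsvSigma and \ruleAuxLsvSigmaFrozen of \autoref{fig:aux-rules}, and checking that freshness keeps the introduced variables harmless — which is exactly what the $\varphi\subseteq\psi$ generalization and the ``no structure is an answer'' lemma are meant to absorb.
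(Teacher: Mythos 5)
Your proof is correct and matches the paper's approach: the paper's own argument is a one-line sketch (``induction on $t\in\struct$, with cases on the last inference rule of the derivation of the reduction''), and your induction on the reduction derivation with inversion of the structure judgment is the same simultaneous case analysis, just organized the other way around. The $\varphi\subseteq\psi$ generalization and the ``no structure is an answer'' observation are exactly the strengthenings needed to push the induction through the \ruleEsLeftFrozen and \ruleLsv cases, where the frozen set of the reduction outgrows that of the structure judgment; the paper leaves these details implicit.
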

\begin{proof}
  By induction on $t\in\struct$, with cases on the last inference rule
  of the derivation of $t \rescbnp{\rho}{\varphi}{\mu} t'$.
\end{proof}
The restriction when $\rho$ is of kind \textsf{sub} has one simple
(but crucial) goal: ruling out a direct substitution of the leading variable
of the structure $t$,
such as $x\app u\rescbnp{\rho}{\varphi}{\mu} (\l{y}.y)\app u$
with $\rho = \rs{x}{\l{y}.y}$ and $\varphi = \set{x}$.
Indeed, any structure must be led by a \emph{frozen} variable, that is
a variable of which we know that it cannot be substituted. It can be checked
that no valid \rlsv reduction can be related to this forbidden case
of \textsf{sub}.

\begin{samepage}
\begin{lem}[Weakening]\label{lem:phi-weakening}
  Suppose \textup{$t \rescbnp{\rho}{\varphi}{\mu} t'$} and
  $\varphi\subseteq\varphi'$.
  Assume that, if $\rho$ is \textsf{sub}, it applies to a variable that
  is not in $\varphi'$. Then \textup{$t \rescbnp{\rho}{\varphi'}{\mu} t'$}.
\end{lem}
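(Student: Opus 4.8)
The plan is to prove the claim by induction on the derivation of the reduction step. Because the main relation $\rescbnp{\rho}{\varphi}{\mu}$ and the auxiliary relation $\auxlsv{\varphi}{\mu}$ are defined by mutual induction (via rules \ruleLsv and \ruleAuxLsv), I would first strengthen the statement into the conjunction of two weakening claims — one for $\rescbnp{\rho}{\varphi}{\mu}$ with the stated side condition on \textsf{sub}, and one for $\auxlsv{\varphi}{\mu}$ with no side condition at all — and prove them together by induction on the height of the derivation. The third relation $\auxdb$ carries no frozen set, so rules \ruleAuxDb and \ruleAuxDbSigma, hence case \ruleDb, are immediate.

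Before the induction I would record two elementary monotonicity facts, each by a routine induction on the relevant derivation. First, \emph{structures are monotone in $\varphi$}: if $t\in\struct$ and $\varphi\subseteq\varphi'$ then $t\in\struct[\varphi']$, since the only $\varphi$-dependent rule is $x\in\varphi\Rightarrow x\in\struct$. Second, \emph{local normal forms are monotone in their $\varphi$ component}: if $t\in\lnf[\varphi,\omega,\mu]$ and $\varphi\subseteq\varphi'$ then $t\in\lnf[\varphi',\omega,\mu]$, since the $\varphi$-sensitive premises are only $x\in\varphi\cup\omega$ in rule \textsc{var} and the side conditions of the shape $\cdot\in\struct$, all monotone by the first fact, while the premises of the shape $\cdot\in\struct[\omega]$ are untouched. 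The first fact is what re-establishes the side conditions of \ruleAppRight, \ruleEsLeftFrozen, and \ruleAuxLsvSigmaFrozen after $\varphi$ is enlarged; the second is what re-establishes the condition $v\in\lnf[\varphi,\emptyset,\bot]$ of the \lscbnp-specific rule \ruleAuxLsv (\autoref{fig:aux-rules-plus}).

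The induction then splits on the last inference rule. The axioms \ruleId and \ruleSub hold verbatim with $\varphi'$ for $\varphi$, neither constraining the frozen set (the hypothesis $x\notin\varphi'$ in the \textsf{sub} case is not used at the axiom itself — it only has to survive the recursive cases). The context-closure rules \ruleAppLeft, \ruleAppRight, \ruleEsLeft, \ruleEsRight and the $\sigma$-rules \ruleAuxLsvSigma are handled by applying the appropriate induction hypothesis to each premise — the \textsf{sub} hypothesis transfers unchanged because those premises keep the same frozen set — and re-applying the rule, using the first monotonicity fact for the $\struct$ conditions. The delicate cases are the three rules that insert a bound variable into the frozen set: \ruleAbsTop and \ruleEsLeftFrozen recurse with $\varphi\cup\set{x}$, and \ruleAuxLsvSigmaFrozen recurses with $\varphi\cup\set{y}$; here I would apply the induction hypothesis with $\varphi\cup\set{x}\subseteq\varphi'\cup\set{x}$, discharging its side condition by noting that if $\rho=\rs{z}{v}$ then the rule's freshness convention forces $z\neq x$ and $z\notin\fv(v)$, so $z\notin\varphi'$ yields $z\notin\varphi'\cup\set{x}$. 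Re-applying these rules additionally needs $u\in\struct[\varphi']$ (resp.\ $w\in\struct[\varphi']$) from the first fact, and for \ruleAuxLsv it needs $v\in\lnf[\varphi',\emptyset,\bot]$ from the second. For the auxiliary claim, case \ruleAuxLsv itself calls the \emph{main} claim on the premise $t\rescbn{\rs{x}{v}}{\varphi}{\mu}t'$, whose \textsf{sub} variable $x$ is the one bound by the enclosing explicit substitution; by the freshness convention (up to $\alpha$-renaming) we may take $x\notin\varphi'$, so the main claim applies.

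The only real obstacle is the bookkeeping of the \textsf{sub} restriction across the rules that enlarge $\varphi$ by a bound variable: this is exactly why the hypothesis must read ``not in $\varphi'$'' rather than ``not in $\varphi$'', and its discharge relies on the implicit freshness of bound variables. Beyond that there is no difficulty, since no rule of \lscbnp places a negative constraint on $\varphi$, so enlarging the set can never disable a rule (the bound variables of $t$ can, as usual, be renamed away from $\varphi'$).
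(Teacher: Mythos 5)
Your proof is correct and follows essentially the same route as the paper, whose entire argument for this lemma is ``by induction on the derivation''; you have simply made explicit the mutual induction with \auxlsv{\varphi}{\mu}, the monotonicity of \struct and of local normal forms in $\varphi$, and the propagation of the \textsf{sub} side condition through the rules that enlarge the frozen set. No gaps.
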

\begin{proof}
  By induction on the derivation of $t \rescbnp{\rho}{\varphi'}{\mu} t'$.
\end{proof}
\end{samepage}
\begin{lem}[Strengthening]\label{lem:phi-id-strengthening}
  If \textup{$t \rescbnp{\ri{x}}{\varphi\cup\set{x}}{\mu} t'$},
  then \textup{$t \rescbnp{\ri{x}}{\varphi}{\mu} t'$}.
\end{lem}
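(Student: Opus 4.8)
The plan is to generalise the statement before inducting, because the naive induction breaks at explicit substitutions: replacing $\varphi\cup\set{x}$ by $\varphi$ removes $x$ from the frozen variables, hence removes some terms from $\struct[\varphi\cup\set{x}]$, and therefore removes some positions that rules \ruleAppRight and \ruleEsLeftFrozen of \autoref{fig:scbn-rules} would otherwise mark as reducible. (Recall that an $\ri{x}$-reduction never modifies its term, so here $t'=t$.) The rescuing observation is that if a term $s$ belongs to $\struct[\psi]$ but not to $\struct[\psi\setminus\set{z}]$, then the dropped frozen variable $z$ must itself occupy a reducible position of $s$, so a probe can still reach $z$ \emph{inside} $s$ rather than \emph{past} it.

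First I would prove this as a standalone lemma on structures: if $t\in\struct[\psi]$ and $t\notin\struct[\psi\setminus\set{z}]$, then $t\rescbnp{\ri{z}}{\psi'}{\mu}t$ for every flag $\mu$ and every set $\psi'$ with $z\notin\psi'$ (robustness to the surrounding set of frozen variables is what will later make the induction close). The proof is by induction on the derivation of $t\in\struct[\psi]$. When $t$ is a variable, a plain application, or an explicit substitution built by the third rule of \autoref{fig:structures}, the conclusion is immediate using \ruleId, \ruleAppLeft, or \ruleEsLeft respectively. The interesting case is $t=t_1\esub{y}{u_1}$ built by the last rule of \autoref{fig:structures}, with $t_1\in\struct[\psi\cup\set{y}]$ and $u_1\in\struct[\psi]$: one checks which premise ceases to hold once $z$ is removed. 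If $t_1\notin\struct[(\psi\cup\set{y})\setminus\set{z}]$, the induction hypothesis gives a probe for $z$ in $t_1$ and \ruleEsLeft concludes; otherwise $u_1$ is the failing premise, and then $t_1\notin\struct[\psi\setminus\set{z}]$ as well (else the third rule of \autoref{fig:structures} would rebuild $t$ inside $\struct[\psi\setminus\set{z}]$), so $y$ is now the needed variable of $t_1$: the induction hypothesis applied to $t_1$ with $y$ playing the role of $z$ yields a probe for $y$ in $t_1$, the induction hypothesis applied to $u_1$ yields a probe for $z$ in $u_1$, and \ruleEsRight concludes. As usual, $y$ is $\alpha$-renamed to stay out of $\psi'$ and distinct from $z$.

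Second, I would prove the generalisation, for all variables $p$, $q$ and every $\varphi$ with $q\notin\varphi$: if $t\rescbnp{\ri{p}}{\varphi\cup\set{q}}{\mu}t$, then $t\rescbnp{\ri{p}}{\varphi}{\mu}t$ or $t\rescbnp{\ri{q}}{\varphi}{\mu}t$. The lemma is the instance $p=q=x$ (the case $x\in\varphi$ being trivial since then $\varphi\cup\set{x}=\varphi$). The proof is by induction on $t$ (equivalently, on the derivation, which follows the structure of $t$). The congruence cases \ruleAppLeft, \ruleAbsTop, \ruleAbsBot, \ruleEsLeft merely transport the disjunction from the relevant subterm to $t$, the freshly bound variable of \ruleAbsTop or \ruleEsLeft being added symmetrically on both sides. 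In case \ruleAppRight, with $t=t_1\app u_1$ and $t_1\in\struct[\varphi\cup\set{q}]$, either $t_1\in\struct[\varphi]$ still holds and we recurse into $u_1$ before re-applying \ruleAppRight, or it does not, and then the structure lemma supplies a probe for $q$ in $t_1$, so \ruleAppLeft delivers a probe for $q$ in $t$. Case \ruleEsRight is dealt with by combining the induction hypotheses on $t_1$ (probing its substitution variable) and on $u_1$ (probing $p$) and dispatching the four resulting combinations through \ruleEsRight or \ruleEsLeft.

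The main obstacle, and the heart of the proof, is case \ruleEsLeftFrozen: here $t=t_1\esub{y}{u_1}$ with $u_1\in\struct[\varphi\cup\set{q}]$ and the probe inside $t_1$ using the enlarged set $\varphi\cup\set{q}\cup\set{y}$. If $u_1$ is still a structure under $\varphi$, one applies the induction hypothesis to $t_1$ and re-applies \ruleEsLeftFrozen. If not, the structure lemma yields a probe for $q$ in $u_1$, and to use it one must reach $q$ through \ruleEsRight, which in turn requires a probe for $y$ in $t_1$; this is produced by applying the generalised induction hypothesis to $t_1$ at most twice — first to drop $q$, then to drop $y$ — the relevant branch being exactly $t_1\rescbnp{\ri{y}}{\varphi}{\mu}t_1$, while the remaining branches are absorbed by \ruleEsLeft. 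Since $t_1$ is a strict subterm of $t$, these repeated appeals are legitimate. The only remaining subtlety is to keep track of the top-level flag $\mu$: it is preserved by every rule used, except \ruleAppLeft and the structure lemma, which relax it but are always invoked with a $\bot$-mode premise, so no separate mode-strengthening statement is needed.
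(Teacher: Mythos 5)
Your proof is correct, and its core ingredient is exactly the paper's: the auxiliary property that if $t\in\struct[\varphi\cup\set{x}]$ but $t\notin\struct$, then $t \rescbnp{\ri{x}}{\varphi}{\mu} t$, proved by induction on the structure derivation and used to rescue the cases \ruleAppRight and \ruleEsLeftFrozen. Where you genuinely go beyond the paper's one-sentence argument is in strengthening the induction hypothesis to the two-variable disjunctive form (``removing a frozen variable $q$ from a probe for $p$ yields a probe for $p$ or a probe for $q$''). This strengthening is not a luxury: in the \ruleEsLeftFrozen case where the substitution body $u_1$ drops out of $\struct$, one must convert the probe for $x$ in $t_1$ under $\varphi\cup\set{x,y}$ into a probe for the \emph{bound} variable $y$ under $\varphi$ before \ruleEsRight can fire (consider $(y\app x)\esub{y}{x}$ with $\varphi=\emptyset$: the only derivation of the probe for $x$ with $\varphi=\emptyset$ goes through \ruleEsRight via a probe for $y$), and no single-variable statement in which only the probed variable is removed can produce it. The paper's stated induction, taken literally, does not close this case, so your generalisation buys a proof that actually goes through at the cost of a slightly heavier statement and a fourfold case dispatch in \ruleEsRight. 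Two small points of hygiene: both inductions must be taken on the \emph{term} (or on a size measure) rather than on the given derivation, since you re-apply the statement to $t_1$ with a modified frozen set (you say this explicitly for the main induction, but the same remark applies to subcase (ii) of your structure lemma, where the hypothesis $t_1\in\struct[(\psi\setminus\set{z})\cup\set{y}]$ is not a subderivation of $t\in\struct[\psi]$); and among your listed congruence cases only \ruleAbsTop, not \ruleEsLeft, extends the frozen set, a harmless slip.
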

\begin{proof}
  By induction on the derivation of $t \rescbnp{\rho}{\varphi'}{\mu} t'$.
  There are two subtle cases with the rules \ruleAppRight and
  \ruleEsLeftFrozen, which both rely on the following property:
  If $t\in\struct[\varphi\cup\set{x}]$ and
  $t\not\in\struct[\varphi]$, then $t \rescbnp{\ri{x}}{\varphi}{\mu} t'$.
  This property is
  proved by induction on $t\in\struct[\varphi\cup\set{x}]$.
\end{proof}

\subsection{Relative optimality}

The \lscbnp-calculus is a restriction of \lscbn that requires terms to be
eagerly reduced to local normal form before they can be substituted
(\autoref{fig:aux-rules-plus}). This eager reduction is never wasted. Indeed,
\lscbn (and \emph{a fortiori} its subset \lscbnp) only reduces needed redexes, that
is, redexes that are necessarily reduced in any reduction to normal form.
As a consequence, reductions in \lscbnp are never longer than equivalent
reductions in \lscbn. On the contrary, by forcing some reductions to be
performed before a term is substituted (\emph{i.e.}, potentially duplicated), this
strategy produces in many cases reduction sequences that are strictly shorter
than the ones given by the original strong call-by-need
strategy~\cite{SCBN-ICFP}.

\begin{thm}[Minimality]\label{thm:minimal-length}
With $t'\in \nf$, if
\textup{$t \literescbn{}^n t'$} and \textup{$t \literescbnp{}^m t'$} then $m\leqslant n$.
\end{thm}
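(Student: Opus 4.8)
The plan is to combine the diamond property of \lscbnp (\autoref{thm:diamond}) with a step‑by‑step comparison of the two reductions. First I would record the standard consequence of the diamond property: \lscbnp is confluent, so a term has at most one \lscbnp-normal form, and moreover any two \lscbnp-reductions of a term to a normal form have equal length. The latter follows by induction on the length $p$ of one of the reductions: one splits off the first step on each side; if these steps coincide the induction hypothesis applies, and if they differ one closes the local peak by a single step on each side (\autoref{thm:diamond}; at top level only $\rho\in\set{\rdb,\rlsv}$ occur, so its side conditions hold here), and then, using that the unique reachable normal form is unchanged, the induction hypothesis forces both of the two new tails to have length $p-1$. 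Since $\nf$ is also the set of \lscbnp-normal forms (as noted after \autoref{lem:local-normal-form}), I can then attach to every term $u$ that is strongly normalizing in \lscbnp a natural number $d(u)$: the common length of all \lscbnp-reductions from $u$ to its normal form.

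Next, $t$ and every term along the reduction $t\literescbn{}^n t'$ is strongly normalizing in \lscbnp: from $t\literescbn{}^n t'$ with $t'\in\nf$ and Lemmas~\ref{lem:simulation} and~\ref{lem:normal-forms-lambda}, the pure term $t\unfold$ is weakly normalizing, so \autoref{thm:completeness} applies (and likewise to all \lscbn-reducts of $t$). Thus $d$ is defined on all these terms, and it is enough to prove that every \lscbn-step $u\red u'$ satisfies $d(u')\geqslant d(u)-1$: summing along $t=u_0\red\cdots\red u_n=t'$ then gives $0=d(t')\geqslant d(t)-n=m-n$. If $u\red u'$ is already a \lscbnp-step, prepending it to a minimal \lscbnp-reduction of $u'$ produces one of $u$ of length $d(u')+1$, so $d(u)=d(u')+1$ by the uniform-length property. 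Since the only discrepancy between \lscbn and \lscbnp is the local-normal-form side condition \lscbnp imposes on rule \ruleAuxLsv, the remaining case is that $u\red u'$ is an \rlsv-step substituting one occurrence of a variable $x$ bound by a substitution $\esub{x}{v\lctx}$ in which the value $v$ is \emph{not} a local normal form (at the ambient frozen set, in mode $\bot$).

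This last case is the crux. Since $v$ is a value but not a local normal form it is reducible in \lscbnp at its position, and since $u$ is strongly normalizing in \lscbnp one may reduce the substitution $\esub{x}{v\lctx}$ in place — rule \ruleEsRight applies because the occurrence of $x$ it probes is exactly the one used by the \rlsv-step — until the value part becomes a local normal form $\hat v$, after some $k\geqslant 1$ \lscbnp-steps; performing those steps and then the now-legal \rlsv-step reaches a term $r$. From $u'$ one reaches the same $r$ by reducing, to their normal forms, both the copy of $v$ produced by the \rlsv-step (sitting at an evaluation, hence \lscbnp-reducible, position) and the copy still inside $\esub{x}{v}$; this reduction normalizes $v$ twice instead of once, so it is $k-1$ steps longer than the one from $u$, whence $d(u')=d(u)+k-1\geqslant d(u)$. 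When the \rlsv-step instead removed the last accessible occurrence of $x$, the substitution $\esub{x}{v}$ is thereafter inert in \lscbnp and does not add to the step count, and then directly $d(u')=d(u)-1$. In every case $d(u')\geqslant d(u)-1$, which closes the argument.

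The main obstacle is this crux, and within it the bookkeeping of the two copies of the substituted value and the treatment of the possibly-inert substitution; it also rests on an auxiliary characterization, in the spirit of \autoref{lem:normal-forms-lscbn} and \autoref{lem:local-normal-form}, of local normal forms as exactly the \lscbnp-irreducible terms at the corresponding frozen set and mode (to pass from ``$v$ is not a local normal form'' to ``$v$ is reducible''), together with some care about the list $\lctx$ of substitutions carried by the value, where stability of structures under reduction (\autoref{lem:stab-struct}) is used. Everything else — in particular the cases of \lscbn-steps that are already \lscbnp-steps — is routine once the uniform-length property is available.
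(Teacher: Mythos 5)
The paper never actually proves \autoref{thm:minimal-length}: it is stated bare, supported only by the informal remark preceding it that the eager reductions imposed by \lscbnp are reductions of needed redexes and hence ``never wasted''. Your route --- deriving from \autoref{thm:diamond} the uniform-length (random descent) property of \lscbnp, defining $d(u)$ as the common length of all \lscbnp-reductions of $u$ to normal form, and proving $d(u')\geqslant d(u)-1$ for every \lscbn-step $u\red u'$ --- is therefore a genuinely different and more explicit strategy than the paper's neededness hand-wave, and its outer structure (typability of every \lscbn-reduct via Lemmas~\ref{lem:simulation} and~\ref{lem:normal-forms-lambda} and \autoref{thm:completeness}, then summation along the \lscbn-sequence) is sound. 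The correct observation that at top level only $\rho\in\set{\rdb,\rlsv}$ occurs, so the side conditions of \autoref{thm:diamond} are vacuous, is also worth keeping.

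There are, however, two genuine gaps in the crux case, both in the treatment of the two copies of the substituted value $v$. First, the two copies do not sit at the same position: the one remaining in $\esub{x}{v}$ is reduced at the frozen set accumulated through $\lctx$ in mode $\bot$, whereas the copy substituted into the evaluation position of $x$ may sit under additional binders and in mode $\top$. To claim that both copies reach the \emph{same} $\hat v$ in the \emph{same} number $k$ of steps, you must replay the identical $k$-step sequence at the new position, which needs \autoref{lem:phi-weakening} together with a $\bot$-to-$\top$ mode-weakening that the paper only mentions in passing (in the Abella section) and never states as a lemma; and you must refrain from reducing the substituted copy further even though its new position may enable more reductions --- acceptable since you only need a common reduct, but it should be said. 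Second, and more seriously, the ``inert substitution'' subcase is wrong as stated: an occurrence of $x$ that is not currently at an evaluation position can \emph{become} one after later reductions (\emph{e.g.}, a \rdb-step turns an application argument, where $\ri{x}$ cannot enter, into a substitution whose bound variable is then probed by \ruleEsRight), so $\esub{x}{v}$ is not permanently inert and $d(u')=d(u)-1$ does not follow. What your argument actually needs in both subcases is the inequality $d(r')\geqslant d(r)$ for two terms differing only in that one substitution carries $v$ rather than its \lscbnp-reduct $\hat v$; this ``a more reduced substitution content never increases the distance to normal form'' statement is the real heart of the proof, it requires its own simulation-style induction over all future uses of $x$, and it is not supplied. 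Until that lemma is proved, the per-step bound $d(u')\geqslant d(u)-1$ is not established.
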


Remark that this minimality result is relative to \lscbn. The reduction
sequences of \lscbnp are not necessarily optimal with respect to the
unconstrained \lc or \l-calculi. For instance, neither \lscbnp nor
\lscbn allow reducing $r$ in the term
$(\l{x}.x\app(x\app a))\app(\l{y}.y\app r)$
prior to its duplication.

\section{Formalization in Abella}
\label{sec:abella}

We used the Coq proof assistant for our first attempts to formalize our
results. We experimented both with the locally nameless
approach~\cite{Chargueraud12} and parametric higher-order abstract
syntax~\cite{Chlipala08}. While we might eventually have succeeded using
the locally nameless approach, having to manually handle binders felt way
too cumbersome. So, we turned to a dedicated formal system, Abella~\cite{Abella14}, in
the hope that it would make syntactic proofs more straightforward. This
section describes our experience with this tool.

\subsection{Nominal variables and \texorpdfstring{$\l$-tree}{lambda-tree} syntax}

Our initial motivation for using Abella was the availability of nominal
variables through the \texttt{nabla} quantifier. Indeed, in order to open
a bound term, one has to replace the bound variable with a fresh global
variable. This freshness is critical to avoid captures; but handling it
properly causes a lot of bureaucracy in the proofs. By using nominal
variables, which are guaranteed to be fresh by the logic, this issue
disappears.

Here is an excerpt of our original definition of the \texttt{nf}
predicate, which states that a term is a normal form of \lscbn and \lscbnp,
as given in \autoref{fig:normalforms}.
The second line states that any nominal variable is in normal form, while
the third line states that an abstraction is in normal form, as long as
the abstracted term is in normal form for any nominal variable.

\begin{lstlisting}[language=Abella]
Define nf : trm -> prop by
  nabla x, nf x;
  nf (abs U) := nabla x, nf (U x);
  ...
\end{lstlisting}

Note that Abella is based on a $\lambda$-tree approach (higher-order
abstract syntax). In the above excerpt, \texttt{U} has a bound variable and
\texttt{(U x)} substitutes it with the fresh
variable~\texttt{x}. More generally, \texttt{(U V)} is the term in which
the bound variable is substituted with the term~\texttt{V}.

This approach to fresh variables was error-prone at first. Several of our
formalized theorems ended up being pointless, despite seemingly matching
the statements of our pen-and-paper formalization. Consider the following
example. This proposition states that, if \texttt{T} is a structure with
respect to \texttt{x}, and if \texttt{U} is related to \texttt{T} by
the unfolding relation \texttt{star}, then \texttt{U} is also a structure with respect
to~\texttt{x}.

\begin{lstlisting}[language=Abella]
forall T U, nabla x,
struct T x -> star T U -> struct U x.
\end{lstlisting}

Notice that the nominal variable \texttt{x} is quantified after
\texttt{T}. As a consequence, its freshness ensures that
it does not occur in \texttt{T}. Thus, the
proposition is vacuously true, since \texttt{T} cannot be
a structure with respect to a variable that does not occur in it. Had the
quantifiers been exchanged, the statement would have been fine.
Unfortunately, the design of Abella makes it much easier to use theorem statements
in which universal quantifiers happen before nominal ones, thus exacerbating the issue.
The correct way to state the above proposition is by carefully lifting
any term in which a given free variable could occur:

\begin{lstlisting}[language=Abella]
forall T U, nabla x,
struct (T x) x -> star (T x) (U x) -> struct (U x) x.
\end{lstlisting}

Once one has overcome these hurdles, advantages become apparent. For
example, to state that some free variable does not occur in
a term, not lifting this term is sufficient. And if it needed to be
lifted for some other reason, one can always equate it to a constant
$\lambda$-tree. For instance, one of our theorems needs to state that the
free variable~\texttt{x} occurring in~\texttt{T} cannot occur
in~\texttt{U}, by virtue of \texttt{star}. This is expressed by the following statement:

\begin{lstlisting}[language=Abella]
star (T x) (U x) -> exists V, U = (y\ V).
\end{lstlisting}

The $\lambda$-tree \texttt{y\textbackslash V} can be understood as the
anonymous function $y \mapsto V$. Thus, the equality above forces \texttt{U}
to be a constant $\lambda$-tree, since $y$ does not occur in $V$, by
lexical scoping.

\subsection{Functions and relations}

Our Abella formalization assumes a type \texttt{trm} and three predefined
ways to build elements of that type: application, abstraction, and
explicit substitution.

\begin{lstlisting}[language=Abella]
type app trm -> trm -> trm.
type abs (trm -> trm) -> trm.
type es  (trm -> trm) -> trm -> trm.
\end{lstlisting}

For example, a term $t \esub{x}{u}$ of our calculus will be denoted
\texttt{(es (x\textbackslash t) u)} with \texttt{t} containing some
occurrences of~\texttt{x}. Again, \texttt{x\textbackslash t} is
Abella's notation for a $\lambda$-tree \texttt{t} in which \texttt{x}
has to be substituted (by application). It can thus be understood as
an anonymous function $x \mapsto t$ in the meta-system.

Since Abella does not provide proper functions, we instead use a relation
\texttt{star} to define the unfolding function from \lc to \l. We define
it in the specification logic using \l-Prolog rules (\texttt{pi} is the
universal quantifier in the specification logic).
\begin{lstlisting}[language=Abella]
star (app U V) (app X Y) :- star U X, star V Y.
star (abs U) (abs X) :- pi x\ star x x => star (U x) (X x).
star (es U V) (X Y) :-
  star V Y, pi x\ star x x => star (U x) (X x).
\end{lstlisting}
Of particular interest is the way binders are handled; they are characterized by
stating that they are their own image: \texttt{star x x}.

Since \texttt{star} is just a relation, we have to prove that it is defined over
all the closed terms of our calculus, that it maps only to pure
\l-terms, and that it maps to exactly one \l-term. Needless
to say, all of that would be simpler if Abella had native support for
functions.

\subsection{Judgments, contexts, and derivations}

Abella provides two levels of logic: a minimal logic used for
specifications and an intuitionistic logic used for inductive reasoning
over relations. At first, we only used the reasoning logic. By doing so,
we were using Abella as if we were using Coq, except for the additional
\texttt{nabla} quantifiers. We knew of the benefits of the specification
logic when dealing with judgments and contexts; but in the case
of the untyped $\lambda$-calculus, we could not see any use for those.

Our point of view started to shift once we had to manipulate sets of free
variables, in order to distinguish which of them were frozen. We could
have easily formalized such sets by hand; but since Abella is especially
designed to handle sets of binders, we gave it a try. Let us consider the
above predicate \texttt{nf} anew, except that it is now defined using
$\lambda$-Prolog rules.

\begin{lstlisting}[language=Abella]
nf X :- frozen X.
nf (abs U) :- pi x\ frozen x => nf (U x).
nf (app U V) :- nf U, nf V, struct U.
nf (es U V) :- pi x\ frozen x => nf (U x), nf V, struct V.
nf (es U V) :- pi x\ nf (U x).
\end{lstlisting}

Specification-level propositions have the form \texttt{\{L |- P\}}, with
\texttt{P} a proposition defined in $\lambda$-Prolog and \texttt{L}
a list of propositions representing the context of \texttt{P}. Consider
the proposition \texttt{\{L |- nf (abs~T)\}}. For simplicity, let us
assume that the last three rules of \texttt{nf} do not exist (that is,
only \texttt{X} and \texttt{abs U} are covered). In that case,
there is only three ways of deriving the proposition.
Indeed, it can be derived from \texttt{\{L |- frozen (abs~T)\}} (first
rule). It can also be derived from \texttt{nabla x, \{L, frozen~x |- nf
  (T~x)\}} (second rule). Finally, the third way to derive it is if
\texttt{nf (abs~T)} is already a member of the context \texttt{L}.

The second and third derivations illustrate how Abella automates the
handling of contexts. But where Abella shines is that some theorems come
for free when manipulating specification-level properties, especially
when it comes to substitution. Suppose that one wants to prove
\texttt{\{L |- P (T~U)\}}, \emph{i.e.}, some term \texttt{T} whose bound
variable was replaced with~\texttt{U} satisfies predicate~\texttt{P} in
context \texttt{L}. The simplest way is if one can prove \texttt{nabla x,
\{L |- P (T~x)\}}. In that case, one can instantiate the nominal
variable \texttt{x} with \texttt{U} and conclude.

But more often that not, \texttt{x} occurs in the context, \emph{e.g.},
\texttt{\{L, Q~x |- P (T~x)\}} instead of \texttt{\{L |- P (T~x)\}}.
Then, proving \texttt{\{L |- P (T~U)\}} is just a matter of proving
\texttt{\{L |- Q~x\}}. But, what if the latter does not hold? Suppose
one can only prove \texttt{\{L |- R~x\}}, with \texttt{R~V :- Q~V}.
In that case, one can reason on the derivation of \texttt{\{L, Q~x |-
P (T~x)\}} and prove that \texttt{\{L, R~x |- P (T~x)\}} necessarily
holds, by definition of \texttt{R}. This ability to inductively reason
on derivations is a major strength of Abella.

Having to manipulate contexts led us to revisit most of our pen-and-paper
concepts. For example, a structure was no longer defined as a relation
with respect to its leading variable (\emph{e.g.}, \texttt{struct T x})
but with respect to all the frozen variables (\emph{e.g.},
\texttt{\{frozen~x |- struct~T\}}).
In turn, this led us to handle live variables purely through their
addition to contexts: $\varphi\cup\set{x}$. Our freshness convention is
a direct consequence, as in \autoref{fig:structures} for example.

Performing specification-level proofs does not come without its own set
of issues, though. As explained earlier, a proposition \texttt{\{L |- nf
(abs~T)\}} is derivable from the consequent being part of the context
\texttt{L}, which is fruitless. The way around it is to define
a predicate describing contexts that are well-formed, \emph{e.g.}, \texttt{L}
contains only propositions of the form \texttt{(nf~x)} with \texttt{x}
nominal. As a consequence, the case above can be eliminated because
\texttt{(abs~T)} is not a nominal variable. Unfortunately, defining these
predicates and proving the associated helper lemmas is tedious and
extremely repetitive. Thus, the user is encouraged to reuse existing
context predicates rather than creating dedicated new ones, hence leading
to sloppy and convoluted proofs. Having Abella provide some automation for
handling well-formed contexts would be a welcome improvement.

\subsection{Formal definitions}

Let us now describe the main definitions of our Abella formalization. In
addition to \texttt{nf} and \texttt{star} which have already been
presented, we have a relation \texttt{step} which formalizes the
reduction rules of \lscbn and \lscbnp presented in
\autoref{fig:scbn-rules}.

\begin{lstlisting}[language=Abella, basewidth=0.59em]
step R top (abs T) (abs T') :-
  pi x\ frozen x => step R top (T x) (T' x).
step R B (abs T) (abs T') :-
  pi x\ omega x => step R bot (T x) (T' x).
step R B (app T U) (app T' U) :- step R bot T T'.
step R B (app T U) (app T U') :- struct T, step R top U U'.
step R B (es T U) (es T' U) :-
  pi x\ omega x => step R B (T x) (T' x).
step R B (es T U) (es T' U) :-
  pi x\ frozen x => step R B (T x) (T' x), struct U.
step R B (es T U) (es T U') :-
  pi x\ active x => step (idx x) B (T x) (T x), step R bot U U'.

step (idx X) B X X :- active X.
step (sub X V) B X V :- active X.
step db B T T' :- aux_db T T'.
step lsv B T T' :- aux_lsv B T T'.
\end{lstlisting}

A small difference with respect to Section~\ref{sec:strong-cbn} is the predicate
\texttt{active}, which characterizes the variable being considered by
$\ri{x}$ (\texttt{idx}) and $\rs{x}{v}$ (\texttt{sub}). This predicate is
just a cheap way of remembering that the active variable is fresh yet not
frozen. Similarly, the predicate \texttt{omega} is used in two rules to tag
a variable as being neither frozen nor active.
Another difference is rule \ruleAbsBot. While the antecedent of the rule
is at position $\bot$, the consequent is in any position
rather than just $\bot$. Since any term reducible in position $\bot$ is
provably reducible in position $\top$, this is just a conservative
generalization of the rule.

The auxiliary rules for \lscbnp, as given in Figures~\ref{fig:aux-rules}
and~\ref{fig:aux-rules-plus} for rule \ruleAuxLsv, are as follows.

\begin{lstlisting}[language=Abella]
aux_db (app (abs T) U) (es T U).
aux_db (app (es T W) U) (es T' W) :-
  pi x\ aux_db (app (T x) U) (T' x).

aux_lsv B (es T (abs V)) (es T' (abs V)) :-
  pi x\ active x => step (sub x (abs V)) B (T x) (T' x),
  lnf bot (abs V).
aux_lsv B (es T (es U W)) (es T' W) :-
  pi x\ omega x => aux_lsv B (es T (U x)) (T' x).
aux_lsv B (es T (es U W)) (es T' W) :-
  pi x\ frozen x => aux_lsv B (es T (U x)) (T' x), struct W.
\end{lstlisting}

\noindent
Finally, an actual reduction is just comprised of rules \ruleDb and
\ruleLsv in a $\top$ position:

\begin{lstlisting}[language=Abella]
red T T' :- step db top T T'.
red T T' :- step lsv top T T'.
\end{lstlisting}

\noindent
These various rules make use of structures (\texttt{struct}), as given in
\autoref{fig:structures}.

\begin{lstlisting}[language=Abella]
struct X :- frozen X.
struct (app U V) :- struct U.
struct (es U V) :- pi x\ struct (U x).
struct (es U V) :- pi x\ frozen x => struct (U x), struct V.
\end{lstlisting}

The local norm forms of \autoref{fig:local-nf} are as follows. As for
the \texttt{step} relation, one of the rules for abstraction was
generalized with respect to Section~\ref{sec:optimal-strategies}. This time, it is for the
$\top$ position, since any term that is locally normal in a $\top$
position is locally normal in any position.

\begin{lstlisting}[language=Abella]
lnf B X :- frozen X.
lnf B X :- omega X.
lnf B (app T U) :- lnf B T, struct T, lnf top U.
lnf B (app T U) :- lnf B T, struct_omega T.
lnf B (abs T) :- pi x\ frozen x => lnf top (T x).
lnf bot (abs T) :- pi x\ omega x => lnf bot (T x).
lnf B (es T U) :- pi x\ lnf B (T x).
lnf B (es T U) :-
  pi x\ frozen x => lnf B (T x), lnf bot U, struct U.
lnf B (es T U) :- pi x\ omega x => lnf B (T x), struct_omega U.
\end{lstlisting}

Structures with respect to the set $\omega$ use a dedicated predicate
\texttt{struct\_omega}, which is just a duplicate of \texttt{struct}.
Another approach, perhaps more elegant, would have been to parameterize
\texttt{struct} with either \texttt{frozen} or \texttt{omega}.

\begin{lstlisting}[language=Abella]
struct_omega X :- omega X.
struct_omega (app U V) :- struct_omega U.
struct_omega (es U V) :- pi x\ struct_omega (U x).
struct_omega (es U V) :-
  pi x\ omega x => struct_omega (U x), struct_omega V.
\end{lstlisting}

Notice that
our Abella definition of reduction is more permissive than the restricted
calculus \lscbnp, as it allows the set $\omega$ to be non-empty when the
check \texttt{lnf bot (abs V)} is performed in the definition of
\texttt{aux\_lsv}, whereas the corresponding rule \ruleAuxLsv in the
calculus requires an empty set.

The main reason for this discrepancy is that all the variables of a
term need to appear one way or another in the context. So, we have
repurposed \texttt{omega} for variables that are neither
\texttt{active} nor \texttt{frozen}. This has avoided the introduction
of yet another predicate in the formalization. But this comes at the
expense of the set $\omega$ being potentially non-empty when
evaluating whether a term is a local normal form.

As a consequence, more terms will pass
the test and be substituted. However, this version is still included in
the \lscbn-calculus, and thus fully supports the claims of
Section~\ref{sec:soundness}.
If one were to consider a formalization of completeness instead, this
discrepancy would get in the way.

\paragraph{Extra definitions}

Having a characterization of \lscbn-terms is sometimes useful, as it
allows induction on terms rather than induction on one of the previous
predicates.

\begin{lstlisting}[language=Abella]
trm (app U V) :- trm U, trm V.
trm (abs U) :- pi x\ trm x => trm (U x).
trm (es U V) :- pi x\ trm x => trm (U x), trm V.
\end{lstlisting}

\noindent
Similarly, we might need to characterize pure \l-terms.

\begin{lstlisting}[language=Abella]
pure (app U V) :- pure U, pure V.
pure (abs U) :- pi x\ pure x => pure (U x).
\end{lstlisting}

Finally, let us remind the definitions of a $\beta$-reduction, of
a sequence \texttt{betas} of zero or more $\beta$-reductions, and of the
normal forms \texttt{nfb} of the \l-calculus, as they will be needed to
state the main theorems.

\begin{lstlisting}[language=Abella]
beta (app M N) (app M' N) :- beta M M'.
beta (app M N) (app M N') :- beta N N'.
beta (abs R) (abs R') :- pi x\ beta (R x) (R' x).
beta (app (abs R) M) (R M).

betas M M.
betas M N :- beta M P, betas P N.

nfb X :- frozen X.
nfb (abs T) :- pi x\ frozen x => nfb (T x).
nfb (app T U) :- nfb T, nfb U, notabs T.
notabs T :- frozen T.
notabs (app T U).
\end{lstlisting}

\subsection{Formally verified properties}

To conclude this section on our Abella formalization, let us state
the theorems that were fully proved using Abella.
First comes the simulation property (\autoref{lem:simulation}), which
states that, if $T\re_\rscbnp U$, then $T\unfold \re^*_\beta
U\unfold$.

\begin{lstlisting}[language=Abella]
Theorem simulation' : forall T U T* U*,
  {star T T*} -> {star U U*} -> {red T U} -> {betas T* U*}.
\end{lstlisting}

\noindent
Then comes the fact that (local) normal forms are exactly the
terms that are not reducible in \lscbnp (\autoref{lem:normal-forms-lscbn}).

\begin{lstlisting}[language=Abella]
Theorem lnf_nand_red : forall T U,
  {lnf top T} -> {red T U} -> false.
Theorem nf_nand_red : forall T U,
  {nf T} -> {red T U} -> false.

Theorem lnf_or_red : forall T,
  {trm T} -> {lnf top T} \/ exists U, {red T U}.
Theorem nf_or_red : forall T,
  {trm T} -> {nf T} \/ exists U, {red T U}.
\end{lstlisting}

\noindent
Finally, if $T$ is a normal form of \lscbn, then $T\unfold$ is a normal
form of the \l-calculus (\autoref{lem:normal-forms-lambda}).

\begin{lstlisting}[language=Abella]
Theorem nf_star' : forall T T*,
  {nf T} -> {star T T*} -> {nfb T*}.
\end{lstlisting}

\section{Abstract machine}
\label{sec:abstract}

We have also implemented an abstract machine for $\lscbnp$, so as to
get more insights about its reduction rules. Indeed, while
Figures~\ref{fig:scbn-rules} and~\ref{fig:aux-rules} already give a
presentation of the calculus that is close to be implementable, a
number of questions remain. In particular, several rules have side
conditions. So, even if the calculus is relatively optimal in the
number of reductions, it does not tell much about its practical
efficiency. For example, rules \ruleAppRight and \ruleEsLeftFrozen
require subterms to be structures, while rule \ruleEsRight involves
the reduction $\ri{x}$, which does not progress by definition. As for
rule \ruleAuxLsv of \autoref{fig:aux-rules-plus}, it requires a
subterm to be both a value and in local normal form.
Thus, an actual implementation might be nowhere near the ideal time
complexity of $O(n + m)$, with $n$ the size of the input term and $m$
the number of steps to reach its normal form.

Another practical issue that is not apparent in
Figures~\ref{fig:scbn-rules} and~\ref{fig:aux-rules} is the handling of
explicit substitutions. Indeed, the ability to freely reduce under
abstractions makes them especially brittle, as the normalization
status of their right-hand side keep switching back and forth as
variables become frozen. This phenomenon does not occur when
performing a weak reduction, even when iterating it to get the strong
normal form.

Finally, actually implementing the calculus might make apparent some
reductions in Figures~\ref{fig:scbn-rules} and~\ref{fig:aux-rules} that
needlessly break sharing between subterms.

\paragraph{Configurations of the machine}

Whether a big-step or small-step semantics is chosen to present the
abstract machine has no impact on its effectiveness. So, for
readability, we describe its big-step semantics
(\autoref{fig:abstract-machine-rules}), assuming the term being
reduced has a normal form.  Since the reduction rules do not perform
any fancy trick, it could be turned into a small-step semantics using
an explicit continuation, in a traditional fashion.

Configurations $(\Sigma,\Pi,\mu,t)$ are formed from an environment
$\Sigma$ of variables, a stack $\Pi$ of terms, a mode $\mu$, and the
currently visited term $t$. Assuming the machine terminates, it
returns a pair $(\Sigma',t')$, comprised of a new environment
$\Sigma'$ as well as a term $t'$ equivalent to~$t$, hopefully
reduced. Notice that all the rules of \autoref{fig:abstract-machine-rules}
propagate the environment $\Sigma$
along the computations and never duplicate it in any way, so it should
be understood as a global mutable store.

The argument stack $\Pi$ is used for applications; it is initially empty. When the
abstract machine needs to reduce an application $t\app u$, it pushes $u$
on the stack $\Pi$, and then focuses on reducing~$t$ (rule \ruleApp). This
stack appears on the left of the rules, but never on the right, as it is
always fully consumed by the machine. Contrarily to the variable environment
$\Sigma$ which is global, new stacks might be created on the fly to
reduce some subterms. This is not a strong requirement, though, as an
implementation could also use a global stack, by keeping track of the
current number of arguments with a counter or a stack mark.

The environment $\Sigma$ associates some data to every free variable
$x$ of the currently visited term $t$. If $x$ was originally bound by
an explicit substitution, \emph{e.g.}, $t\esub{x}{u}$, then the
environment has been extended with a new binding, which is denoted by
$\Sigma;x \mapsto u$ in rule \ruleEs.  The term associated to variable
$x$ in the environment $\Sigma$ is denoted $\Sigma(x)$, as in rule
\ruleVarAbs. This association can evolve along the computations, as
the associated term $u$ might be replaced by some reduced term
$u'$. In that case, the resulting environment is denoted by $\Sigma[x
  \mapsto u']$ as illustrated by rule \ruleVarES. This association is
eventually removed from the environment, as shown by rule \ruleEs.

If $x$, however, was originally the variable bound by an abstraction,
\emph{e.g.}, $\l{x}.t$, then the environment associates to it a
special symbol to denote whether $x$ is frozen ($\l_\top$) or not
($\l_\bot$), as illustrated by rule \ruleAbsNil. Initially, the
environment associates $\l_\top$ to every free variable of the term, if any.

Finally, the mode $\mu$ of a configuration is reminiscent of the mode
$\top$ and $\bot$ of the calculus. There is a slight
difference, though. The currently visited term is in a top-level-like
position only if $\mu$ is $\top$ and the stack $\Pi$ is empty.
Initially, the mode $\mu$ is set to $\top$.

\begin{figure}[t]
\begin{mathpar}
% ----------- ABS -----------
  \inferrule[\ruleAbsCons]{(\Sigma,\Pi,\mu,t\esub{x}{u}) \ra (\Sigma',t')}
                {(\Sigma,u\cdot\Pi,\mu,\l x.t) \ra (\Sigma',t')}
\and
  \inferrule[\ruleAbsNil]{(\Sigma;x\mapsto \l_\mu,\mathit{Nil},\mu,t) \ra (\Sigma';x\mapsto \l_\mu,t')}
                        {(\Sigma,\mathit{Nil},\mu,\l x.t) \ra (\Sigma',\l x.t')}
\\
% ----------- APP -----------
  \inferrule[\ruleApp]{(\Sigma,u\cdot\Pi,\mu,t) \ra (\Sigma',t')}
                 {(\Sigma,\Pi,\mu,t\app u) \ra (\Sigma',t')}
                 \and
% ----------- ES -----------
   \inferrule[\ruleEs]{(\Sigma;x\mapsto u,\Pi,\mu,t) \ra (\Sigma';x\mapsto u',t')}
                 {(\Sigma,\Pi,\mu,t\esub{x}{u}) \ra (\Sigma',t'\esub{x}{u'})}
\\
% ----------- VAR -----------
  \inferrule[\ruleVarAbsTop]{\Sigma(x) = \l_\top \and (\Sigma,\Pi,\mu,x) \ralf (\Sigma',t')}
                   {(\Sigma,\Pi,\mu,x) \ra (\Sigma',t')}
\and
  \inferrule[\ruleVarAbsBot]{\Sigma(x) = \l_\bot \and (\Sigma,\Pi,\mu,x) \ral (\Sigma',t')}
                  {(\Sigma,\Pi,\mu,x) \ra (\Sigma',t')}
\\
  \inferrule[\ruleVarAbs]{\Sigma(x) = u \and (\Sigma,\mathit{Nil},\bot,u) \ra (\Sigma',u') \and
                    u' = (\l{y}.t)\lctx \and
                   (\Sigma'[x\mapsto u'],\Pi,\mu,u') \ra (\Sigma'',u'')}
                 {(\Sigma,\Pi,\mu,x) \ra (\Sigma'',u'')}
%  \inferrule[\ruleVarAbs]{\Sigma(x) = u \and (\Sigma,\mathit{Nil},\bot,u) \ra (\Sigma',u') \and
%                    u' = (\l{y}.t)\lctx \and
%                   (\Sigma';\lctx;x\mapsto \l{y}.t,\Pi,\mu,\l{y}.t) \ra (\Sigma'',u'')}
%                 {(\Sigma,\Pi,\mu,x) \ra (\Sigma'',u'')}
\\
  \inferrule[\ruleVarES]{\Sigma(x) = u \hspace*{19.5pt} (\Sigma,\mathit{Nil},\bot,u) \ra (\Sigma',u') \hspace*{19.5pt}
                    \Sigma' \vdash u'\in \struct[\alpha] \hspace*{19.5pt}
                    (\Sigma'[x\mapsto u'],\Pi,\mu,x ) \ra_{\struct[\alpha]} (\Sigma'',u'')}
                  {(\Sigma,\Pi,\mu,x) \ra (\Sigma'',u'')}
\\
% ----------- AUX -----------
   \inferrule[\ruleStructPhi]{(\Sigma_0,\mathit{Nil},\top,t_1) \ra (\Sigma_1,t'_1) \and...\and
                  (\Sigma_{n-1},\mathit{Nil},\top,t_n) \ra (\Sigma_n,t'_n)}
                 {(\Sigma_0,t_1\cdot...\cdot t_n,\mu,x) \ralf (\Sigma_n,x\app t'_1\app...\app t'_n)}
\\
   \inferrule[\ruleStructOmega]{~}
                 {(\Sigma,t_1\cdot...\cdot t_n,\mu,x) \ral
                  (\Sigma,x\app t_1\app...\app t_n)}
\end{mathpar}
\caption{Big-step rules for the abstract machine.}
\label{fig:abstract-machine-rules}
\end{figure}

\begin{figure}[t]
\begin{mathpar}
  \inferrule{~}
            {\Sigma;x\mapsto \l_\top \vdash x\in \struct[\varphi]}
            \and
  \inferrule{~}
            {\Sigma;x\mapsto \l_\bot\vdash x\in \struct[\omega]}
\\
  \inferrule{\Sigma\vdash t\in \struct[\alpha]}
            {\Sigma;x\mapsto t\vdash x\in \struct[\alpha]}
            \and
  \inferrule{\Sigma\vdash t_1\in \struct[\alpha]}
            {\Sigma\vdash t_1\app t_2\in \struct[\alpha]}
            \and
  \inferrule{\Sigma;x\mapsto u\vdash t\in \struct[\alpha]}
            {\Sigma\vdash t\esub{x}{u}\in \struct[\alpha]}
\end{mathpar}
\caption{Structures, as viewed by the abstract machine.}
\label{fig:abstract-machine-struct}
\end{figure}

\paragraph{Reduction rules}

The machine implements a leftmost outermost strategy. Indeed, the
argument $u$ of an application $t\app u$ is moved to the stack while
the machine proceeds to the head $t$ (rule \ruleApp). Similarly, when
an explicit substitution $t\esub{x}{u}$ is encountered, the binding $x
\mapsto u$ is added to the environment $\Sigma$ while the focus moves
to $t$ (rule \ruleEs). Finally, when encountering an abstraction
$\l{x}.t$ while the argument stack is non-empty, the top $u$ of the stack is
popped to create an explicit substitution $t\esub{x}{u}$ (rule
\ruleAbsCons). Note that restricting the machine to a leftmost
outermost evaluation strategy is not an issue, thanks to the diamond
property.

While rules \ruleApp and \ruleAbsCons could be found in any other
abstract machine, rule \ruleEs is a bit peculiar. Indeed, once $t$
has been reduced to $t'$, the abstract machine recreates
an explicit substitution $t'\esub{x}{u'}$ with $u'$ the term associated
to $x$ (which might be $u$ or some reduced form of it). In other words,
the binding is removed from the environment and it might be added back to
it again later. This seemingly wasteful operation would not be needed in
a weak call-by-need strategy, but here it is critical. Indeed, in strong
call-by-need, the term $u'$ might refer to some abstraction variable $y$.
If this abstraction is later applied multiple times, $u'$ would further be
reduced and it would mix up the multiple values associated to $y$. Here
is an example:
\[(\l{f}. c_1\app (f\app c_2) (f\app c_3))\app (\l{y}. (c_4\app x)\esub{x}{y})\]

When reducing an abstraction $\l{x}.t$ while the stack $\Pi$ is empty,
the binding $x \mapsto \l_\mu$ is added to the environment and the
machine proceeds to reducing $t$ (rule \ruleAbsNil). This is the only
rule where the mode $\mu$ really matters, as it is used to guess
whether a variable is frozen. Once~$t$ has been reduced to $t'$, the
final value is obtained by reconstructing the
abstraction~$\l{x}.t'$. In an abstract machine using weak reduction, a
similar rule would exist, except that it would always use $x \mapsto
\l_\top$, as $x$ would always behave like a free variable.

The more complicated rules are triggered when reducing a variable $x$.
The behavior then
depends on the corresponding binding in the environment. If $x$ is
a frozen abstraction variable, then the machine creates
a structure with $x$ as the head variable (rule \ruleVarAbsTop). To do so, it
reduces all the terms $t_1,\ldots,t_n$ of the argument stack and
applies $x$ to them (auxiliary rule \ruleStructPhi).
If $x$ is an abstraction variable but not
(yet) frozen, the terms of the argument stack are still given as
arguments to $x$, but without being first reduced (rule \ruleVarAbsBot
and auxiliary rule \ruleStructOmega).

The last case corresponds to a variable $x$ bound to a term $u$. The
first step is to reduce~$u$ to $u'$ and to update the binding of $x$
(rules \ruleVarAbs and \ruleVarES).
%The notation $\Sigma';\lctx;x\mapsto \l{y}.t$ means that the explicit
%substitutions of $\lctx$ were added as bindings to the environment in
%reverse order before updating the binding of $x$.
Depending on the form of $u'$,
the evaluation will then proceed in different ways. If $u'$ is an
abstraction (possibly hidden behind layers of explicit substitutions $\lctx$),
then the evaluation proceeds with it, so as to consume the argument stack if
not empty (rule \ruleVarAbs). If $u'$ is a structure (as characterized by the
rules of \autoref{fig:abstract-machine-struct}), then two subcases are
considered, depending on whether $u'$ is a structure with respect to the
set $\varphi$ or to the set $\omega$.
In the former case, all the terms of the argument stack are reduced before being
aggregated to~$x$ (rule \ruleVarES and auxiliary rule \ruleStructPhi).
In the latter case, the terms of the argument stack are aggregated to~$x$ without
being first reduced (rule \ruleVarES and auxiliary rule \ruleStructOmega).

\paragraph{Validation}

The rules of our calculus can be applied to arbitrary subterms of the
$\l$-term being reduced. But when it comes to the abstract machine, the
reduction rules are only applied to the currently visited subterm. Thus,
even if the calculus rules are correctly implemented in the machine, if
some subterms were never visited or at the wrong time, the final term
might not be reduced as much as possible. This is illustrated by the
auxiliary rules \ruleStructPhi and \ruleStructOmega. The latter rule
can get away with not reducing the application arguments, as this term
will necessarily be visited again at a later time, once the status of
$x$ becomes clear. But for rule \ruleStructPhi, this might be the last
time the term is ever visited, so it has to be put into normal form.

The correctness of the abstract
machine would thus warrant another Abella formalization. Short of such
a formal verification, another way to increase the confidence in this
abstract machine is to validate it. We did so by applying it to all the
$\lambda$-terms of depth 5. The normal forms we obtained were compared to
the ones from a machine implementing a straightforward call-by-name
strategy. If none of the machines finished after 1500 reduction steps, we
assumed that the original term had no normal form and that both machines
agreed on it. No discrepancies were found for these millions of
$\l$-terms, hence providing us with a high confidence in the correctness
of our abstract machine.

\paragraph{Implementation details}

In our implementation, variables are represented by their De Bruijn
indices. Since there is a strict push/pop discipline for the
environment, as illustrated by rules \ruleAbsNil and \ruleEs, $\Sigma$
has been implemented using a random-access list, and the De Bruijn
indices point into it. The target cells contain either $\l_\top$,
$\l_\bot$, or the term associated to the given variable. This term is
stored in a mutable cell, so that any reduction to it are retained for
later use.

Using De Bruijn indices makes it trivial to access the environment,
but it also means that they have to be kept consistent with the
current environment. Indeed, the environment might have grown since
the time an application argument was put into the stack by rule
\ruleApp, and similarly with the bindings of the environment
itself. Thus, the stack and the environment actually contain
\emph{closures}, \emph{i.e.}, pairs of a term $u$ and the size of the
environment at the time $u$ was put into the stack or
environment. Then, a rule such as \ruleAbsCons can adjust the De
Bruijn indices of $u$ to the current environment. In a traditional
way, our implementation performs this adjustment lazily, that is, a
mark is put on $u$ and the actual adjustment is only performed once
$u$ is effectively traversed.
So, when going from $\l{x}.t$ to $t\esub{x}{u}$ in rule \ruleAbsCons,
the subterm $t$ is left completely unchanged, while the subterm $u$
is just marked as needing an adjustment of its De Bruijn indices.

Since rules \ruleVarAbs and \ruleVarES systematically force a
reduction of the term $u$ bound to~$x$ whenever $x$ is encountered,
there is a glaring inefficiency. Indeed, any evolution of the
environment between two occurrences of $x$ only concerns variables
that are meaningless to $x$. So, while the first encounter of $x$ puts
$u$ into local normal form, any subsequent encounter will traverse $u$
without modifying it any further. As a consequence, the implementation
differs from \autoref{fig:abstract-machine-rules} in that the
environment also stores the status of $u$. Initially, $u$ has status
\emph{non-evaluated}. After its first reduction, the status of $u$
changes to one of the following three \emph{evaluated} statuses: an
abstraction $(\l{y}.t)\lctx$, a structure of $\struct[\omega]$, a
structure of $\struct[\varphi]$. That way, the implementation of
\ruleVarAbs and \ruleVarES can directly skip to executing the
rightmost antecedent, depending on the status. Moreover, the status
stored in the environment is not obtained by a traversal of $u$; this
information is directly obtained from the previous evaluation. For
example, rule \ruleAbsNil necessarily returns an abstraction, while
rule \ruleEs returns whatever it obtained from the execution of its
antecedent. Thus, our implementation does not needlessly traverse
terms.

\paragraph{Observations}

First of all, the abstract machine is unfortunately not a faithful
counterpart to our calculus. The discrepancy lies in rule
\ruleVarAbs. Indeed, contrarily to rule \ruleAuxLsv of
\autoref{fig:aux-rules-plus}, the substituted term is not just a
value, it is potentially covered by explicit substitutions: $u' =
(\l{y}.t)\lctx$. As a consequence, when it comes to the explicit
substitutions of~$\lctx$, sharing between both occurrences of $u'$ in
the configuration $(\Sigma';x\mapsto u',\Pi,\mu,u')$ is lost. It is
not clear yet what the cost of recovering this sharing is.

Rules \ruleVarAbs and \ruleVarES of the abstract machine also put into
light some inefficiencies of the corresponding rules of our calculus.
Assume that the machine encounters~$x$ in mode $\top$, that the stack
$\Pi$ is empty, and that $x$ is bound to $u$ in the environment
$\Sigma$. The machine first evaluates $u$ with mode $\bot$ and updates
$\Sigma$ with the result $u'$. It then proceeds to evaluating~$u'$ in
place of $x$ with mode $\top$. But at that point, the binding of $x$
in the environment is no longer updated; it still points to $u'$, not
to $u''$. Thus, the machine will have to reduce $u'$ into~$u''$ for
other occurrences of $x$ at top-level-like positions. It would have
been more efficient to originally evaluate $u$ in mode $\top$,
contrarily to what rule \ruleEsRight of \autoref{fig:scbn-rules}
mandates.

Once this inefficiency is solved, another one becomes apparent, again
with rule \ruleVarAbs. The abstract machine evaluates $u'$ (which is
then an abstraction) in place of $x$. But there is not much point in
doing so, if the argument stack is empty. It could have
instead returned~$x$, since the binding $x \mapsto u'$ is part of the
environment, thus preserving some more sharing.

Fixing these two inefficiencies in the abstract machine means that it would implement
the reduction rules of a different calculus. This new calculus would not
be much different from the one presented in this paper, but it would no
longer have the diamond property.

\section{Conclusion}

This paper presents a $\l$-calculus dedicated to strong reduction. In the
spirit of a call-by-need strategy with explicit substitutions, it builds
on a linear substitution calculus~\cite{LSC-Standardization}.
Our calculus, however, embeds
a syntactic criterion that ensures that only needed redexes are
considered. Moreover, by delaying substitutions until they are in
so-called local normal forms rather than just values, all the reduction
sequences are of minimal length.

Properly characterizing these local normal forms proved difficult and
lots of iterations were needed until we reached the presented definition.
Our original approach relied on evaluation contexts, as in the original
presentation of a strong call-by-need strategy~\cite{SCBN-ICFP}.
While tractable, this made the proof of the diamond property long
and tedious. It is the use of Abella that led us to reconsider this
approach. Indeed, the kind of reasoning Abella favors forced us to give
up on evaluation contexts and look for reduction rules that were much
more local in nature. In turn, these changes made the relation with
typing more apparent. In hindsight, this would have avoided a large
syntactic proof in~\cite{SCBN-ICFP}. And more generally, this new
presentation of call-by-need reduction could be useful even in the
traditional weak setting.

Due to decidability, our syntactic criterion can characterize only part
of the needed redexes at a given time. All the needed reductions will
eventually happen, but detecting the neededness of a redex too late might
prevent the optimal reduction. It is an open question whether some other
simple criterion would characterize more needed redexes,
and thus potentially allow for even shorter sequences than our calculus.

Even with the current criterion, there is still work to be done. First
and foremost, the Abella formalization should be completed to at least
include the diamond property. There are also some potential improvements
to consider. For example, as made apparent by the rules of the abstract machine,
our calculus could avoid substituting
variables that are not applied (rule \ruleAuxLsv),
following~\cite{Yoshida-WeakOptimality, Accattoli-SCBV}, but it opens the
question of how to characterize the normal forms then.
Another venue for investigation is how this work interacts
with fully lazy sharing, which avoids more duplications but whose properties
are tightly related to weak reduction~\cite{FullLaziness}.

\bibliographystyle{alphaurl}
\bibliography{biblio}

\end{document}